\newcommand{\comment}[1]{}
\def\ps@pprintTitle{%
\let\@oddhead\@empty
\let\@evenhead\@empty
\let\@oddfoot\@empty
\let\@evenfoot\@oddfoot}
\def\big{\bigskip}
\newtheorem{theorem}{Theorem}
\newtheorem{definition}{Definition}
\newtheorem{thm}{Theorem}[section]
\newtheorem{corollary}[thm]{Corollary}
\newtheorem{lemma}{Lemma}
\newtheorem{prop}[thm]{Proposition}
\newtheorem{example}[thm]{Example}
\theoremstyle{remark}
\providecommand*{\propertyautorefname}{Property}
\let\oldmarginpar\marginpar
\renewcommand\marginpar[1]{\oldmarginpar[\raggedleft\footnotesize #1]%
{\raggedright\footnotesize #1}}
\begin{document}
\begin{frontmatter}


\title{Spectral and Dynamic Consequences of Network Specialization}

\author[leo]{Leonid Bunimovich}
\address[leo]{School of Mathematics, Georgia Institute of Technology, 686 Cherry Street, Atlanta, GA 30332, USA
bunimovich@math.gatech.edu}
\author[dj]{DJ Passey}
\address[dj]{Department of Mathematics, Brigham Young University, Provo, UT 84602, USA, djpasseyjr@gmail.com}
\author[dallas]{Dallas Smith}
\address[dallas]{Department of Mathematics, Brigham Young University, Provo, UT 84602, USA, dallas.smith@mathematics.byu.edu}
\author[ben]{Benjamin Webb}
\address[ben]{Department of Mathematics, Brigham Young University, Provo, UT 84602, USA, bwebb@mathematics.byu.edu}

\begin{abstract}
One of the hallmarks of real networks is their ability to perform increasingly complex tasks as their topology evolves. To explain this, it has been observed that as a network grows certain subsets of the network begin to specialize the function(s) they perform. A recent model of network growth based on this notion of specialization has been able to reproduce some of the most well-known topological features found in real-world networks including right-skewed degree distributions, the small world property, modular as well as hierarchical topology, etc. Here we describe how specialization under this model also effects the spectral properties of a network. This allows us to give conditions under which a network is able to maintain its dynamics as its topology evolves. Specifically, we show that if a network is intrinsically stable, which is a stronger version of the standard notion of global stability, then the network maintains this type of dynamics as the network evolves. This is one of the first steps toward unifying the rigorous study of the two types of dynamics exhibited by networks. These are the \emph{dynamics of} a network, which is the study of the topological evolution of the network's structure, modeled here by the process of network specialization, and the \emph{dynamics on} a network, which is the changing state of the network elements, where the type of dynamics we consider is global stability. The main examples we apply our results to are recurrent neural networks, which are the basis of certain types of machine learning algorithms.
\end{abstract}

\begin{keyword}
networks, specialization, spectral properties, intrinsic stability, recurrent neural networks
\end{keyword}

\end{frontmatter}

\section{Introduction}
Networks studied in the biological, social and technological sciences perform various tasks. How well these tasks are carried out depends both on the network's \emph{topology}, i.e. the network's structure of interactions, as well as the dynamics of the network elements. For instance, in the biological setting neuronal networks are responsible for complicated processes related to cognition and memory, which are based on the network's structure of connections as well as the electrical dynamics of the network neurons \cite{BS09}. The function and performance of social networks such as Facebook and Twitter depend on the social interactions between individuals and on the local/global structure of established relationships. The performance of technological networks such as the internet is based both on the topology of the network's links, e.g. connections between routers, and the router's ability to manage requests.

A major goal of network science is to understand how a network's \emph{topology} and the dynamics of the network's elements effects the network's ability to carry out its function. This goal is complicated by the fact that not only are the network's elements \emph{dynamic}, i.e. have a state that changes over time, but the network's topology also evolves over time. That is, most real-world networks are dynamic in two distinct ways.

The changing state of the network elements is referred to as the \emph{dynamics on} the network. For instance, the dynamics on  the World Wide Web consists of internet traffic, or how many individuals navigate between different web page over time. In a neural network the electrical behavior of the individual neurons is the dynamics on the network. In a distribution network of warehouses and delivery routes, the way in which inventory levels at each warehouse change over time describes the \emph{dynamics on} the network.

In contrast, the structural evolution of a network's topology is referred as the \emph{dynamics of} the network. In a transportation network the topology of the network evolves as new roads are built to increase the number of routes between destinations. In the World Wide Web new links between web pages are added for similar reasons as are new warehouses and delivery routes in distribution networks. In other networks such as social networks the network topology evolves as individuals form new relations, etc.

In most real networks there is an interplay between the \emph{dynamics on} the network and the \emph{dynamics of} the network. For instance, in transportation, technological, and information networks new routes are added in response to areas of high traffic, e.g. the addition of new road in transportation systems, new routers and connections to the internet, new web pages to the World Wide Web, etc. These new routes in turn create new traffic patterns that lead to different areas of high traffic and the creation of new routes.

Determining how these two types of dynamics influence each other is an especially difficult problem. In fact, most studies consider either \emph{dynamics on} the network or \emph{dynamics of} the network but not the interplay of the two \cite{Newman2006}. There are some exceptions. For instance, the area of adaptive networks uses the dynamics on the network to change or ``adapt" the network's topology \cite{GS09}. However, to the best of the authors knowledge, results in this area are strictly numerical and little if anything is rigourously known. A major goal of this paper is to take steps towards unifying the rigorous analysis of the \emph{dynamics on} and \emph{of} networks.

A complication to achieving this objective is that there are many network growth models describing the dynamics of a network and many models describing the dynamics of a network. The most well-known of the network growth models is the Barabasi and Albert model \cite{Bara99} and its predecessor the Price model \cite{Price76}, in which elements are added one by one to a network and are preferentially attached to vertices with high degree or some variant of this process
\cite{Bara00,Doro00,Krap01}. Other standard models include \emph{vertex copying models} in which a new element is added to a network by choosing another network element uniformly at random and connecting the new element to this element's neighbors \cite{Klein99,Sole02,Vaz03}. The third type of network growth models are \emph{network optimization models} where the topology of the network is evolved to minimize some global constraint, e.g. operating costs vs. travel times in a transportation network \cite{Ferrer03,Gastner06}.

These models are devised to create networks that exhibit some of the most widely observed features found in real networks. This includes right-skewed degree distributions, high \emph{clustering coefficients}, the \emph{small-world property}, etc. (see \cite{Newman10} for more details on these properties). Although these model are successful in a number of ways, currently, little is known regarding how the \emph{dynamics on} a network is effected by the growth described in these models. Similarly, relatively little is known regarding how these models effect the spectral properties of networks. This is of interest as a network's eigenvalues and eigenvectors are used in defining network centralities, community detection algorithms, determining network dynamics, etc. \cite{Newman10}.

One of the hallmarks of real networks is that as their topology evolves so does their ability to perform increasingly complex versions of their function. This happens, for instance, in neural networks, which become more modular in structure as individual parts of the brain become increasingly specialized in function \cite{Sporns13}. Similarly, gene regulatory networks can specialize the activity of existing genes to create new patterns of gene behavior \cite{Esp10}. In technological networks such as the internet, this differentiation of function is also observed and is driven by the need to handle and more efficiently process an increasing amount of information.

Previously, in \cite{BSW18} the authors introduced the \emph{specialization model} of network growth built on this notion of specialization. This model is based on the idea that as a network specializes the function of one or more of its \emph{components}, i.e. some subnetwork that performs a specific function, a number of copies of this component are created that are linked to the network in a way similar to the original component. These new copies ``specialize" the function of the original component in that they carry out only those functions requiring these links (cf. Figure \ref{Fig:1} and \ref{Fig:2}). Repeated application of this specialization process results in networks with a highly modular \cite{Milo02,Newman2006.1}, hierarchical \cite{Clauset08,Leskovec2008}, as well as sparse topology \cite{N03,HG08} each of which is a defining feature of real networks, which are not captured by previous models (see \cite{BSW18} for more details).

In contrast to studying the topological features produced by this model, in this paper we consider how specialization effects the spectral properties of a network. This is done by using, and in particular extending, the theory of isospectral network transformations \cite{BW12,BWBook}, which describes how certain changes in a network's structure effect the network's \emph{spectrum}, i.e. the eigenvalues associated with the network's weighted or unweighted adjacency matrix. We show that if a network is specialized then the spectrum of the resulting network is that of the original network together with the eigenvalues of the specialized components (see Section \ref{sec3}, Theorem \ref{thm1}). Additionally, using the theory of isospectral transformations we show how the eigenvectors of a network are effected by specialization and specifically its effect on eigenvector centrality (see Section, \ref{sec3} Theorem \ref{prop:0})

As a network's dynamics can be related to the network's spectrum we can in certain cases determine how specialization of a network will effect the \emph{dynamics on} the network. The type of dynamics we consider here is global stability, which is an important property in a number of systems including neural networks \cite{Cao2003,Cheng2006,SChena2009,MCohen1983,LTao2011}, network epidemic models \cite{Wang2008}, and in the study of congestion on computer networks \cite{Alpcan2005}. The main example(s) we consider here and apply our results to are recurrent neural networks, which model the electrical activity in the brain and which form the basis of certain algorithms in machine learning \cite{Sch2015}.

In a \emph{stable} dynamical network the network always evolves to the same unique state irrespective of the network's initial state. We show that if a dynamical network is \emph{intrinsically stable}, which is a stronger form of a standard notion of stability (see Definition \ref{def:intrinsic} or \cite{BW13} for more details), then any specialized version of this network will also be intrinsically stable (see Section \ref{sec4}, Theorem \ref{thm:evostability}). Hence, network growth via specialization will not destabilize the network's dynamics if the network has this stronger version of stability. This is potentially useful in real-world applications since network growth can have a destabilizing effect on a network. For instance, a well-known example of this phenomena is cancer, which is the abnormal growth of cells that can impair the function and ultimately lead to the failure of the system.

Although networks exhibit many other types of dynamics including multistability, periodicity, and synchronization the reason we study stability is because of its relative simplicity and use it as a first step towards rigourously understanding the interplay of network dynamics and network growth. Importantly, our results suggest that if a network's growth is due to specialization and the network has a ``strong" form of dynamics, e.g. intrinsic stability, the network can maintain its dynamics as its structure evolves. It is worth noting that the ability to maintain dynamics during periods of growth is an important feature found in real networks, e.g. the cellular network of a beating heart maintain's its periodic beating as an adolescent heart grows into an adult heart.

This paper is organized as follows. In Section \ref{sec2} we describe the specialization model of network growth. In Section \ref{sec3} we give our main results regarding the spectral properties of specialized networks, which describe the effect of a specialization on the eigenvalues and eigenvectors associated with a network or graph. In Section \ref{sec4} we adapt the method of graph specializations to evolve the structure of a general class of dynamical systems used to model network dynamics, which include recurrent neural networks. We show that if such dynamical systems, which we refer to as dynamical networks, are intrinsically stable then any specialized version of the network is also intrinsically stable. Section \ref{sec5} discusses a number of variants of the specialization method introduced in \cite{BSW18} and described in Section \ref{sec2} as well as their spectral and dynamic consequences to a network. Proofs of the paper's main results together with the necessary parts of the theory of isospectral network transformations are placed in Section \ref{appendix}. The last section, Section \ref{conc} contains some concluding remarks.

\section{The Specialization Process}\label{sec2}

The \emph{topology} of a network, which is the network's structure of interactions, is most often represented by a graph. A \emph{graph} $G=(V,E,\omega)$ is composed of a \emph{vertex set} $V$, an \emph{edge set} $E$, and a function $\omega$ used to weight the edges of the graph. The vertex set $V$ represents the \emph{elements} of the network, while the edges $E$ represent the links or \emph{interactions} between these network elements. The weights of the edges, given by $\omega$, typically give some measure of the \emph{strength} of these interactions. Here we consider weights that are real numbers, which account for the vast majority of weights used in network analysis \cite{Newman10}.

For the graph $G=(V,E,\omega)$ we let $V=\{v_1,\dots,v_n\}$, where $v_i$ represents the $i$th network element. An edge between vertices $v_i$ and $v_j$ can be either directed or undirected depending on the particular network. In the undirected case in which an interaction is reciprocal, e.g. a \emph{friendship} in a social network, one can consider an undirected edge to be two directed edges: one edge pointing from the first to the second element, the other pointing from the second to the first element. We can, therefore, consider any graph to be a directed graph. For a graph $G=(V,E,\omega)$ we let $e_{ij}$ denote the directed edge that begins at $v_j$ and ends at $v_i$. In terms of the network, the edge $e_{ij}$ belongs to the edge set $E$ if the $j$th network element has some direct influence or is linked to the $i$th network element in some way.

The edges $E$ of a graph can also either be \emph{weighted} or \emph{unweighted}. If the edges of a graph $G$ are unweighted we write $G=(V,E)$. However, any unweighted graph can be considered weighted by giving each edge unit weight. The class of graphs we consider, without any loss in generality, in this paper are weighted directed graphs. To motivate the specialization model introduced in \cite{BSW18} of a graph $G$, representing some network, we consider the following example of disambiguation in a Wikipedia network, a subset of the World Wide Web.

\begin{figure}
  \begin{overpic}[scale=.2]{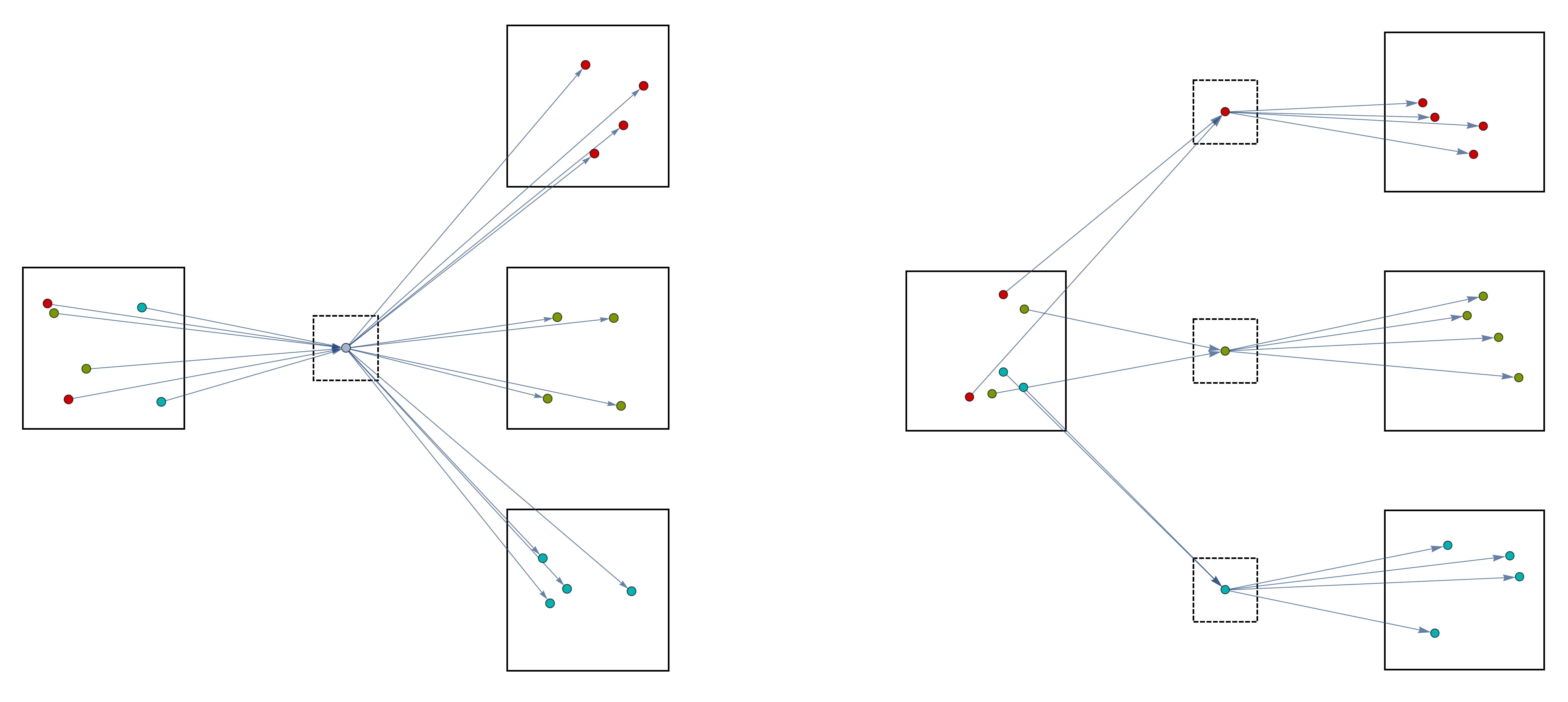}

    \put(10,-1.5){Undifferentiated Mercury Page}

    \put(18.5,28){\small{\emph{Mercury}}}
    \put(20,26){\small{\emph{Page}}}
    \put(43,38){\small{\emph{Mythology}}}
    \put(43,36){\small{\emph{Pages}}}
    \put(43,23){\small{\emph{Planet}}}
    \put(43,21){\small{\emph{Pages}}}
    \put(43,7){\small{\emph{Element}}}
    \put(43,5){\small{\emph{Pages}}}

    \put(65,-1.5){Disambiguated Mercury Pages}

    \put(74.5,42.5){\small{\emph{Mercury}}}
    \put(73,40.5){\small{\emph{(Mythology)}}}
    \put(74.5,28){\small{\emph{Mercury}}}
    \put(73.75,26){\small{\emph{(Element)}}}
    \put(74.5,12.5){\small{\emph{Mercury}}}
    \put(74.75,10.5){\small{\emph{(Planet)}}}

    \put(99,38){\small{\emph{Mythology}}}
    \put(99,36){\small{\emph{Pages}}}
    \put(99,23){\small{\emph{Planet}}}
    \put(99,21){\small{\emph{Pages}}}
    \put(99,7){\small{\emph{Element}}}
    \put(99,5){\small{\emph{Pages}}}
    \end{overpic}
    \vspace{0.25cm}
  \caption{Disambiguation of the Wikipedia page on ``Mercury" into three distinct webpages, which are respectively Mercury the mythological figure, Mercury the planet, and mercury the element. The colors red, green, and blue represent web pages that are respectively mythology, planets, and elements pages.}\label{Fig:1}
\end{figure}

\begin{example}
\textbf{(Wikipedia Disambiguation)} The website Wikipedia is a collection of webpages consisting of articles that are linked by topic. The website evolves as new articles are either added, linked, and/or modified within the existing website. One of the ways articles are modified is via the process of disambiguation. That is, if an article's content is deemed to refer to a number of distinct topics then the article can be \emph{disambiguated} by separating the article into a number of new articles, each on a more specific or specialized topic than the original.

Wikipedia's own page on disambiguation gives the example that the word ``Mercury" can refer to either Mercury the mythological figure, Mercury the planet, or mercury the element. To emphasize these differences, the Wikipedia page on Mercury has since been disambiguated into three pages; one for each of these subcategories. Users arriving at the Wikipedia ``Mercury" page from some related web page are redirected to these pages. The result of this disambiguation is shown in Figure \ref{Fig:1}. In the original undifferentiated Mercury page users arriving from other pages could presumably find links to other mythology, planet and element pages (see Figure \ref{Fig:1}, left). After the page was disambiguated users were linked to only those relevant to the particular ``Mercury", e.g. mythology, planet, or element (see Figure \ref{Fig:1}, right). In terms of the topology of the network, this disambiguation results in the creation of a number of new ``Mercury" pages each of which is linked to a subset of pages that were linked to the original Mercury page.
\end{example}

In this example, growth via disambiguation is a result of the number of new copies of the original webpage. However, what is important to the functionality of the new specialized network is that the way in which these new copies are linked to the unaltered pages reflects the topology of the original network. In the specialization model the way in which we link these new components is by separating out the paths and cycles on which these components lie, in a way that mimics the original network structure.

To describe the model of network specialization introduced in \cite{BSW18} and its spectral and dynamic consequences we first need to describe the paths and cycles of a graph. A \emph{path}\index{path} $P$ in the graph $G=(V,E,\omega)$ is an ordered sequence of distinct vertices $P=v_1,\dots,v_m$ in $V$ such that $e_{i+1,i}\in E$ for $i=1,\dots,m-1$. If the first and last vertices $v_1$ and $v_m$ are the same then $P$ is a \emph{cycle}\index{cycle}. If it is the case that a cycle contains a single vertex then we call this cycle a \emph{loop}\index{loop}.

Another fundamental concept that we require is the notion of a strongly connected component. A graph $G=(V,E,\omega)$ is \emph{strongly connected} if for any pair of vertices $v_i,v_j\in V$ there is a path from $v_i$ to $v_j$ or, in the trivial case, $G$ consists of a single vertex. A \emph{strongly connected component} of a graph $G$ is a subgraph that is strongly connected and is maximal with respect to this property.

Because we are concerned with evolving the topology of a network in ways that preserve, at least locally, the network's topology we will also need the notion of a graph restriction. For a graph $G=(V,E,\omega)$ and a subset $B\subseteq V$ we let $G|_{B}$ denote the \emph{restriction} of the graph $G$ to the vertex set $B$, which is the subgraph of $G$ on the vertex set $B$ along with any edges of the graph $G$ between the vertices in $B$. We let $\bar{B}$ denote the \emph{complement} of $B$, so that the restriction $G|_{\bar{B}}$ is the graph restricted to the complement of vertices in $B$.

The key to specializing the structure of a graph is to look at the strongly connected components of the restricted graph $G|_{\bar{B}}$. If $C_1,\dots,C_m$ denote these strongly connected components then we need the collection of paths or cycles of these components, which we refer to as \emph{components branches}.

\begin{definition}\label{def:componentbranch} \textbf{(Component Branches)}
For a graph $G=(V,E,\omega)$ and vertex set $B\subseteq V$ let $C_1,\dots,C_m$ be the strongly connected components of $G|_{\bar{B}}$. If there are edges $e_0,e_1,\dots,e_m\in E$ and vertices $v_i,v_j\in B$ such that\\
\indent (i) $e_k$ is an edge from a vertex in $C_k$ to a vertex in $C_{k+1}$ for $k=1,\dots,m-1$;\\
\indent (ii) $e_0$ is an edge from $v_i$ to a vertex in $C_1$; and\\
\indent (iii) $e_m$ is an edge from a vertex in $C_m$ to $v_j$, then we call the ordered set
\[
\beta=\{v_i,e_{0},C_1,e_{1},C_2,\dots,C_m,e_{m},v_{j}\}
\]
a \emph{path of components} in $G$ with respect to $B$. If $v_i=v_j$ then $\beta$ is a \emph{cycle of components}. We call the collection $\mathcal{B}_B(G)$ of these paths and cycles the \emph{component branches} of $G$ with respect to the base set of vertices $B$.
\end{definition}

The sequence of components $C_1,\dots,C_m$ in this definition can be empty in which case $m=0$ and $\beta$ is the trivial path $\beta=\{v_i,v_j\}$ or loop if $v_i=v_j$. It is worth emphasizing that each branch $\beta\in\mathcal{B}_B(G)$ is a subgraph of $G$. Consequently, the edges of $\beta$ inherit the weights they had in $G$ if $G$ is weighted. If $G$ is unweighted then its component branches are likewise unweighted.

Once a graph has been decomposed into its various branches we construct the specialized version of the graph by merging these branches as follows.

\begin{definition} \textbf{(Graph Specialization)}\label{def:exp}
Suppose $G=(V,E,\omega)$ and $B\subseteq V$. Let $\mathcal{S}_B(G)$ be the graph which consists of the component branches $\mathcal{B}_{B}(G)=\{\beta_1,\dots,\beta_{\ell}\}$ in which we \emph{merge}, i.e. identify, each vertex $v_i\in B$ in any branch $\beta_j$ with the same vertex $v_i$ in any other branch $\beta_k$. We refer to the graph  $\mathcal{S}_B(G)$ as the \emph{specialization} of $G$ over the \emph{base} vertex set $B$.
\end{definition}

A specialization of a graph $G$ over a base vertex set $B$ is a two step process. The first step is the construction of the graph's component branches. The second step is the merging of these components into a single graph. We note that, in a component branch $\beta\in\mathcal{B}_B(G)$ only the first and last vertices of $\beta$ belong to the base $B$. The specialized graph $\mathcal{S}_B(G)$ is therefore the collection of branches $\mathcal{B}_B(G)$ in which we identify an endpoint of two branches if they are the same vertex. This is demonstrated in the following example.

\begin{figure}
  \begin{overpic}[scale=.28]{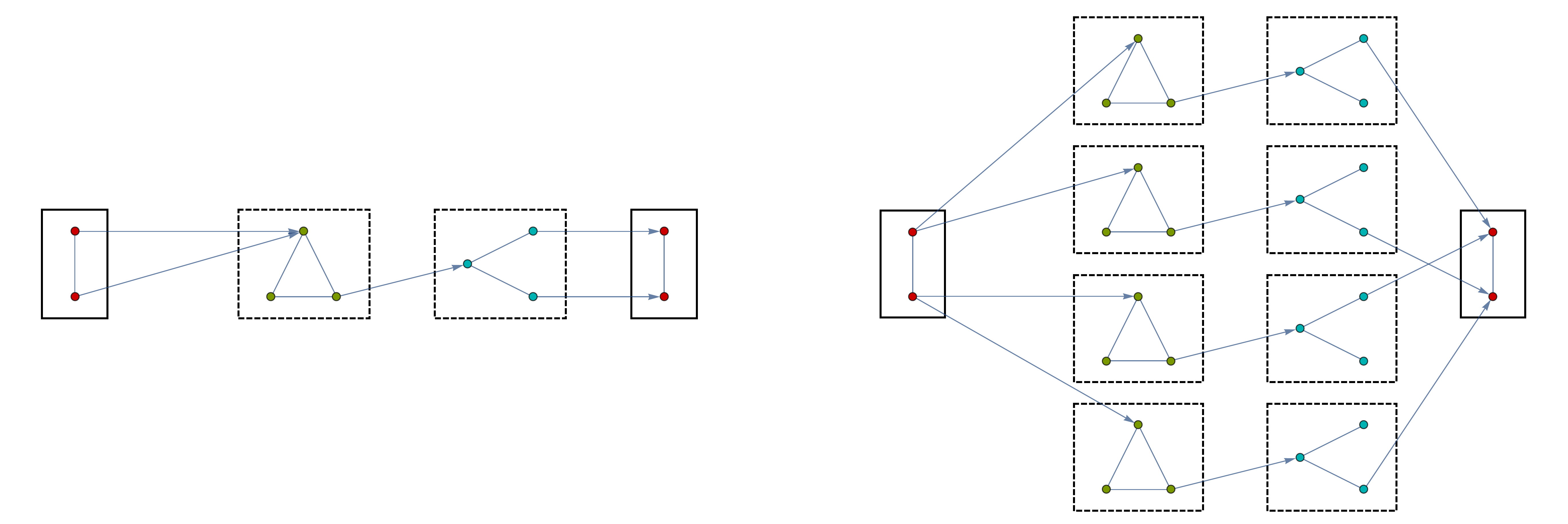}

    \put(15,5){\emph{Unspecialized Graph} $G$}

    \put(4,9){$B$}
    \put(18.25,9){$C_1$}
    \put(31,9){$C_2$}
    \put(41.5,9){$B$}

    \put(4,20.5){$v_1$}
    \put(4,12){$v_2$}
    \put(18.5,20.5){$v_3$}
    \put(16,12){$v_4$}
    \put(21,12){$v_5$}
    \put(28.5,18){$v_6$}
    \put(33,20.5){$v_7$}
    \put(33,12){$v_8$}
    \put(41.5,20.5){$v_1$}
    \put(41.5,12){$v_2$}

    \put(67,-5){\emph{Specialized Graph} $\mathcal{S}_B(G)$}

    \put(57,9){$B$}
    \put(71.25,-2){$C_1$}
    \put(84,-2){$C_2$}
    \put(94,9){$B$}

    \put(78,28.5){$\beta_1$}
    \put(78,20.5){$\beta_2$}
    \put(78,12.25){$\beta_3$}
    \put(78,4){$\beta_4$}
    \end{overpic}
    \vspace{0.5cm}
  \caption{The unweighted graph $G=(V,E)$ is shown left. The graph has components $C_1$ and $C_2$ with respect to the vertex set $B=\{v_1, v_2\}$. These components are indicated by the dashed boxes, which are the strongly connected components of the restricted graph $G|_B$. The specialization $\mathcal{S}_B(G)$ of $G$ over the base $B$ is shown right with nontrivial branches $\beta_1,\dots,\beta_4$.}\label{Fig:2}
\end{figure}

\begin{example}\label{ex:2} \textbf{(Constructing Graph Specializations)}
Consider the \emph{unweighted} graph $G=(V,E)$ shown in Figure \ref{Fig:2} (left). For the base vertex set $B=\{v_1,v_2\}$, which are shown in red in the figure once on the left and once on the right, the specialization $\mathcal{S}_B(G)$ is constructed as follows.\\

\noindent\emph{Step 1:} \emph{Construct the branch components of $G$ with respect to $B$.} The graph $G|_{\bar{B}}$ has the strongly connected components $C_1=G|_{\{v_3,v_4,v_5\}}$ and $C_2=G|_{\{v_6,v_7,v_8\}}$, which are the green and blue vertices in Figure \ref{Fig:2} (right), respectively. The set $\mathcal{B}_B(G)$ of all paths and cycles of components beginning and ending at vertices in $B$ consists of the component branches
\begin{align*}
\beta_1&=\{v_1,e_{31},C_1,e_{65},C_2,e_{17},v_1\} \ \ \ \ \ \ \ \ \ \ \ \ \ \beta_2=\{v_1,e_{31},C_1,e_{65},C_2,e_{28},v_2\}\\
\beta_3&=\{v_2,e_{32},C_1,e_{65},C_2,e_{17},v_1\} \ \ \ \ \ \ \ \ \ \ \ \ \ \beta_4=\{v_2,e_{32},C_1,e_{65},C_2,e_{28},v_2\};
\end{align*}
which can be seen in Figure \ref{Fig:2} (right) along with the trivial branches $\{v_1,v_2\}$ and ${v_2,v_1}$.\\

\noindent\emph{Step 2:} \emph{Merge the branch components.} By merging each of the vertices $v_1\in B$ in all branches of $\mathcal{B}_B(G)=\{\beta_1,\beta_2,\beta_3,\beta_4\}$ and doing the same for the vertex $v_2\in B$, the result is the graph $\mathcal{S}_B(G)$ shown in Figure \ref{Fig:2} (right), which is the specialization of $G$ over the base vertex subset $B$.
\end{example}

Essentially, after selecting a base set $B\subseteq V$ from a graph (network), the specialization model creates a new graph (network) consisting of all the component branches $\mathcal{B}_B(G)$, separated out from one another. Thus a copy of each strongly connected components of $G|_{\bar{B}}$ will appear as a copy in the new graph for each component branch which includes it. In this sense the strongly connected components $C_1,\dots,C_m$ are specialized as the network evolves under this process.

Once a graph has been specialized it can again be specialized by choosing a new base of the specialized graph. In this way a network can be sequentially specialized. As a simple example one can randomly choose a fixed percentage of the graph's vertices at each step (see Figure \ref{Fig:-1}). The result is a graph that has many features consistent with real-world networks. For example, it has a right-skewed degree distribution, is disassociative, has the small world property, is sparse, and its topology is both modular and hierarchical (see Example 3.1 in \cite{BSW18}). As one might expect, the resulting sequence of specializations depends very much on the way in which a base is chosen at each step in the specialization process (cf. Example 3.2 and 3.3 in \cite{BSW18}).

In the following section we consider how the process of specialization effects the spectral properties of a network.

\begin{figure}
  \begin{overpic}[scale=.22]{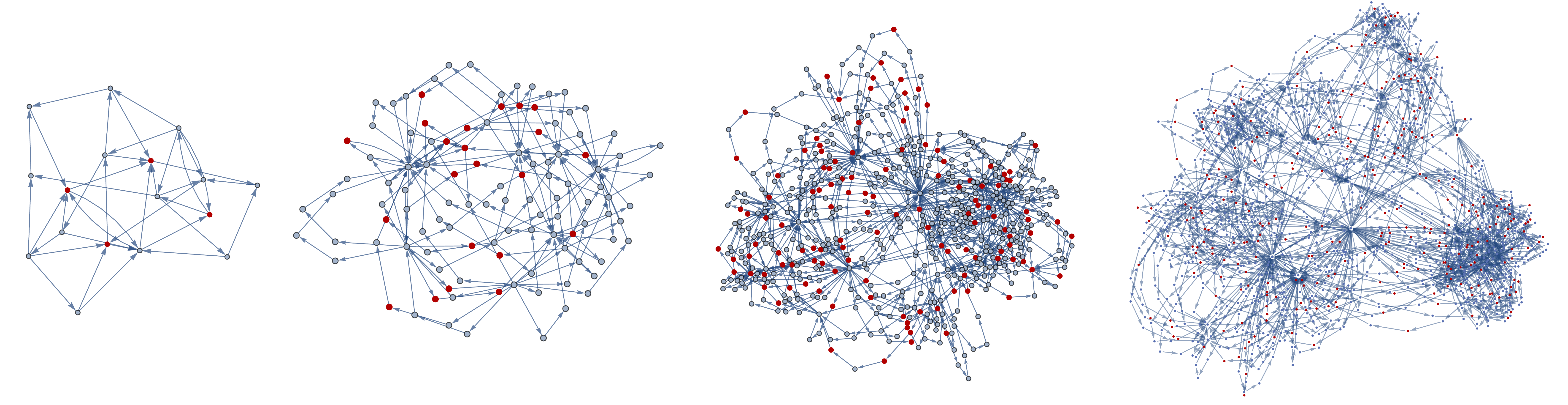}

    \put(8,-1.5){$G_1$}
    \put(30,-1.5){$G_2$}
    \put(55,-1.5){$G_3$}
    \put(85,-1.5){$G_4$}

    \end{overpic}
    \vspace{0.25cm}
  \caption{The unweighted graph $G_1$ is sequentially specialized by randomly choosing eighty percent of its vertices to be its base in each step of the this sequence. The resulting networks $G_1$, $G_2$, $G_3$, and $G_4$ have features that are increasingly similar to real world networks (see Example 3.1 in \cite{BSW18}).}\label{Fig:-1}
\end{figure}

\section{Spectral Properties of Specializations}\label{sec3}
To understand how specializing a network's topology effects the network's dynamics and in turn the network's function, we need some notion that relates both structure and dynamics. One of the most fundamental concepts that relates the two is the notion of a network's spectrum \cite{WM10,BW12,BW13}. Spectral properties are used in a number of network applications including determining network centralities and network communities \cite{Newman10}, which will be important in this section where we describe specializations's effect on a network's eigenvector centrality.

The spectrum of a network can be defined in a number of ways since a number of matrices can be associated with a network. This includes various Laplacian matrices, distance matrices, and adjacency matrices of a graph. The type of matrix we consider here is the weighted adjacency matrix of a graph. The \emph{weighted adjacency matrix} of a graph $G=(V,E,\omega)$ is the matrix $A=\mathcal{A}(G)$ where
\begin{equation}\label{eq:adj}
A_{ij}=
\begin{cases}
\omega(e_{ij}) \ \ \text{if} \ e_{ij}\in E\\
0 \hspace{.8cm} \ \text{otherwise}.
\end{cases}
\end{equation}
If $G$ is unweighted then each entry $A_{ij}=\omega(e_{ij})=1$ if $e_{ij}\in E$ and $A_{ij}=0$ otherwise. The \emph{eigenvalues} of the matrix $A\in\mathbb{R}^{n\times n}$ make up the graph's \emph{spectrum}, which we denote by
\[
\sigma(G)=\{\lambda\in\mathbb{C}:\det(A-\lambda I)=0\},
\]
where we consider $\sigma(G)$ to be a set that includes multiplicities, i.e. a \emph{multiset}. The \emph{spectral radius} of $G$ is the spectral radius of $A=\mathcal{A}(G)$ denoted by
\[
\rho(G)=\max\{|\lambda|:\lambda\in\sigma(A)\}.
\]
In Section \ref{sec4} we will investigate the connection between the spectrum of a graph $G$ and the dynamics of the network associated with it.

Because we are concerned with the spectrum of a graph, which is a set that includes multiplicities, the following is important for our discussion. First, the element $\alpha$ of the set $A$ has \emph{multiplicity} $m$ if there are $m$ elements of $A$ equal to $\alpha$. If $\alpha\in A$ with multiplicity $m$ and $\alpha\in B$ with multiplicity $n$ then\\
\indent (i) the \emph{union} $A\cup B$ is the set in which $\alpha$ has multiplicity $m+n$; and\\
\indent (ii) the \emph{difference} $A-B$ is the set in which $\alpha$ has multiplicity $m-n$ if $m-n>0$ and where $\alpha\notin A-B$ otherwise.

For ease of notation, if $A$ and $B$ are sets that include multiplicity then we let $B^k=\cup_{i=1}^kB$ for $k\geq 1$. That is, the set $B^k$ is $k$ copies of the set $B$ where we let $B^0=\emptyset$. For $k=-1$ we let $A\cup B^{-1}=A-B$. With this notation in place, the spectrum of a graph $G$ and the spectrum of the specialized graph $\mathcal{S}_B(G)$ are related by the following result.

\begin{theorem}\label{thm1} \textbf{(Spectra of Specialized Graphs)}
Let $G=(V,E,\omega)$, $B\subseteq V$, and let $C_1,\dots,C_m$ be the strongly connected components of $G|_{\bar{B}}$. Then
\[
\sigma\big(\mathcal{S}_B(G)\big)=\sigma(G)\cup\sigma(C_1)^{n_1-1}\cup\sigma(C_2)^{n_2-1}\cup\dots\cup \sigma(C_m)^{n_m-1}
\]
where $n_i$ is the number of copies of the component $C_i$ in $\mathcal{S}_B(G)$.
\end{theorem}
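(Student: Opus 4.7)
The plan is to compute and compare the characteristic polynomials $\det(\lambda I - A)$ and $\det(\lambda I - \tilde A)$ of $A := \mathcal{A}(G)$ and $\tilde A := \mathcal{A}(\mathcal{S}_B(G))$ by block decomposition with respect to the base $B$, and then extract the multiset identity by reading off roots.

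First I would order the vertices of $G$ with $\bar B$ preceding $B$ and the strongly connected components $C_1,\dots,C_m$ of $G|_{\bar B}$ arranged in a topological order, so that $A_{\bar B}$ is block triangular with diagonal blocks $\mathcal{A}(C_1),\dots,\mathcal{A}(C_m)$ and hence $\det(\lambda I - A_{\bar B}) = \prod_{i=1}^m \det(\lambda I - \mathcal{A}(C_i))$. I would order the vertices of $\mathcal{S}_B(G)$ analogously: $B$ last, interior vertices grouped first by branch $\beta \in \mathcal{B}_B(G)$ and then by the sequential position of the SCC copies inside $\beta$. Because distinct branches share no interior vertex or edge, $\tilde A_{\bar B}$ is block diagonal across branches and block triangular within each branch, with diagonal blocks equal to the adjacency matrices of the SCC copies. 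Counting copies of each $C_i$ across all branches gives $\det(\lambda I - \tilde A_{\bar B}) = \prod_{i=1}^m \det(\lambda I - \mathcal{A}(C_i))^{n_i}$. I would also observe that the $B$--$B$ block of $\tilde A$ equals $A_B$, since direct edges between base vertices are preserved in $\mathcal{S}_B(G)$ as trivial branches and none are added.

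Applying the Schur complement identity to both matrices gives
\[
\det(\lambda I - A) = \det(\lambda I - A_{\bar B}) \cdot \det\bigl(\lambda I - A_B - M(\lambda)\bigr),
\]
\[
\det(\lambda I - \tilde A) = \det(\lambda I - \tilde A_{\bar B}) \cdot \det\bigl(\lambda I - A_B - \tilde M(\lambda)\bigr),
\]
with $M(\lambda) := A_{B\bar B}(\lambda I - A_{\bar B})^{-1} A_{\bar B B}$ and $\tilde M(\lambda) := \tilde A_{B\bar B}(\lambda I - \tilde A_{\bar B})^{-1} \tilde A_{\bar B B}$. I expect the main obstacle to be proving $M(\lambda) = \tilde M(\lambda)$ as rational matrix-valued functions of $\lambda$. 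The argument is combinatorial: using the block-triangular structure of $\lambda I - A_{\bar B}$, one expands $(\lambda I - A_{\bar B})^{-1}$ along the topological order of the SCCs and recognises that the nonzero contributions to the $(v_i,v_j)$-entry of $M(\lambda)$ are indexed exactly by the component branches in $\mathcal{B}_B(G)$ from $v_j$ to $v_i$, each branch $\beta = \{v_j, e_0, C_{i_1}, e_1, \dots, C_{i_k}, e_k, v_i\}$ contributing a product of connecting edge weights interleaved with the local resolvents $(\lambda I - \mathcal{A}(C_{i_l}))^{-1}$. In $\mathcal{S}_B(G)$ the branches are literally vertex-disjoint, so $(\lambda I - \tilde A_{\bar B})^{-1}$ is a direct sum of branch resolvents and the corresponding entry of $\tilde M(\lambda)$ is the same sum of the same branch contributions. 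Hence $M = \tilde M$, the two Schur complements agree, and dividing yields
\[
\det(\lambda I - \tilde A) = \det(\lambda I - A) \cdot \frac{\det(\lambda I - \tilde A_{\bar B})}{\det(\lambda I - A_{\bar B})} = \det(\lambda I - A) \prod_{i=1}^m \det\bigl(\lambda I - \mathcal{A}(C_i)\bigr)^{n_i - 1}.
\]
After clearing denominators so that both sides are polynomials in $\lambda$ (values in $\sigma(A_{\bar B})$ are then handled by continuity), reading off roots with multiplicity yields the claimed multiset identity for $\sigma(\mathcal{S}_B(G))$.
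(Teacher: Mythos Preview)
Your approach is correct and shares the same Schur-complement core with the paper: both compute $\det(\lambda I-A)$ and $\det(\lambda I-\tilde A)$ by pivoting on the $\bar B$-block and then verify that the resulting $|B|\times|B|$ Schur complements coincide as rational functions of $\lambda$. The organizational difference is in how the multi-SCC case is treated. The paper first proves the result when $G|_{\bar B}$ has a \emph{single} strongly connected component (so the identity $\hat Y(\hat Z-\lambda I)^{-1}\hat W=\sum_{i,j}Y_i(Z-\lambda I)^{-1}W_j$ follows by direct block multiplication), and then bootstraps to the general case by a ``stepwise specialization'' argument: one specializes with respect to one SCC of $G|_{\bar B}$ at a time, checks that the set of component branches over $B$ is unchanged at each step, and shows the process terminates at $\mathcal{S}_B(G)$. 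You instead attack the general case in one shot by putting all SCCs in topological order and expanding the resolvent $(\lambda I-A_{\bar B})^{-1}$ combinatorially along chains of SCCs, recognising each contribution to an entry of $M(\lambda)$ as a component branch; this is essentially the content of the paper's Lemma~\ref{lem:1} on partial eigenvector transfer matrices, which the paper only invokes later for the eigenvector theorem. Your route is more direct and avoids the stepwise-termination argument, at the cost of needing the block-triangular resolvent expansion up front; the paper's route keeps the linear algebra elementary in the base case but pays with an inductive graph-theoretic argument.
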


If a network has an evolving structure that can be modeled via a graph specialization, or more naturally a sequence of specializations, then Theorem \ref{thm1} allows us to effectively track the changes in the network's spectrum, resulting from the components $C_1,\dots,C_m$. Specifically, the eigenvalues of the resulting specialized graph $G$ are the eigenvalues of the original unspecialized graph together with the eigenvalues of the new copies of the strongly connected components $C_1,\dots,C_m$ appearing in $\mathcal{S}_B(G)$.

In nearly every network $G=(V,E,\omega)$ edges are considered to have \emph{positive edge weights}, i.e. $\omega:E\rightarrow\mathbb{R}^+$ (see [Newman]). In this situation Theorem \ref{thm1} has the following corollary, which is fundamental to the results regarding network dynamics in Section \ref{sec4}.

\begin{corollary} \textbf{(Spectral Radius of Specialized Graphs)}
Suppose $G=(V,E,\omega)$ has positive edge weights. Then for any $B\subseteq V$ the spectral radius $\rho(\mathcal{S}_B(G))=\rho(G)$.
\end{corollary}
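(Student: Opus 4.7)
The plan is to combine Theorem \ref{thm1} with a classical Perron--Frobenius monotonicity fact for nonnegative matrices. By Theorem \ref{thm1}, the spectrum of $\mathcal{S}_B(G)$ is the multiset union of $\sigma(G)$ with (possibly) repeated copies of $\sigma(C_1),\dots,\sigma(C_m)$, so
\[
\rho(\mathcal{S}_B(G)) \;=\; \max\bigl\{\rho(G),\,\rho(C_1),\dots,\rho(C_m)\bigr\}.
\]
Hence the inequality $\rho(\mathcal{S}_B(G))\geq \rho(G)$ is immediate, and the whole task reduces to showing $\rho(C_i)\leq \rho(G)$ for each strongly connected component $C_i$ of $G|_{\bar B}$.

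For this, I would invoke the fact that the weighted adjacency matrix $A=\mathcal{A}(G)$ is entrywise nonnegative, since by hypothesis $\omega:E\to\mathbb{R}^+$. Each $C_i$ is the restriction of $G$ to a subset $V_i\subseteq V$, so $\mathcal{A}(C_i)$ is precisely the principal submatrix of $A$ indexed by $V_i$. I would then cite the standard Perron--Frobenius consequence that for a nonnegative matrix $A$, every principal submatrix $A'$ satisfies $\rho(A')\leq \rho(A)$. (The quickest justification: Perron--Frobenius gives a nonnegative eigenvector $x'\geq 0$ of $A'$ with $A'x'=\rho(A')x'$; extending $x'$ by zeros to a vector $x\in\mathbb{R}^n_{\geq 0}$ yields $Ax\geq \rho(A')x$ entrywise, and by the Collatz--Wielandt characterization of $\rho(A)$ for nonnegative matrices this forces $\rho(A)\geq \rho(A')$.) Applying this with $A'=\mathcal{A}(C_i)$ yields $\rho(C_i)\leq \rho(G)$ for every $i$.

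Combining the two bounds gives $\rho(\mathcal{S}_B(G))=\rho(G)$. The main substantive ingredient is the Perron--Frobenius principal-submatrix inequality; everything else is a formal consequence of Theorem \ref{thm1} and the assumption of positive edge weights (which is exactly what guarantees nonnegativity of $A$). No separate argument is needed to handle strongly connected components individually, since the inequality $\rho(A')\leq \rho(A)$ holds for any principal submatrix of a nonnegative matrix irrespective of irreducibility.
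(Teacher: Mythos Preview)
Your proof is correct and follows essentially the same route as the paper: invoke Theorem~\ref{thm1} and then use the Perron--Frobenius monotonicity fact that a principal submatrix of a nonnegative matrix has spectral radius no larger than the full matrix, applied to each $C_i$ as a subgraph of $G$. The only difference is cosmetic: the paper simply cites \cite{HJ90} for the inequality $\rho(C_i)\leq\rho(G)$, whereas you spell out a Collatz--Wielandt justification, and the paper collapses your displayed $\max$ formula into the phrase ``it then follows directly from Theorem~\ref{thm1}.''
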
\label{cor1}

\begin{proof}
Suppose that $G=(V,E,\omega)$ is a graph with positive edge weights and $B\subseteq V$ is a base. Note that as any strongly connected component $C_i$ of $G|_{\bar{B}}$ is a subgraph of $G$ then, as $G$ has positive edge weights, the spectral radius $\rho(C_i)\leq\rho(G)$ (see \cite{HJ90}). It then follows directly from Theorem \ref{thm1} that $\rho(\mathcal{S}_B(G))=\rho(G)$.
\end{proof}

Hence, if a graph has positive weights then its spectral radius is preserved under specialization or any sequence of specializations. The reason is that by construction the edge weights of a graph are preserved when the graph is specialized (see Definition \ref{def:exp}). Hence, if a graph $G$ has positive edge weights then so do any of its specializations or sequential specializations.

We defer the proof of Theorem \ref{thm1} until Section \ref{appendix}. For now, we consider an example of Theorem \ref{thm1}.

\begin{example}\label{ex:3}
The unwieghted graph $G=(V,E)$ in Figure \ref{Fig:2} (left) has eigenvalues
\[\sigma(G)\approx\{2.188,1.167\pm .544i,-1.262\pm0.238i,-1,-1,0\}.\]
Note that for $B=\{v_1,v_2\}$ the graph $G|_{\bar{B}}$ has the strongly connected components $C_1$ and $C_2$ with eigenvalues $\sigma(C_1)=\{2,-1,-1\}$ and $\sigma(C_2)=\{\pm\sqrt{2},0\}$. Since each of the nontrivial branches of $\mathcal{B}_B(G)$ contain one copy of $C_1$ and $C_2$ respectively, Theorem \ref{thm1} implies that the specialized graph $\mathcal{S}_B(G)$ has the twenty-six eigenvalues
\begin{equation}\label{eq:ex2}
\sigma(\mathcal{S}_B(G))=\sigma(G)\cup\sigma(C_1)^3\cup\sigma(C_2)^3.
\end{equation}
Moreover, by inspecting the eigenvalues of $G$, $C_1$, and $C_2$ it follows from Equation \ref{eq:ex2} that $\rho(S_B(G))=\rho(G)\approx2.188$. This follows from Corollary \ref{cor1} as the graph $G$ has unit edge weights, which are positive.
\end{example}

Not only are the eigenvalues and spectral radius of a graph $G$ preserved, in a specific way, as the graph is specialized but so are its eigenvectors. Here an \emph{eigenvector} of a graph $G$ corresponding to the eigenvalue $\lambda\in\sigma(G)$ is a vector $\mathbf{x}$ such that $\mathcal{A}(G)\mathbf{x}=\lambda\mathbf{x}$, in which case $(\lambda, \mathbf{x})$ an \emph{eigenpair} of $G$. That is, the eigenvalues of $G$ are the eigenvalues of its adjacency matrix $A=\mathcal{A}(G)$.

To describe how specialization effects the eigenvectors of a network we require the following definition.

\begin{definition}\label{def:inout}\textbf{(Incoming and Outgoing Component Branches)}
For the graph $G=(V,E,\omega)$ and base $B\subseteq V$ let $\beta=\{v_i,e_{0},C_1,e_{1},C_2,\dots,C_m,e_{m},v_{j}\}\in\mathcal{B}_B(G)$. We call the ordered set
\[
In(\beta,C_k)=\{v_i,e_{0},C_1,e_{1},C_2,\dots,C_k\}\subset \beta
\]
the \emph{incoming branch} of $\beta$ up to $C_k$. Similarly, we call the ordered set
\[
Out(\beta,C_k)=\{C_k,e_{k},C_{k+1},\dots,C_m,e_{m},v_{j}\}\subset \beta
\]
the \emph{outgoing branch} in $\beta$ from $C_k$.
\end{definition}

If $Z$ is a strongly connected component of $G|_{\bar{B}}$ then $\ell\geq 0$ copies of it will appear in the graph $\mathcal{S}_B(G)$, which we denote by $\mathcal{C}(Z)=\{Z_1,Z_2,\dots,Z_{\ell}\}$. Here, each $Z_i$ is associated with the component $Z$ in a single branch $\beta_i\in\mathcal{B}_B(G)$. We say $Z_i,Z_j\in\mathcal{C}(Z)$ have the same incoming branch if $In(\beta_i,Z)=In(\beta_j,Z)$ and the same outgoing branch if $Out(\beta_i,Z)=Out(\beta_j,Z)$.

\begin{definition}\textbf{(Eigenvector Transfer Matrix)}
For $\beta=\{v_i,e_{0},C_1,e_{1},C_2,\dots,C_m,e_{m},v_{j}\}\in\mathcal{B}_B(G)$ let the adjacency matrix of the incoming branch $In(\beta,C_k)$ together with $G|_B$ be the matrix
\begin{equation*}\label{ex:system}
A=\begin{bmatrix}
\underline{B}&&&&\\
Y_0&\underline{C}_1&&&\\
&Y_1&\underline{C}_2&&\\
&&\ddots&\ddots&\\
&&&Y_{k-1}&\underline{C}_k
\end{bmatrix}
\end{equation*}
where $\underline{B}=\mathcal{A}(G|_{B})$ and $\underline{C}_i=\mathcal{A}(C_i)$. We call the matrix
\[
T(\beta,C_k,\lambda)=(\lambda I-\underline{C}_k)^{-1}Y_{k-1}(\lambda I-\underline{C}_{k-1})^{-1}Y_{k-2}\cdots(\lambda I-\underline{C}_1)^{-1}Y_0
\]
the \emph{eigenvector transfer matrix} of $In(\beta,C_k)$, where $\lambda$ is a spectral parameter.
\end{definition}

If $\mathbf{x}$ is an eigenvector of $G$ and $S\subseteq V$ is any subset of its vertex set, we let $\mathbf{x}_S$ denote the vector $\mathbf{x}$ restricted to the entries indexed by $S$. Similarly, by slight abuse of this notation, if $Z$ is a subgraph of $G$ with vertex set $S$ we let $\mathbf{x}_Z=\mathbf{x}_S$. With this in place we state the following theorem describing how the eigenvectors of a graph are effected by specialization.

\begin{theorem}\label{prop:0}\textbf{(Eigenvectors of Specialized Graphs)}
Let $G=(V,E,\omega)$ be a graph, $B\subseteq V$ a base, and let $Z$ be a strongly connected component of $\beta\in\mathcal{B}_B(G)$. If $(\lambda,\mathbf{u})$ is an eigenpair of $G$ with $\lambda\notin\sigma(G|_{\bar{B}})$ then there is an eigenpair $(\lambda,\mathbf{v})$ of $\mathcal{S}_B(G)$ such that the following hold:\\
(i) $\mathbf{u}_B=\mathbf{v}_B$.\\
(ii) For all $Z_i\in\mathcal{C}(Z)$ the eigenvector restriction
\[
\mathbf{v}_{Z_i}=T(\beta,Z,\lambda)\mathbf{v}_B.
\]
Hence, if $Z_i,Z_j\in\mathcal{C}(Z)$ have the same incoming branch then $\mathbf{v}_{Z_i}=\mathbf{v}_{Z_j}$.\\
(iii) For $Z_i\in\mathcal{C}(Z)$ let $\cup_{k=1}^\ell\{Z_k\}$ be the copies of $Z$ that have the same outgoing branch as $Z_i$. Then
\[
\mathbf{u}_Z=\sum_{k=1}^{\ell}\mathbf{v}_{Z_k}=\sum_{k=1}^{\ell}T(\beta,Z,\lambda)\mathbf{v}_B.
\]
\end{theorem}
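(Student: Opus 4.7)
The plan is to compare the block-structured eigenvalue equations on $G$ and $\mathcal{S}_B(G)$ induced by the partition of vertices into $B$ and the strongly connected components $C_1,\dots,C_m$ of $G|_{\bar B}$. I would topologically order $C_1,\dots,C_m$ so that $\mathcal{A}(G|_{\bar B})$ is block lower triangular with diagonal blocks $\underline{C}_k=\mathcal{A}(C_k)$, and observe that $\sigma(G|_{\bar B})=\bigcup_{k}\sigma(C_k)$, so the hypothesis $\lambda\notin\sigma(G|_{\bar B})$ makes every $(\lambda I-\underline{C}_k)$ invertible.

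Writing $\mathcal{A}(G)\mathbf{u}=\lambda\mathbf{u}$ blockwise, the rows indexed by $C_k$ read
\[
(\lambda I-\underline{C}_k)\,\mathbf{u}_{C_k}=Y_0^{(k)}\mathbf{u}_B+\sum_{j<k}Y_{jk}\,\mathbf{u}_{C_j},
\]
where $Y_0^{(k)}$ and $Y_{jk}$ are the inter-block adjacency matrices from $B$ to $C_k$ and from $C_j$ to $C_k$. Inverting and iterating this recursion unfolds $\mathbf{u}_{C_k}$ as a sum over all component-level paths from $B$ to $C_k$, each term contributing a product of resolvents and inter-component weight matrices. These paths are precisely the incoming branches to $C_k$ in the sense of Definition~\ref{def:inout}, and the resulting summand is exactly $T(\beta,C_k,\lambda)\mathbf{u}_B$. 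Consequently, for every component $Z$ of $G|_{\bar B}$,
\[
\mathbf{u}_Z=\sum_{\beta}T(\beta,Z,\lambda)\,\mathbf{u}_B,
\]
the sum ranging over all incoming branches to $Z$ in $G$.

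With this identity in hand, I define a candidate eigenvector $\mathbf{v}$ on $\mathcal{S}_B(G)$ by $\mathbf{v}_B=\mathbf{u}_B$, which gives (i), and for each copy $Z_i\in\mathcal{C}(Z)$ sitting in branch $\beta_i$, by $\mathbf{v}_{Z_i}=T(\beta_i,Z,\lambda)\mathbf{u}_B$, which gives (ii) together with the observation that $T$ depends only on the incoming portion of $\beta_i$. To verify $\mathcal{A}(\mathcal{S}_B(G))\mathbf{v}=\lambda\mathbf{v}$ on a component copy $Z_i$, I rearrange the defining factorization of $T(\beta_i,Z,\lambda)$ into the block equation $(\lambda I-\underline{Z})\mathbf{v}_{Z_i}=Y\mathbf{v}_{P_i}$, where $P_i$ is the preceding copy along $\beta_i$ (or $B$ itself, if $Z$ is the first component of $\beta_i$) and $Y$ is the corresponding inter-component weight matrix; since $\mathbf{v}_{P_i}$ is itself given by the transfer matrix of the truncated branch, induction on the position of $Z$ in $\beta_i$ closes this case.

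The remaining check is the equation at a base vertex $v_j\in B$: the edges entering $v_j$ in $\mathcal{S}_B(G)$ are those of $G|_B$ together with, for each branch of $G$ ending at $v_j$, one edge from the corresponding copy of the last component of that branch. Grouping these copies by a fixed outgoing branch terminating at $v_j$ and noting that copies of $Z$ sharing a common outgoing branch are in bijection with the incoming branches to $Z$, the identity from the second paragraph gives $\sum_k\mathbf{v}_{Z_k}=\mathbf{u}_Z$, which is property (iii). Substituting this into the specialized equation at $v_j$ collapses it to the original eigenvalue equation $(\mathcal{A}(G)\mathbf{u})_{v_j}=\lambda\mathbf{u}_{v_j}$, completing the verification. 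The main obstacle is exactly this bookkeeping step: I must identify the set of copies of $Z$ sharing a fixed outgoing branch with the set of incoming branches to $Z$, so that the resolvent expansion of $\mathbf{u}_Z$ derived in $G$ matches the naive sum over copies in $\mathcal{S}_B(G)$. Once this bijection is laid out, everything else is a direct block-matrix calculation powered by the invertibility of each $(\lambda I-\underline{C}_k)$.
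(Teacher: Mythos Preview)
Your proposal is correct, and the underlying computation---expanding $\mathbf{u}_Z$ as a sum of transfer matrices over incoming component paths via block resolvent recursion---is the same identity that drives the paper's argument. The organization differs, however. For (i) the paper does not verify the eigenvalue equation at base vertices directly; instead it invokes the theory of isospectral reductions, showing that the Schur-complement reductions satisfy $\mathcal{R}_B(\mathcal{A}(G))=\mathcal{R}_B(\mathcal{A}(\mathcal{S}_B(G)))$ (via the identity $\hat Y(\hat Z-\lambda I)^{-1}\hat W=Y(Z-\lambda I)^{-1}W$ already established in the proof of Theorem~\ref{thm1}), and then reconstructs $\mathbf{v}$ from $\mathbf{u}_B$ by the general formula $\mathbf{v}_{\bar B}=-(\hat M_{\bar B\bar B}-\lambda I)^{-1}\hat M_{\bar B B}\mathbf{u}_B$. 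For (iii) the paper isolates a standalone lemma expressing each block of $(\lambda I-L)^{-1}$, with $L$ block lower triangular, as a sum of partial transfer matrices over component paths, and then matches those paths against component branches. Your route is more unified and self-contained: you define $\mathbf{v}$ blockwise by transfer matrices, verify the eigenvalue equation componentwise (giving (ii)), and the verification at $B$---after the bijection between copies of $Z$ sharing a fixed outgoing branch and the incoming branches to $Z$---simultaneously yields (i) and consumes (iii). The paper's route has the advantage that (i) drops out of a general reduction fact that is independently useful elsewhere; yours avoids importing that machinery at the cost of doing the base-vertex bookkeeping by hand.
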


Part (i) of Theorem \ref{prop:0} states that the graphs $G$ and $\mathcal{S}_B(G)$ have the same eigenvectors if we restrict our attention to those entries that correspond to the base vertices $B$ and to those eigenvectors with eigenvalues in $\sigma(G)-\sigma(G|_{\bar{B}})\subset\sigma(\mathcal{S}_B(G))$. Part (ii) states that the eigenvectors associated with the component $Z_i$ can be found by applying the transfer matrix $T(\beta,Z,\lambda)$ to the  same eigenvector associated with $G|_B$. Hence, if two specializations $Z_i$ and $Z_j$ of the same component $Z$ have the same incoming branch, their associated eigenvectors are identical. Part (iii) states that if we sum the eigenvectors associated of each copy $\cup_{k=1}^\ell\{Z_k\}$ of $Z$ with the same outgoing branch in $\mathcal{S}_B(G)$ then we recover the eigenvector associated with the original component $Z$ in $G$.

\begin{example}
Consider the unweighted graph $G=(V,E)$ in Figure \ref{Fig:3} and its specialization over the base $B=\{v_1,v_2\}$, shown in red, as in Example \ref{ex:2}. Here, for simplicity, we let the components $C_1=Y$ and $C_2=Z$, in which case $\mathcal{C}(Y)=\{Y_1,\dots,Y_4\}$ and $\mathcal{C}(Z)=\{Z_1,\dots,Z_4\}$ are the copies of $Y$ and $Z$, respectively, in $\mathcal{S}_B(G)$. Note that
\[
\mathcal{C}(Y)=\{\{Y_1,Y_2\},\{Y_3,Y_4\}\} \ \text{ and } \ \mathcal{C}(Z)=\{\{Z_1,Z_2\},\{Z_3,Z_4\}\}
\]
are the partition of $\mathcal{C}(Y)$ and $\mathcal{C}(Z)$ into components with the same incoming branches and
\[
\mathcal{C}(Y)=\{\{Y_1,Y_3\},\{Y_2,Y_4\}\} \ \text{ and } \ \mathcal{C}(Z)=\{\{Z_1,Z_3\},\{Z_2,Z_4\}\}
\]
are the partition of these sets into components with the same outgoing branches.

In Figure \ref{Fig:3}, numbers next to vertices in both $G$ and $\mathcal{S}_B(G)$ are the associated entries of the graph's eigenvectors $\mathbf{u}$ and $\mathbf{v}$ corresponding to the spectral radius $\rho$ of both graphs, respectively. Since $\rho=\rho(G)\approx 2.188$ is an eigenvalue of $G$ that is not an eigenvalue of $G|_{\bar{B}}$, part (i) of Theorem \ref{prop:0} implies that $\mathbf{u}_B=[1,1]^T=\mathbf{u}_B$ as can be seen in the figure. Also, as $Z_1$ and $Z_2$ have the same incoming branch in the specialized graph, part (ii) of Theorem \ref{prop:0} implies that $\mathbf{v}_{Z_1}=[1.3,.6,.6]^T=\mathbf{v}_{Z_2}$. The same can be quickly verified for the other components with the same incoming branches. Last, since $Z_1$ and $Z_3$ are the copies of $Z$ with the same outgoing branches then part (iii) of Theorem \ref{prop:0} implies
\[
\mathbf{u}_Z=[2.6,1.2,1.2]^T=[1.3,.6,.6]^T+[1.3,.6,.6]^T=\mathbf{v}_{Z_1}+\mathbf{v}_{Z_3}
\]
and the same can be verified for the other components with identical outgoing branches. Additionally, the eigenvector transfer matrix can be used to determine any of the component's associated eigenvectors in the specialized graph using only this matrix and the vector $\mathbf{v}_B=[1,1]^T$ so long as the eigenvalues associated with the eigenvector is not an eigenvalue of $G|_{\bar{B}}$.
\end{example}

\begin{figure}
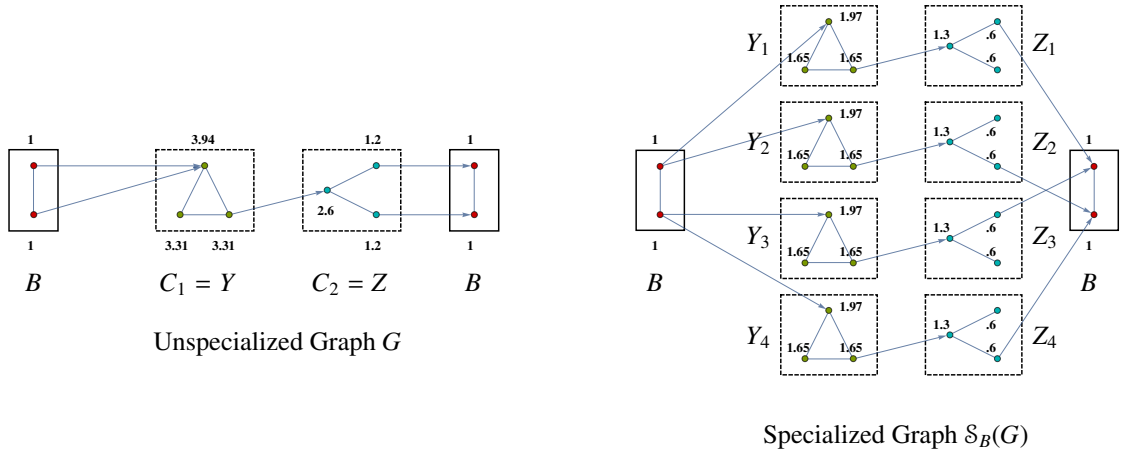

  \begin{overpic}[scale=.28]{SpecFig2.pdf}
    \put(15,3){Unspecialized Graph $G$}
    \put(67,-5){Specialized Graph $\mathcal{S}_B(G)$}

    \put(4,8){$B$}
    \put(4.25,20.5){\tiny{\textbf{1}}}
    \put(4.25,11.5){\tiny{\textbf{1}}}

    \put(15.5,8){$C_1=Y$}
    \put(18.25,20.5){\tiny{\textbf{3.94}}}
    \put(16,11.5){\tiny{\textbf{3.31}}}
    \put(20,11.5){\tiny{\textbf{3.31}}}

    \put(28.5,8){$C_2=Z$}
    \put(33,20.5){\tiny{\textbf{1.2}}}
    \put(29,14.5){\tiny{\textbf{2.6}}}
    \put(33,11.5){\tiny{\textbf{1.2}}}

    \put(41.5,8){$B$}
    \put(41.75,20.5){\tiny{\textbf{1}}}
    \put(41.75,11.5){\tiny{\textbf{1}}}

    \put(57,8){$B$}
    \put(57.5,20.5){\tiny{\textbf{1}}}
    \put(57.5,11.5){\tiny{\textbf{1}}}

    \put(73.5,31){\tiny{\textbf{1.97}}}
    \put(69,27.5){\tiny{\textbf{1.65}}}
    \put(73.5,27.5){\tiny{\textbf{1.65}}}

    \put(73.5,22.75){\tiny{\textbf{1.97}}}
    \put(69,19.25){\tiny{\textbf{1.65}}}
    \put(73.5,19.25){\tiny{\textbf{1.65}}}

    \put(73.5,14.5){\tiny{\textbf{1.97}}}
    \put(69,11){\tiny{\textbf{1.65}}}
    \put(73.5,11){\tiny{\textbf{1.65}}}

    \put(73.5,6.25){\tiny{\textbf{1.97}}}
    \put(69,2.75){\tiny{\textbf{1.65}}}
    \put(73.5,2.75){\tiny{\textbf{1.65}}}

    \put(86,29.5){\tiny{\textbf{.6}}}
    \put(81.5,29.5){\tiny{\textbf{1.3}}}
    \put(86,27.5){\tiny{\textbf{.6}}}

    \put(86,21.25){\tiny{\textbf{.6}}}
    \put(81.5,21.25){\tiny{\textbf{1.3}}}
    \put(86,19.25){\tiny{\textbf{.6}}}

    \put(86,13){\tiny{\textbf{.6}}}
    \put(81.5,13){\tiny{\textbf{1.3}}}
    \put(86,11){\tiny{\textbf{.6}}}

    \put(86,4.75){\tiny{\textbf{.6}}}
    \put(81.5,4.75){\tiny{\textbf{1.3}}}
    \put(86,2.75){\tiny{\textbf{.6}}}

    \put(94,8){$B$}
    \put(94.5,20.5){\tiny{\textbf{1}}}
    \put(94.5,11.5){\tiny{\textbf{1}}}

    \put(65.5,28.5){$Y_1$}
    \put(65.5,20){$Y_2$}
    \put(65.5,12){$Y_3$}
    \put(65.5,3.5){$Y_4$}

    \put(90,28.5){$Z_1$}
    \put(90,20){$Z_2$}
    \put(90,12){$Z_3$}
    \put(90,3.5){$Z_4$}


    \end{overpic}
    \vspace{0.5cm}
\caption{The graph $G$ and its specialization $\mathcal{S}_B(G)$, first considered in Example \ref{ex:2} are shown where we let $C_1=Y$ and $C_2=Z$. Here $\mathcal{C}(Y)=\{Y_1,\dots,Y_4\}$ and $\mathcal{C}(Z)=\{Z_1,\dots,Z_4\}$ are the copies of $Y$ and $Z$ in $\mathcal{S}_B(G)$, respectively. Numbers next to vertices are the vertices' eigenvector centrality, equivalently the corresponding entry in the vertices' eigenvector centrality.}\label{Fig:3}
\end{figure}

One immediate consequence of Theorem \ref{prop:0} is that we can describe to a large extent what happens to the eigenvector centrality of a network as the network becomes increasingly specialized. Eigenvector centrality is one of the standard network measures used for determining the importance of a vector in a network (see \cite{Newman10} for more details). For an unweighted graph $G=(V,E)$ that is strongly connected the graphs eigenvector $\mathbf{p}$ corresponding to its spectral radius gives the relative ranking $p_i$ to each vertex $v_i\in V$. This value $p_i$ is referred to as the \emph{eigenvector centrality} of the vertex $v_i$. Here we refer to the vector $\mathbf{p}$ as an \emph{eigencentrality vector} of the graph $G$.

\begin{corollary}\label{cor:2}\textbf{(Eigenvector Centrality of Specialized Networks)}
Let $G=(V,E)$ be an unweighted strongly connected graph with base $B\subseteq V$ and $Z$ a strongly connected component of $G|_{\bar{B}}$ . If $\mathbf{p}$ is the eigencentrality vector of $G$, there exists an eigencentrality vector $\mathbf{q}$ of $\mathcal{S}_B(G)$ such that\\
(i) $\mathbf{p}_B=\mathbf{q}_B$;\\
(ii) if $Z_i,Z_j\in\mathcal{C}(Z)$ have the same incoming branch then $\mathbf{q}_{Z_i}=\mathbf{q}_{Z_j}$; and\\
(iii) if $\cup_{k=1}^\ell\{Z_k\}$ are the copies of $Z$ that have the same outgoing branch as some $Z_i\in\mathcal{C}(Z)$ then $\mathbf{p}_Z=\sum_{k=1}^{\ell}\mathbf{q}_{Z_k}$.
\end{corollary}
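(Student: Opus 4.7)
The plan is to realize $\mathbf{p}$ as an instance of the eigenvector $\mathbf{u}$ in Theorem \ref{prop:0}, taking $\lambda = \rho(G)$, and then read off the three conclusions directly from parts (i)--(iii) of that theorem. Before I can invoke Theorem \ref{prop:0} I need to verify two hypotheses: first, that $\rho(G)\notin\sigma(G|_{\bar B})$ (the theorem's spectral assumption) and, second, that the eigenvector of $\mathcal{S}_B(G)$ it produces is in fact an eigencentrality vector, which amounts to checking that $\rho(\mathcal{S}_B(G))=\rho(G)$.

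The first verification is the one substantive step, and I expect it to be the main obstacle. Because $G$ is unweighted and strongly connected, $\mathcal{A}(G)$ is non-negative and irreducible, so the Perron--Frobenius theorem supplies the strict inequality $\rho(H)<\rho(G)$ for every proper principal submatrix $H$ of $\mathcal{A}(G)$. Each strongly connected component $C_i$ of $G|_{\bar B}$ is such a proper principal submatrix (provided $B\ne\emptyset$, without which the corollary is vacuous), so $\rho(C_i)<\rho(G)$ for every $i$. Ordering the vertices of $\bar B$ by a reverse topological ordering on the condensation of $G|_{\bar B}$ makes $\mathcal{A}(G|_{\bar B})$ block triangular with the $\mathcal{A}(C_i)$ on the diagonal, whence $\sigma(G|_{\bar B})=\bigcup_i\sigma(C_i)$ as multisets, and therefore $\rho(G)\notin\sigma(G|_{\bar B})$. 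The second verification is immediate from Corollary \ref{cor1}: since $G$ has unit (hence positive) edge weights, $\rho(\mathcal{S}_B(G))=\rho(G)$, so any eigenvector of $\mathcal{S}_B(G)$ associated with $\rho(G)$ qualifies as an eigencentrality vector.

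With both hypotheses secured I apply Theorem \ref{prop:0} to the eigenpair $(\rho(G),\mathbf{p})$ to obtain an eigenpair $(\rho(G),\mathbf{q})$ of $\mathcal{S}_B(G)$. Parts (i), (ii), (iii) of the corollary then read off directly from the corresponding parts of that theorem: $\mathbf{p}_B=\mathbf{q}_B$ is part (i) verbatim; equality of $\mathbf{q}_{Z_i}$ and $\mathbf{q}_{Z_j}$ when $Z_i,Z_j\in\mathcal{C}(Z)$ share an incoming branch follows from part (ii), because the transfer matrix $T(\beta,Z,\rho(G))$ is built only from the incoming data $In(\beta,Z)$; and $\mathbf{p}_Z=\sum_{k=1}^{\ell}\mathbf{q}_{Z_k}$ over the copies sharing the outgoing branch of $Z_i$ is part (iii). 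Once the Perron--Frobenius-based spectral-gap step is in hand, the remainder is entirely mechanical.
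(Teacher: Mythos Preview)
Your proposal is correct and follows essentially the same approach as the paper's own proof: verify via Perron--Frobenius that $\rho(G)\notin\sigma(G|_{\bar B})$, invoke Corollary~\ref{cor1} to get $\rho(\mathcal{S}_B(G))=\rho(G)$, and then read off (i)--(iii) from Theorem~\ref{prop:0}. The only minor point the paper makes explicit that you leave implicit is that $\mathcal{S}_B(G)$ is itself unweighted and strongly connected (so that the notion of eigencentrality vector applies to it), but this is a small observation and your argument is otherwise complete.
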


\begin{proof}
By convention, the unweighted graph $G=(V, E)$ has positive edge weights as each edge is assigned unit weight. Since $G$ is strongly connected, the Perron-Frobenius theorem implies that the spectral radius $\rho=\rho(G)$ is a simple eigenvalue of $G$ with corresponding eigencentrality vector $\mathbf{p}$. Since $G|_{\bar{B}}$  is a subgraph of the strongly connected graph $G$ then the spectral radius $\rho(G|_{\bar{B}})<\rho(G)$ (see \cite{HJ90}). Hence, $\rho\notin\sigma(G|_{\bar{B}})$. Part (i) of Theorem \ref{prop:0} then implies that there is an eigenpair $(\rho, \mathbf{q})$ of $\mathcal{S}_B(G)$ such that $\mathbf{p}_B = \mathbf{q}_B$.

Note that $S_B(G)$ is both unweighted and strongly connected as $G$ is unweighted and strongly connected. By Corollary \ref{cor1} both $G$ and $\mathcal{S}_B(G)$ have the same spectral radius $\rho$. Therefore, $\mathbf{q}$ must be an eigencentrality
vector of $\mathcal{S}_B(G)$, completing the proof of part (i).

Parts (ii) and (iii) then follow immediately from parts (ii) and (iii) of Theorem \ref{prop:0}.
\end{proof}

When a network is specialized the eigenvector centrality of the unspecialized vertices, i.e. the vertices in the base, stay the same. In contrast, this process does effect the centrality of those vertices that are specialized (cf. Figure \ref{Fig:3}). What we can say for these vertices is that if two vertices are copies of the same vertex that belong to components with the same incoming branches then these vertices have the same eigenvector centrality. Also, if we take all the specializations, i.e. copies, of a vertex that belong to components with the same outgoing branch then summing these vertices' centralities results in the centrality of the original vertex. That is, the centrality of a vertex is distributed among this set of its copies (cf. Figure \ref{Fig:3}).

As with the proof of Theorem \ref{thm1}, the proof of Theorem \ref{prop:0} is given in Section \ref{appendix}. These proofs rely on a number of results from the theory of isospectral graph reductions found in \cite{BW12,BWBook}, which can also be found in Section \ref{appendix}.

\section{Network Growth, Dynamics, and Function}\label{sec4}
Sections \ref{sec2} and \ref{sec3} of this paper are primarily concerned with the \emph{dynamics of} a network, i.e. the temporal evolution of the network's structure of interactions, and the spectral consequences of this evolution. This evolution also effects the network's ability to perform its intended function. This function is not only dependent on the network's topology but also on the type of dynamics that emerges from the interactions between the network elements, i.e. the dynamics \emph{on} the network. For instance, power is transferred efficiently in power grids when the grid is synchronized.

In this section our goal is to understand how the dynamics \emph{of} a network can impact the dynamics \emph{on} the network. Specifically, we study under what condition(s) a network can maintain its dynamics as the network's structure evolves via specialization. This ability to maintain functionality even as the network grows is observed in many systems. A prime example is the physiological network of organs within the body, all of which develop over time but maintain specific functions \cite{Plamen2015}.

The dynamics \emph{on} a network with a fixed structure can be formalized as follows.

\begin{definition}\label{def:dn}{\textbf{\emph{(Dynamical Network)}}}
Let $(X_i,d_i)$ be a complete metric space where $X_i\subseteq \mathbb{R}$ and let $(X,d_{max})$ be the complete metric space formed by giving the product space $X=\bigoplus_{i=1}^n X_i$ the metric
\[
d_{max}(\mathbf{x},\mathbf{y}) = \max_i d_i(x_i,y_i) \quad \text{ where } \quad \mathbf{x},\mathbf{y} \in X \quad \text{ and } \quad x_i,y_i \in X_i.
\]
Let $F:X \to X$  be a $C^1(X)$ map with $i^{th}$ component function $F_i:X\to X_i$ given by
\[
F_i = \bigoplus_{j\in I_i}X_j\rightarrow X_i \quad \text{where} \quad I_i\subseteq N=\{1,2,\dots,n\}.
\]
The discrete-time dynamical system $(F,X)$ generated by iterating the function $F$ on $X$ is called a \emph{dynamical network}. If an initial condition $\mathbf{x}^0\in X$ is given, we define the $k^{th}$ \emph{iterate} of $\mathbf{x}^0$ as $\mathbf{x}^k=F^k(\mathbf{x}^0)$, with orbit $\{F^k(\mathbf{x}^0)\}_{k=0}^\infty=\{\mathbf{x}^0,\mathbf{x}^1,\mathbf{x}^2,\hdots\}$ in which $\mathbf{x}^k$ is the state of the network at time $k \ge 0$.
\end{definition}

The component function $F_i$ describes the dynamics of the $i^{th}$ network element that emerges from its interactions with a subset of the other network elements where there is a directed interaction between the $i^{th}$ and $j^{th}$ elements if $j\in I_i$. For the initial condition $\mathbf{x}^0\in X$ the state of the $i^{th}$ element at time $k\ge 0$ is $x^k_i=(F^k(\mathbf{x}^0))_i\in X_i$ where $X_i$ is the state space of the $i^{th}$ element. The state space $X=\bigoplus_{i=1}^n X_i$ is the collective state space of all network elements.

For simplicity in our discussion we assume that the map $F:X\rightarrow X$ is continuously differentiable and that each $X_i$ is some closed interval of real numbers (see Definition \ref{def:dn}). In fact, all of our results in this section hold in the more general setting where $F:X\rightarrow X$ is Lipschitz continuous and $X$ is any complete metric space (see \cite{BW12,BWBook}).

The specific type of dynamics we consider here is global stability, which is observed in a number of systems including neural networks \cite{Cao2003,Cheng2006,SChena2009,MCohen1983,LTao2011}, in epidemic models \cite{Wang2008}, and is also important in the study of congestion in computer networks \cite{Alpcan2005}. In a globally stable network, which we will simply refer to as \emph{stable}, the state of the network tends towards an equilibrium irrespective of its present state. Formally, network stability is defined as follows.

\begin{definition}\textbf{(Network Stability)}
The dynamical network $(F,X)$ is \emph{globally stable} if there is an $\mathbf{x}^*\in X$ such that for any $\mathbf{x}^0\in X$
\[
\lim_{k\rightarrow\infty}d_{max}(F^k(\mathbf{x}^0),\mathbf{x}^*)=0.
\]
\end{definition}

A globally attracting equilibrium $\mathbf{x}^*$ in a network is presumably a state at or near which a network can efficiently carry out its function. Whether or not this equilibrium remains stable over time depends on a number of factors including external influences such as changes in the network's environment, etc. However, not only can outside influences destabilize a network but potentially the network's own growth can have this effect. As mentioned in the introduction, an important example of this type of destabilization is cancer, which is the abnormal growth of cells that can lead to significant problems in biological networks.

Here we consider how growth via specialization can effect the stability of a network. Specifically, we consider the specialization of a class of dynamical networks $(F,X)$ with components of the form
\begin{equation}\label{eq:netclass}
F_i(\mathbf{x})=\sum_{j=1}^n A_{ij}f_{ij}(x_j), \quad \text{for} \quad i\in N=\{1,2,\dots,n\}
\end{equation}
where the matrix $A\in\{0,1\}^{n\times n}$ is a matrix of zeros and ones and each $f_{ij}:X_j\rightarrow\mathbb{R}$ are $C^1(X_j)$ functions with bounded derivatives for all $i,j\in N$. We refer to the graph $G$ with adjacency matrix $A=\mathcal{A}(G)$ in Equation \eqref{eq:netclass} as the \emph{graph of interactions} of $(F,X)$.

It is worth noting that we could absorb the matrix $A$ into the functions $f_{ij}$. However, we use this matrix as a means of specializing the dynamical network $(F,X)$ in a way analogous to the method of specialization described in Section \ref{sec2} for graphs. This is possible as there is a one-to-one relation between a graph $G=(V,E,\omega)$ and its weighted adjacency matrix $\mathcal{A}(G)\in\mathbb{R}^{n\times n}$. Therefore, we can use the notion of a graph specialization to define a matrix specialization.

\begin{definition}\label{def:matspec}\textbf{(Matrix Specialization)}
Let $A\in\mathbb{R}^{n\times n}$ and $B\subseteq N=\{1,2,\dots,n\}$ be a base. Then the \emph{specialization} of $A$ over $B$ is the matrix
\[
\underline{A}=\mathcal{S}_B(A)=\mathcal{A}(\mathcal{S}_B(G))\in\mathbb{R}^{m\times m}
\]
where $A=\mathcal{A}(G)$. Additionally, suppose $G=(V,E,\omega)$ and $\mathcal{S}_B(G)=(\mathcal{V},\mathcal{E},\mu)$. For $M=\{1,2,\dots,m\}$ let $\tau:M\rightarrow N$ where $\tau(i)=j$ if $\nu_i\in\mathcal{V}$ is a copy of $v_j\in V$. We refer to the function $\tau$ as the \emph{origination function} of this specialization.
\end{definition}

Note that we are slightly abusing notation in Definition \ref{def:matspec} by letting the base $B$ be both a subset of $N=\{1,2,\dots,n\}$ and a subset of $V=\{v_1,v_2,\dots,v_n\}$. The idea is that $B\subseteq N$ is a set of indices over which the matrix $A$ is specialized, which in turn is the set that indexes the vertices $B\subseteq V$ over which $G=(V,E,\omega)$ is specialized. Roughly speaking, to specialize the matrix $A\in\mathbb{R}^{n\times n}$ we specialize the associated graph $G$ with adjacency matrix $A$. The adjacency matrix of the resulting specialized graph is the specialization of the matrix $A$. This allows us to specialize dynamical networks as follows.

\begin{definition}\label{def:specdyn}\textbf{(Specializations of Dynamical Networks)}
Suppose $(F,X)$ is a dynamical network given by Equation \eqref{eq:netclass}. If $B\subseteq \{1,2,\dots,n\}$ then the \emph{specialization} of $(F,X)$ over the base $B$ is the dynamical network $(G,Y)$ with components
\[
G_i(\mathbf{y})=\sum_{j=1}^m \underline{A}_{ij}f_{\tau(i)\tau(j)}(y_j), \quad \text{for} \quad i\in M=\{1,2,\dots,m\}
\]
where $\underline{A}=\mathcal{S}_B(A)$, $Y=\bigoplus_{j=1}^m Y_j$ with $Y_j=X_{\tau(j)}$, $y_j=x_{\tau(j)}$, and $\tau:M\rightarrow N$ is the origination function of the specialization.
\end{definition}

To give an example of a well-studied class of dynamical networks that can be specialized according to Definition \ref{def:specdyn} consider the class of dynamical networks known as discrete-time recurrent neural networks. The stability of such systems has been the focus of a large number of studies \cite{Cheng2006,SChena2009,MCohen1983,LTao2011}, especially time-delayed versions of these systems \cite{LWSL07}, both of which have the form given in Equation \eqref{eq:netclass}.

\begin{definition}\label{def:DRNN}\textbf{(Discrete-Time Recurrent Neural Network)}
A \emph{discrete-time recurrent neural network} $(R,X)$ is a dynamical network of the form
\begin{equation}\label{eq:DRNN}
R_i(\mathbf{x})=a_i x_i+\sum_{j=1,j\neq i}^n W_{ij}g_j(x_j)+c_i, \ \ i\in N=\{1,2,\dots,n\}.
\end{equation}
where the component $R_i$ describes the dynamics of the $i^{th}$ neuron in which each $|a_i|<1$ are the \emph{feedback coefficients}, the matrix $W\in\mathbb{R}^{n\times n}$ with $W_{ii}=0$ is the \emph{connection weight matrix}, and the constants $c_i$ are the \emph{exogenous inputs} to the network.
\end{definition}

In the general theory of discrete-time recurrent neural networks (DRNN) the functions $g_j:\mathbb{R}\rightarrow\mathbb{R}$ are typically assumed to be differentiable, monotonically increasing, and bounded. Here, for the sake of illustration, we make the additional assumption that each $g_j$ has a bounded derivative.


In the following example we consider the specialization of a DRNN and the dynamic consequences of this specialization.

\begin{figure}
\begin{center}
\begin{tabular}{c}
    \begin{overpic}[scale=.25]{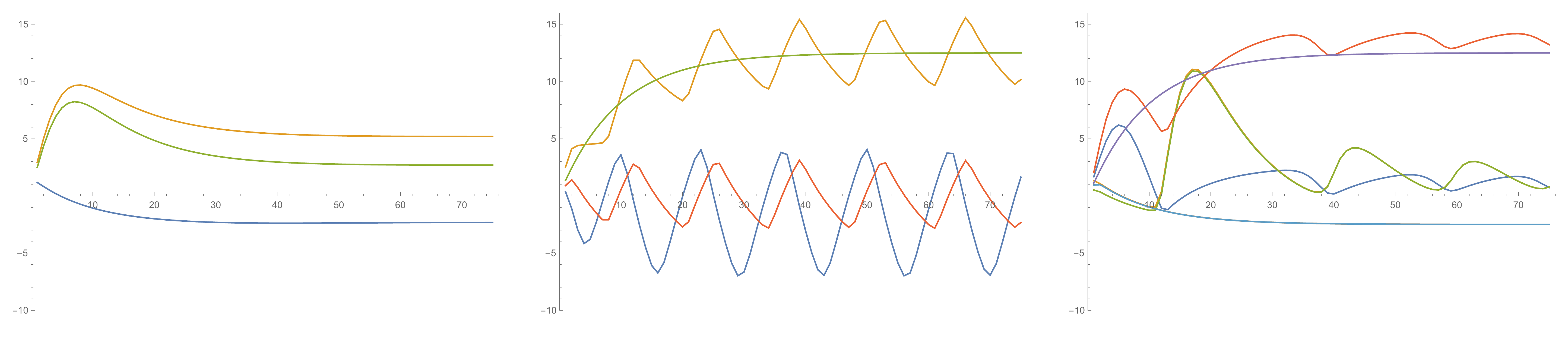}
    \put(2,-2){\small{Stable: Globally Attracting Fixed Point}}
    \put(15,1){$(R,\mathbb{R}^3)$}
    \put(43,-2){\small{Unstable: Oscillating}}
    \put(48,1){$(S,\mathbb{R}^4)$}
    \put(75,-2){\small{Unstable: Synchronizing}}
    \put(82,1){$(T,\mathbb{R}^7)$}
    \end{overpic}\\\\\\
    \begin{overpic}[scale=.18]{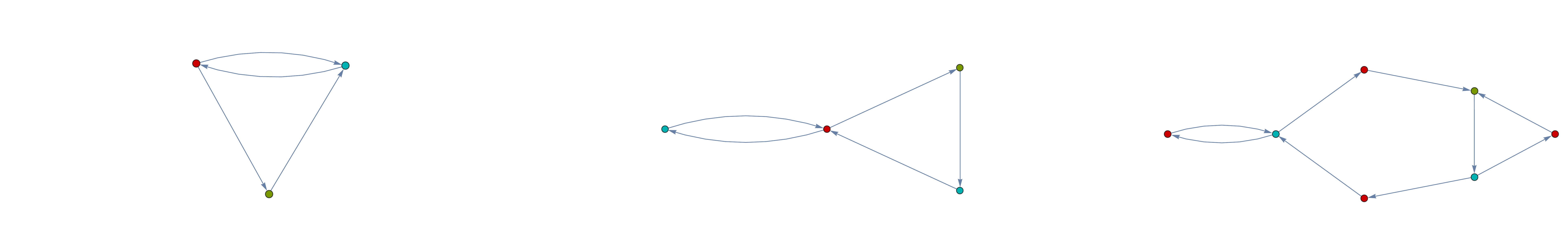}
    \put(15.6,-3){$G_1$}
    \put(21.6,13){$x_3$}
    \put(15.6,1.1){$x_2$}
    \put(11,13){$x_1$}

    \put(49.75,-3){$G_2$}
    \put(41.1,9.5){$y_1$}
    \put(60,2){$y_2$}
    \put(60,13){$y_3$}
    \put(51.25,9.5){$y_4$}

    \put(85.5,-3){$G_3$}
    \put(80,8.75){$z_1$}
    \put(92.75,3){$z_2$}
    \put(92.75,11.75){$z_3$}
    \put(73,8.75){$z_4$}
    \put(85.75,13){$z_5$}
    \put(85.75,1.5){$z_6$}
    \put(98.5,8.75){$z_7$}
    \end{overpic}\\\\
\end{tabular}
\caption{Left: The stable dynamics of the network $(R,\mathbb{R}^3)$ considered in Example \ref{ex:DRNN} is shown (top) together with its graph of interactions $G_1$ (bottom). Center: The unstable periodic dynamics of the specialization $(S,\mathbb{R}^4)$ of $(R,\mathbb{R}^3)$ over the set $B=\{1,2\}$ is shown (top) with its graph of interactions $G_2$ (bottom). Right: The unstable but partially synchronizing dynamics of the specialization $(T,\mathbb{R}^7)$ of $(S,\mathbb{R}^4)$ over the set $C=\{1,2,3\}$ is shown (top) together with its graph of interactions $G_3$. In each of $G_1$, $G_2$, and $G_3$ loops are omitted and vertex colors indicate from which element of the previous network an element was specialized.}\label{Fig:Spec1}
\end{center}
\end{figure}

\begin{example}\label{ex:DRNN}\textbf{(Destabilization via Specialization)}
Consider the dynamical network $(R,\mathbb{R}^3)$ given by
\begin{equation}\label{ex:system}
R\left(\begin{bmatrix}
x_1 \\
x_2 \\
x_3
\end{bmatrix}\right)
=
\left[
\begin{array}{c}
\frac{9}{10}x_1-2\tanh(x_3)+\frac{7}{4}\\ [1pt]
\frac{9}{10}x_2+\tanh(x_1)+\frac{5}{4}\\ [1pt]
\frac{9}{10}x_3+\tanh(x_1)+\tanh(x_2)+\frac{2}{4}
\end{array}\right],
\end{equation}
which is the DRNN in which
\[
f_{ij}(x)=
\begin{cases}
g_i(x)=\tanh(x)  &\text{if} \ i \neq j\\
a_i x +c_i &\text{otherwise}
\end{cases}
\]
where $a_i=\frac{9}{10}$ for $i=1,2,3$; $c_1=\frac{7}{4}$, $c_2=\frac{5}{4}$, $c_3=\frac{2}{4}$; with the connection weight matrix
\[
W=
\left[\begin{array}{ccc}
0&0&-2\\
1&0&0\\
1&1&0
\end{array}\right]
\]
and $A=W+\frac{9}{10}I$. We choose the function $g_i(x)=\tanh(x)$ as this is a standard activation function used to model neural interactions in network science. The network's graph of interactions $G_1$ is shown in Figure \ref{Fig:Spec1} (bottom left). As is shown in Figure \ref{Fig:Spec1} (top left) the dynamical network $(R,\mathbb{R}^3)$ is stable with globally attracting fixed point $\mathbf{x}^*=(-2.49,2.65,5.05)$.

If the network $(R,\mathbb{R}^3)$ is specialized over the base $B=\{1,2\}$ the result is the network $(S,\mathbb{R}^4)$ given by
\begin{equation*}
S\left(\begin{bmatrix}
y_1 \\
y_2 \\
y_3 \\
y_4
\end{bmatrix}\right)
=
\left[
\begin{array}{c}
\frac{9}{10}y_1+\tanh(y_4)+\frac{2}{4}\\ [1pt]
\frac{9}{10}y_2+\tanh(y_3)+\frac{2}{4}\\ [1pt]
\frac{9}{10}y_3+\tanh(y_4)+\frac{5}{4}\\ [1pt]
\frac{9}{10}y_4-2\tanh(y_1)-2\tanh(y_2)+\frac{7}{4}\\
\end{array}\right].
\end{equation*}
The network's graph of interaction $G_2=\mathcal{S}_B(G_1)$ is shown in Figure \ref{Fig:Spec1} (top center) where the vertex colors indicate from which network element the elements of $G_2$ were specialized. Although the dynamics of the original network $(R,\mathbb{R}^3)$ is stable, the dynamics of its specialization $(S,\mathbb{R}^4)$ oscillates periodically as can be seen in Figure \ref{Fig:Spec1} (bottom center). That is, specialization does not, at least in general, preserve stability.

The network can again be specialized over the base $C=\{1,2,3\}$ which results in the network of seven elements $(T,\mathbb{R}^7)$ given by
\begin{equation*}
T\left(\begin{bmatrix}
z_1 \\
z_2 \\
z_3 \\
z_4 \\
z_5 \\
z_6 \\
z_7
\end{bmatrix}\right)
=
\left[
\begin{array}{c}
\frac{9}{10}z_1+\tanh(z_4)+\tanh(z_6)+\frac{2}{4}\\ [1pt]
\frac{9}{10}z_2+\tanh(z_3)+\frac{2}{4}\\ [1pt]
\frac{9}{10}z_3+\tanh(z_5)+\tanh(z_7)+\frac{5}{4}\\ [1pt]
\frac{9}{10}z_4-2\tanh(z_1)+\frac{7}{4}\\ [1pt]
\frac{9}{10}z_5-2\tanh(z_1)+\frac{7}{4}\\ [1pt]
\frac{9}{10}z_6-2\tanh(z_2)+\frac{7}{4}\\ [1pt]
\frac{9}{10}z_7-2\tanh(z_2)+\frac{7}{4}
\end{array}\right].
\end{equation*}
The network's graph of interaction $G_3=\mathcal{S}_C(G_2)$ is shown in Figure \ref{Fig:Spec1} (bottom right) where, as before, the vertex colors indicate from which network element the elements of $G_3$ were specialized. Here specialization has again altered the dynamics of the network as $(T,\mathbb{R}^{7})$ has unstable but now synchronizing dynamics. Specifically, elements $z_4$ and $z_6$ and also elements $z_5$ and $z_7$ synchronize irrespective of the network's initial condition (see Figure \ref{Fig:Spec1} top right).
\end{example}

In previous studies of dynamical networks the goal has been to determine under what condition(s) a given network has stable dynamics (see for instance the references in \cite{LWSL07}). Here we consider a different but related question which is, under what condition(s) does a dynamical network with an evolving structure of interactions maintain its stability as it evolves.

As a partial answer to this general question, we use the following notion of a stability matrix, which allows us to study the change or lack of change in a network's stability after it has been specialized.

\begin{definition} \textbf{(Stability Matrix)}
For the dynamical network $(F,X)$ suppose there exist finite constants
\begin{equation}\label{eq:stability}
\Lambda_{ij}=\sup_{\mathbf{x}\in X}\left|\frac{\partial F_i}{\partial x_j}(\mathbf{x})\right|<\infty \ \text{for all} \ i,j\in N=\{1,2,\dots,n\}.
\end{equation}
Then we call the matrix $\Lambda\in\mathbb{R}^{n\times n}$ the \emph{stability matrix} of $(F,X)$.
\end{definition}

The stability matrix $\Lambda$ can be thought of as a global linearization of the typically nonlinear dynamical network $(F,X)$. The following result states that if the eigenvalues of the matrix $\Lambda$ lie within the unit circle then the dynamical network $(F,X)$ is stable, the proof of which can be found in \cite{BW12}.

\begin{theorem}\label{stability} \textbf{(Network Stability)}
Suppose $\Lambda$ is the stability matrix of the dynamical network $(F,X)$. If $\rho(\Lambda)<1$ then the dynamical network $(F,X)$ is stable.
\end{theorem}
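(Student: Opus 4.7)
The plan is to show that $F$ is a contraction with respect to a carefully chosen metric that is equivalent to $d_{max}$, and then apply the Banach fixed point theorem to obtain a globally attracting fixed point. Since each $X_i$ is a closed interval of $\mathbb{R}$, the product $X$ is convex, so for any $\mathbf{x},\mathbf{y}\in X$ the line segment between them lies in $X$ and we can apply the mean value theorem componentwise. This will yield the coordinatewise Lipschitz bound
\[
|F_i(\mathbf{x})-F_i(\mathbf{y})|\;\leq\;\sum_{j=1}^n\Lambda_{ij}\,d_j(x_j,y_j),\qquad i\in N,
\]
directly from the definition of $\Lambda_{ij}$ as a supremum of $|\partial F_i/\partial x_j|$. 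This is the key analytic step linking the nonlinear map $F$ to the nonnegative matrix $\Lambda$.

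Next I would use the hypothesis $\rho(\Lambda)<1$ to produce a strictly positive vector $\mathbf{v}=(v_1,\dots,v_n)^T$ and a constant $c\in(\rho(\Lambda),1)$ with $\Lambda\mathbf{v}\leq c\mathbf{v}$ componentwise. When $\Lambda$ is irreducible, Perron--Frobenius supplies $\mathbf{v}$ as the Perron eigenvector directly; in general one perturbs $\Lambda$ slightly to the irreducible matrix $\Lambda+\varepsilon J$ (with $J$ the all-ones matrix and $\varepsilon>0$ small enough that the spectral radius remains below $1$) and extracts its Perron vector, which then satisfies $\Lambda\mathbf{v}<(c-\delta)\mathbf{v}$ for suitable $\delta>0$. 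This is really the only delicate point in the argument; the rest is essentially mechanical, but without irreducibility one cannot simply quote Perron--Frobenius, so the perturbation trick (or equivalent use of the Collatz--Wielandt characterization) is where care is needed.

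With such a $\mathbf{v}$ in hand, define the equivalent metric
\[
\tilde{d}(\mathbf{x},\mathbf{y})=\max_{i\in N}\frac{d_i(x_i,y_i)}{v_i}.
\]
Since $(X,d_{max})$ is complete and $\tilde{d}$ is bi-Lipschitz equivalent to $d_{max}$ (with constants $\min_i v_i$ and $\max_i v_i$), the space $(X,\tilde{d})$ is also complete. Combining the Lipschitz estimate from Step 1 with the eigenvector inequality gives
\[
\tilde{d}(F(\mathbf{x}),F(\mathbf{y}))=\max_i\frac{|F_i(\mathbf{x})-F_i(\mathbf{y})|}{v_i}\leq\max_i\frac{1}{v_i}\sum_j\Lambda_{ij}v_j\cdot\frac{d_j(x_j,y_j)}{v_j}\leq c\,\tilde{d}(\mathbf{x},\mathbf{y}),
\]
so $F$ is a strict contraction on $(X,\tilde{d})$.

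Finally, I would invoke the Banach fixed point theorem to conclude that $F$ has a unique fixed point $\mathbf{x}^*\in X$ and that $\tilde{d}(F^k(\mathbf{x}^0),\mathbf{x}^*)\to 0$ for every $\mathbf{x}^0\in X$. Equivalence of $\tilde{d}$ and $d_{max}$ translates this into $\lim_{k\to\infty}d_{max}(F^k(\mathbf{x}^0),\mathbf{x}^*)=0$, which is exactly the definition of global stability. The main obstacle, as noted, lies in producing the positive vector $\mathbf{v}$ without assuming irreducibility of $\Lambda$; once this is handled via the perturbation argument the chain of implications from $\rho(\Lambda)<1$ to contraction to stability is straightforward.
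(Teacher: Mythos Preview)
Your argument is correct and is the standard route to this result: the mean value theorem gives the componentwise Lipschitz estimate controlled by $\Lambda$, the perturbation $\Lambda+\varepsilon J$ supplies a strictly positive vector $\mathbf{v}$ with $\Lambda\mathbf{v}\le c\mathbf{v}$ for some $c<1$ even without irreducibility, and the weighted $\max$-metric turns $F$ into a Banach contraction. One small remark: you implicitly take $d_i(x_i,y_i)=|x_i-y_i|$, which is consistent with the paper's simplifying assumption that each $X_i$ is a closed real interval and $F\in C^1$.

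As for comparison, the paper does not actually prove this theorem; it states the result and defers the proof to \cite{BW12}. The argument there is essentially the one you outline (Lipschitz bound via $\Lambda$, positive weight vector from Perron--Frobenius type considerations, contraction in a weighted metric), so your proposal matches the intended approach.
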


As it will be helpful in what follows, if $\Lambda$ is the stability matrix of the dynamical network $(F,X)$ we let $\rho(F)=\rho(\Lambda)$ be the \emph{spectral radius} of the network.

A key feature of the stability described in Theorem \ref{stability} is that it is not the standard notion of stability. In \cite{BW13} it is shown that if $\rho(F)<1$ then the dynamical network $(F,X)$ is not only stable but remains stable even if time-delays are introduced into the network's interactions (see
\cite{BW13}). Since the addition of time-delays can have a destabilizing effect on a network, the type of stability considered in Theorem \ref{stability} is a stronger version of the standard notion of stability. To distinguish between these two types of stability, the stability described in Theorem \ref{stability} is given the following name (see \cite{BW13} for more details).

\begin{definition}\label{def:intrinsic} \textbf{(Intrinsic Stability)}
For the dynamical network $(F,X)$, if the spectral radius $\rho(F)<1$ then we say that this network is \emph{intrinsically stable}.
\end{definition}

In the following example we consider the specialization of an intrinsically stable dynamical network.

\begin{figure}
\begin{center}
\begin{tabular}{c}
    \begin{overpic}[scale=.25]{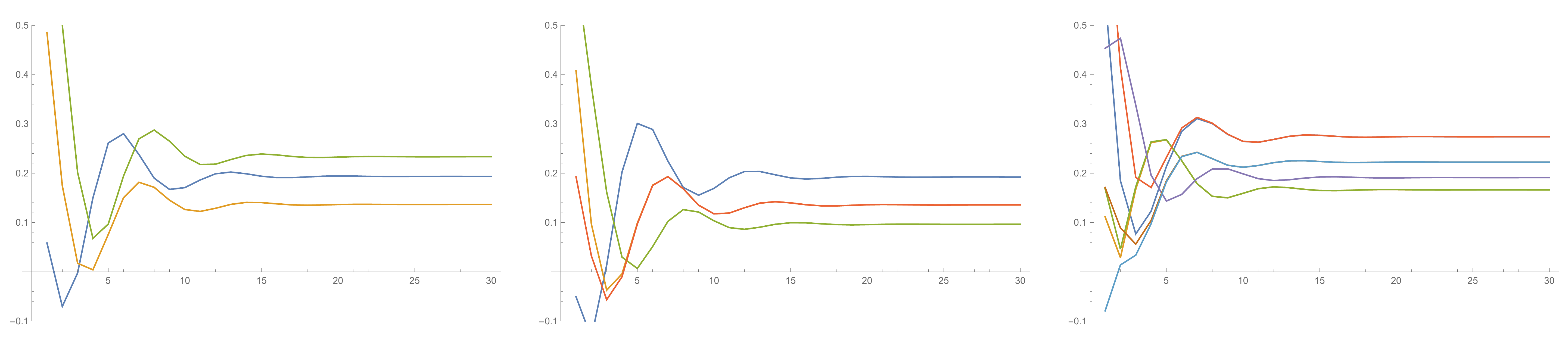}
    \put(10,-2){\small{Intrinsically Stable}}
    \put(14.5,1){$(\tilde{R},\mathbb{R}^3)$}
    \put(43.5,-2){\small{Intrinsically Stable}}
    \put(48,1){$(\tilde{S},\mathbb{R}^4)$}
    \put(77,-2){\small{Intrinsically Stable}}
    \put(82,1){$(\tilde{T},\mathbb{R}^7)$}
    \end{overpic}\\\\
    \begin{overpic}[scale=.18]{NewSpecFig.pdf}
    \put(15.6,-3){$H_1$}
    \put(21.6,13){$x_3$}
    \put(15.6,1.1){$x_2$}
    \put(11,13){$x_1$}

    \put(49.75,-3){$H_2$}
    \put(41.1,9.5){$y_1$}
    \put(60,2){$y_2$}
    \put(60,13){$y_3$}
    \put(51.25,9.5){$y_4$}

    \put(85.5,-3){$H_3$}
    \put(80,8.75){$z_1$}
    \put(92.75,3){$z_2$}
    \put(92.75,11.75){$z_3$}
    \put(73,8.75){$z_4$}
    \put(85.75,13){$z_5$}
    \put(85.75,1.5){$z_6$}
    \put(98.5,8.75){$z_7$}
    \end{overpic}\\\\
\end{tabular}
\caption{Left: The stable dynamics of the network $(\tilde{R},\mathbb{R}^3)$ considered in Example \ref{ex:DRNN} is shown (top) together with its graph of interactions $H_1$ (bottom). Center: The unstable periodic dynamics of the specialization $(\tilde{S},\mathbb{R}^4)$ of $(\tilde{R},\mathbb{R}^3)$ over the set $B=\{1,2\}$ is shown (top) with its graph of interactions $H_2$ (bottom). Right: The unstable but partially synchronizing dynamics of the specialization $(\tilde{T},\mathbb{R}^7)$ of $(\tilde{S},\mathbb{R}^4)$ over the set $C=\{1,2,3\}$ is shown (top) together with its graph of interactions $H_3$. In each of $H_1$, $H_2$, and $H_3$ loops are omitted and vertex colors indicate from which element of the previous network an element was specialized.}\label{Fig:Spec2}
\end{center}
\end{figure}

\begin{example}\label{ex:DRNN}\textbf{(Specialization of an Intrinsically Stable Network)}
Consider the dynamical network $(\tilde{R},\mathbb{R}^3)$, which has the same topology as the network $(R,\mathbb{R}^3)$ in Example \ref{ex:DRNN}, given by
\begin{equation}\label{ex:system}
\tilde{R}\left(\begin{bmatrix}
x_1 \\
x_2 \\
x_3
\end{bmatrix}\right)
=
\left[
\begin{array}{c}
\frac{3}{10}x_1-\frac{1}{2}\tanh(x_3)+\frac{1}{4}\\ [1pt]
\frac{3}{10}x_2+\frac{1}{2}\tanh(x_1)\\ [1pt]
\frac{3}{10}x_3+\frac{1}{2}\tanh(x_1)+\frac{1}{2}\tanh(x_2)
\end{array}\right],
\end{equation}
which is the DRNN in which
\[
f_{ij}(x)=
\begin{cases}
g_i(x)=\tanh(x)  &\text{if} \ i \neq j\\
a_i x +c_i &\text{otherwise}
\end{cases}
\]
where $a_i=\frac{3}{10}$ for $i=1,2,3$; $c_1=\frac{1}{4}$, $c_2=c_3=0$; with the connection weight matrix
\[
\tilde{W}=
\left[\begin{array}{ccc}
0&0&-\frac{1}{2}\\ [1pt]
\frac{1}{2}&0&0\\ [1pt]
\frac{1}{2}&\frac{1}{2}&0
\end{array}\right]
\]
and $A=\tilde{W}+\frac{3}{10}I$. The graph of interactions $H_1$ corresponding to this dynamical network is shown in Figure \ref{Fig:Spec2} (bottom left). The dynamical network $(\tilde{R},\mathbb{R}^3)$ is stable with globally attracting fixed point $\mathbf{x}^*=(.19,.13,.23)$, as can be seen in Figure \ref{Fig:Spec1} (top left). In fact, the network is intrinsically stable with spectral radius $\rho(S)=.962<1$.

If we specialize this network over the base $B=\{1,2\}$, similar to what is done in the previous example, the result is the network $(\tilde{S},\mathbb{R}^4)$ given by
\begin{equation*}
\tilde{S}\left(\begin{bmatrix}
y_1 \\
y_2 \\
y_3 \\
y_4
\end{bmatrix}\right)
=
\left[
\begin{array}{c}
\frac{3}{10}y_1+\frac{1}{2}\tanh(y_4)\\ [1pt]
\frac{3}{10}y_2+\frac{1}{2}\tanh(y_3)\\ [1pt]
\frac{3}{10}y_3+\frac{1}{2}\tanh(y_4)\\ [1pt]
\frac{3}{10}y_4-\frac{1}{2}\tanh(y_1)-\frac{1}{2}\tanh(y_2)+\frac{1}{4}\\
\end{array}\right].
\end{equation*}
Here the network's graph of interaction $H_2=\mathcal{S}_B(H_1)$ is shown in Figure \ref{Fig:Spec1} (bottom center). The vertex colors again indicate from which network element the resulting elements of $H_2$ are specialized. Notably, in contrast to the previous example where specialization destabilized the network's dynamics, the network $(\tilde{S},\mathbb{R}^4)$ is once again intrinsically stable with $\rho(\tilde{S})=.962$ (see Figure \ref{Fig:Spec1}, center bottom).

If the network $(\tilde{S},\mathbb{R}^4)$ is then specialized over the base $C=\{1,3,4\}$ the result in the specialization $(\tilde{T},\mathbb{R}^7)$ given by
\begin{equation*}
\tilde{T}\left(\begin{bmatrix}
z_1 \\
z_2 \\
z_3 \\
z_4 \\
z_5 \\
z_6 \\
z_7
\end{bmatrix}\right)
=
\left[
\begin{array}{c}
\frac{3}{10}z_1+\frac{1}{2}\tanh(z_4)+\frac{1}{2}\tanh(z_6)\\ [1pt]
\frac{3}{10}z_2+\frac{1}{2}\tanh(z_3)\\ [1pt]
\frac{3}{10}z_3+\frac{1}{2}\tanh(z_5)+\frac{1}{2}\tanh(z_7)\\ [1pt]
\frac{3}{10}z_4-\frac{1}{2}\tanh(z_1)+\frac{1}{4}\\ [1pt]
\frac{3}{10}z_5-\frac{1}{2}\tanh(z_1)+\frac{1}{4}\\ [1pt]
\frac{3}{10}z_6-\frac{1}{2}\tanh(z_2)+\frac{1}{4}\\ [1pt]
\frac{3}{10}z_7-\frac{1}{2}\tanh(z_2)+\frac{1}{4}
\end{array}\right].
\end{equation*}
The network's graph of interaction $H_3=\mathcal{S}_C(H_2)$ is shown in Figure \ref{Fig:Spec1} (bottom right). Again one can show that $\rho(\tilde{T})=.962$ so that
\begin{equation}\label{eq:specsame}
\rho(\tilde{R})=\rho(\tilde{S})=\rho(\tilde{T})=.962<1.
\end{equation}
Hence, $(\tilde{T},\mathbb{R}^7)$ is intrinsically stable (see Figure \ref{Fig:Spec1} right bottom). Interestingly, as in the previous example elements $z_4$ and $z_6$ and also elements $z_5$ and $z_7$ synchronize irrespective of the network's initial condition.
\end{example}

Perhaps surprisingly the dynamical network and each of its specializations in the previous example have the same spectral radius. This is not a coincidence but rather a consequence of the similarity between how graph specializations and specializations of dynamical networks are defined. In the proof of the following result we show that if a dynamical network $(F,X)$ is specialized, the resulting dynamical network $(G,Y)$ will have the same spectral radius, i.e. $\rho(F)=\rho(G)$.
For example, each of the networks in Example \ref{ex:DRNN} have the same spectral radius
\[
\rho(R)=\rho(S)=\rho(T)=2.669>1.
\]
Consequently, the following holds.

\begin{theorem}\label{thm:evostability} \textbf{(Stability of Intrinsically Stable Networks Under Specialization)}
Suppose the dynamical network $(F,X)$ given by equation \eqref{eq:netclass} is intrinsically stable. Then for any $B\subset\{1,2,\dots,n\}$ the specialization $(G,Y)$ of $(F,X)$ over $B$ is intrinsically stable.
\end{theorem}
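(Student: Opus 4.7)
The goal is to show that the specialization $(G,Y)$ satisfies $\rho(G) = \rho(\Lambda^G) < 1$, where $\Lambda^G$ denotes its stability matrix. The plan is to reduce this to a direct application of Theorem \ref{thm1}, by identifying $\Lambda^G$ with the weighted adjacency matrix of a specialization of a graph whose adjacency matrix is $\Lambda^F$, the stability matrix of $(F,X)$. From Equation \eqref{eq:netclass} and Definition \ref{def:specdyn}, and using that both $A$ and $\underline{A}$ are $\{0,1\}$-matrices together with $Y_j = X_{\tau(j)}$, a direct computation gives
\[
\Lambda^F_{ij} \;=\; A_{ij}\, L_{ij} \qquad \text{and} \qquad \Lambda^G_{ij} \;=\; \underline{A}_{ij}\, L_{\tau(i)\tau(j)}, \qquad \text{where} \qquad L_{k\ell} \;=\; \sup_{x \in X_\ell} |f_{k\ell}'(x)|.
\]

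I would then view $\Lambda^F$ as the weighted adjacency matrix of a graph $H$ on vertex set $\{1,\dots,n\}$ in which each edge $e_{ij}$ (present iff $A_{ij}=1$) carries weight $L_{ij}$. Since graph specialization inherits edge weights from the unspecialized graph (Definition \ref{def:exp}), every copy of $e_{ij}$ in $\mathcal{S}_B(H)$ still carries weight $L_{ij}$; in the notation of Definition \ref{def:matspec}, the edge from the $j$th vertex of $\mathcal{S}_B(H)$ to the $i$th (which exists iff $\underline{A}_{ij}=1$) therefore has weight $L_{\tau(i)\tau(j)}$. Consequently $\mathcal{A}(\mathcal{S}_B(H)) = \Lambda^G$.

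Theorem \ref{thm1} applied to $H$ with base $B$ then yields $\sigma(\mathcal{S}_B(H)) = \sigma(H) \cup \bigcup_{k=1}^{m}\sigma(C_k)^{n_k-1}$, where $C_1,\dots,C_m$ are the strongly connected components of $H|_{\bar{B}}$. Since each $\mathcal{A}(C_k)$ is a principal submatrix of the non-negative matrix $\Lambda^F$, the Perron--Frobenius theorem gives $\rho(C_k) \leq \rho(H) = \rho(\Lambda^F)$; this is precisely the reasoning behind Corollary \ref{cor1} and uses only non-negativity, not strict positivity, of the weights. Therefore
\[
\rho(\Lambda^G) \;=\; \rho(\mathcal{S}_B(H)) \;=\; \rho(H) \;=\; \rho(\Lambda^F) \;=\; \rho(F) \;<\; 1,
\]
which establishes intrinsic stability of $(G,Y)$.

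The main obstacle is the weighted identification $\mathcal{A}(\mathcal{S}_B(H)) = \Lambda^G$: one must verify that even when some $L_{ij}$ happen to vanish while $A_{ij}=1$, the relevant graph structure of $H$ is still the one determined by $A$ (so the strongly connected components $C_k$ of $H|_{\bar{B}}$ coincide with those of the graph of interactions of $(F,X)$) and that this structure is consistent with the weight-inheritance rule built into Definition \ref{def:matspec}. Once this bookkeeping is confirmed, the remainder of the argument is immediate from Theorem \ref{thm1}, and in fact yields the stronger conclusion $\rho(G) = \rho(F)$ observed numerically in Example \ref{ex:DRNN} and Equation \eqref{eq:specsame}.
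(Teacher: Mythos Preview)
Your proposal is correct and follows essentially the same route as the paper's proof: both compute the stability matrices of $(F,X)$ and $(G,Y)$, identify the latter as the matrix specialization $\mathcal{S}_B(\Lambda^F)$ of the former, and then invoke Corollary~\ref{cor1} (equivalently Theorem~\ref{thm1} plus the Perron--Frobenius bound on principal submatrices of a nonnegative matrix) to conclude $\rho(\Lambda^G)=\rho(\Lambda^F)<1$. The paper simply asserts the identity $\mathcal{S}_B(\Lambda)=\underline{\Lambda}$ and cites Corollary~\ref{cor1}, whereas you unpack the same computation and explicitly flag the zero-weight bookkeeping issue that the paper passes over silently; otherwise the arguments are identical.
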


\begin{proof}
Suppose the dynamical network $(F,X)$ with stability matrix $\Lambda$ is intrinsically stable so that, in particular, $\rho(\Lambda)<1$. From Equation \eqref{eq:netclass} it follows that
\[
\Lambda_{ij}=\sup_{\mathbf{x}\in X}\left|\frac{\partial}{\partial x_j}A_{ij}f_{ij}(x_j)\right|= A_{ij}\sup_{x_j\in X_j}\left|\frac{\partial f_{ij}(x_j)}{\partial x_j}\right|
\]
since $A_{ij}\in\{0,1\}$. If $(G,Y)$ is the specialized version of $(F,X)$ over $S$ with stability matrix $\underline{\Lambda}$ then similarly
\[
\underline{\Lambda}_{ij}=\sup_{\mathbf{y}\in Y}\left|\frac{\partial}{\partial y_j}\underline{A}_{ij}f_{\tau(i)\tau(j)}(y_j)\right|=
\underline{A}_{ij}\sup_{y_j\in Y_j}\left|\frac{\partial f_{\tau(i)\tau(j)}(y_j)}{\partial y_j}\right|=
\mathcal{S}_B(A)_{ij}\sup_{y_j\in Y_j}\left|\frac{\partial f_{\tau(i)\tau(j)}(y_j)}{\partial y_j}\right|
\]
since $\underline{A}=\mathcal{S}_B(A)\in\{0,1\}$ where $\tau$ is the origination function of this specialization.

Specializing the stability matrix $\Lambda$ of $(F,X)$ over $B$ results in the matrix
\[
\mathcal{S}_B(\Lambda)_{ij}=\mathcal{S}_B(A)_{ij}\sup_{x_{\tau(j)}\in X_{\tau(j)}}\left|\frac{\partial f_{\tau(i)\tau(j)}(x_{\tau(j)})}{\partial x_{\tau(j)}}\right|=
\mathcal{S}_B(A)_{ij}\sup_{y_j\in Y_j}\left|\frac{\partial f_{\tau(i)\tau(j)}(y_j)}{\partial y_j}\right|
\]
since $y_j=x_{\tau(j)}$, $Y_j=X_{\tau(j)}$. Hence, $\mathcal{S}_B(\Lambda)=\underline{\Lambda}$ and Corollary \ref{cor1} together with Definition \ref{def:matspec} implies that $\rho(\underline{\Lambda})=\rho(\Lambda)<1$ since $\Lambda$ is a nonnegative matrix. Thus, $(G,Y)$ is intrinsically stable.
\end{proof}

Theorem \ref{thm:evostability} is based on the fact that the spectral radius of a graph is preserved under a graph specialization if the graph has nonnegative weights, which every stabilization matrix does (see Corollary \ref{cor1}). The importance of Theorem \ref{thm:evostability} is that it describes, as far as is known to the authors, the first general mechanism for evolving the structure of a network that preserves the network's dynamics. Currently, it is unknown whether any real-world network intrinsically stable. That is, it is unknown whether any naturally occurring network uses intrinsic stability to maintain its function as its structure evolves. However, from a design point of view, if a network is made to be intrinsically stable, rather than just stable, it is guaranteed to to remain stable so long as its topology evolves under the method of specialization described in Section \ref{sec2}. Moreover, if $(F,X)$, given by \eqref{eq:netclass}, is intrinsically stable then any specialization of this network is also intrinsically stable. Hence, any sequence of specializations of the network will result in an intrinsically stable network (cf. Example \ref{ex:DRNN} and \ref{ex:DRNN}).

\begin{corollary}\label{cor:1}
Suppose $(F,X)$ is an intrinsically stable dynamical network given by \eqref{eq:netclass}. If $\{(F^i,X^i)\}_{i=1}^k$ is any sequence of specialization of $(F,X)$ then $(F^k,X^k)$ is intrinsically stable.
\end{corollary}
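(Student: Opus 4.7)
The plan is to proceed by induction on the length $k$ of the specialization sequence. The base case $k=1$ is precisely Theorem \ref{thm:evostability}, so the real work is in the inductive step: given that $(F^{k-1},X^{k-1})$ is intrinsically stable, I want Theorem \ref{thm:evostability} to apply again to yield intrinsic stability of $(F^k,X^k)$. The only catch is that Theorem \ref{thm:evostability} requires the network being specialized to already have the form prescribed by \eqref{eq:netclass}, so the crux of the argument is a small closure lemma: the class of dynamical networks of the form \eqref{eq:netclass} is preserved under specialization.

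To establish this closure, I would start from the explicit formula in Definition \ref{def:specdyn}, which gives $G_i(\mathbf{y})=\sum_j \underline{A}_{ij}f_{\tau(i)\tau(j)}(y_j)$. Two things need to be checked. First, $\underline{A}=\mathcal{S}_B(A)$ must still be a $\{0,1\}$-matrix; this follows because $A$ is the adjacency matrix of an unweighted graph of interactions, and an unweighted graph specializes to an unweighted graph (edge weights of each component branch are inherited from the original graph, as noted immediately after Definition \ref{def:componentbranch}). Second, the functions $f_{\tau(i)\tau(j)}$ are still $C^1$ with bounded derivatives because they are just the original functions $f_{ij}$ reindexed through the origination map $\tau$; no new functions are introduced.

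With the closure lemma in hand, the inductive step is immediate. By hypothesis $(F^{k-1},X^{k-1})$ is intrinsically stable, and by iterating the closure lemma it still has the form \eqref{eq:netclass}, so Theorem \ref{thm:evostability} produces intrinsic stability of $(F^k,X^k)$. I do not expect any serious obstacle: the only place where one might slip is forgetting to check that the hypotheses of Theorem \ref{thm:evostability} continue to hold as the sequence progresses, and that is exactly what the closure observation secures. As a sanity check, one can alternatively verify directly that $\rho(F)=\rho(F^1)=\cdots=\rho(F^k)<1$ by iterating the identity $\mathcal{S}_B(\Lambda)=\underline{\Lambda}$ derived inside the proof of Theorem \ref{thm:evostability} together with Corollary \ref{cor1}, which guarantees that the spectral radius of a nonnegative matrix is invariant under specialization.
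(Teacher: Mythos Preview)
Your proposal is correct and follows the same route the paper implicitly takes: the paper states Corollary~\ref{cor:1} without a separate proof, treating it as the immediate consequence of iterating Theorem~\ref{thm:evostability}. Your inductive argument formalizes exactly this, and your explicit verification that the specialized network $(G,Y)$ given by Definition~\ref{def:specdyn} again has the form~\eqref{eq:netclass} (so that Theorem~\ref{thm:evostability} may be reapplied) is a point the paper leaves tacit but which is indeed automatic from the definition.
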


A natural and open question is whether this is the case for other types of dynamics exhibited by networks, i.e. whether stronger versions of dynamical behavior such as multistability, periodicity, and synchronization, etc. are required for a network to maintain its dynamics as it grows.

\section{Variations of Network Specialization}\label{sec5}

There are a number of variants of the specialization method that preserve both spectral and dynamic properties of a network. One of the most natural is the \emph{partial specialization} of a graph (equivalently dynamical network) in which only a subset of the component branches are specialized. This method is demonstrated in the following example

\begin{figure}
  \begin{overpic}[scale=.275]{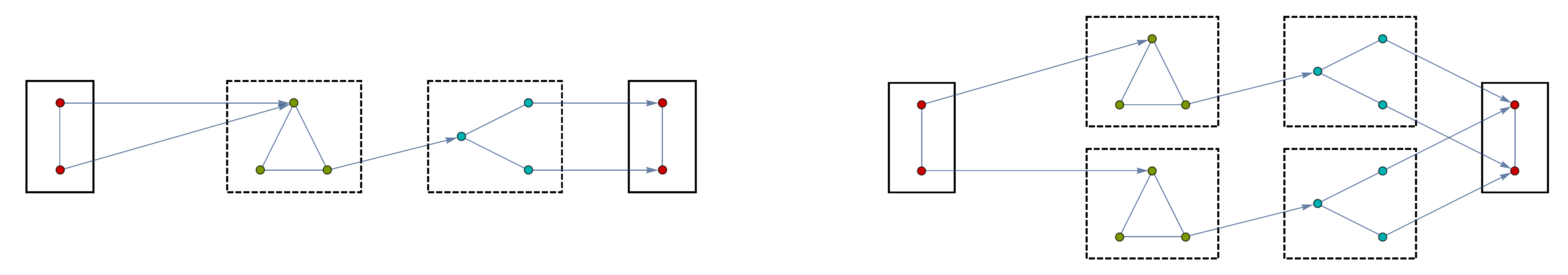}

    \put(15,-0.5){\emph{Unspecialized Graph} $G$}

    \put(3.25,3){$B$}
    \put(18.25,3){$C_1$}
    \put(31,3){$C_2$}
    \put(41.5,3){$B$}


    \put(62,-5){\emph{Partially Specialized Graph} $\mathcal{P}_B(G)$}

    \put(57.5,3){$B$}
    \put(71.75,-1.5){$C_1$}
    \put(84.4,-1.5){$C_2$}
    \put(95,3){$B$}

    \put(79,13){$\rho_1$}
    \put(79,4.5){$\rho_2$}
    \end{overpic}
    \vspace{0.5cm}
  \caption{A partial specialization $\mathcal{P}_B(G)$ of $G$ over the vertex set $B$ is shown. The partial component branches of the specialization are $\rho_1=\beta_1\cup\beta_2$ and $\rho_2=\beta_3\cup\beta_4$, where $\beta_1,\beta_2,\beta_3,\beta_4$ are the component branches in the standard specialization $\mathcal{S}_B(G)$ shown in Figure \ref{Fig:2} (right).}\label{Fig:last0}
\end{figure}

\begin{example}
Let $G$ and $B$ be the graph and base considered in Example \ref{ex:2} and shown in Figure \ref{Fig:last0} (left). In the specialization $\mathcal{S}_B(G)$ there are four component branches $\beta_1,\beta_2,\beta_3,\beta_4\in\mathcal{B}_B(G)$ used to create this graph (see Figure \ref{Fig:2} right). If we instead use the two \emph{partial component branches} $\rho_1=\beta_1\cup\beta_2$ and $\rho_2=\beta_3\cup\beta_4$ the result is the partial specialization $\mathcal{P}_B(G)$ shown in \ref{Fig:last0} (right).
\end{example}

A partial specialization $\mathcal{P}_B(G)$ of a graph $G$ over the base $B$ (equivalently dynamical network) is constructed similar to a standard specialization. First the component branches $\mathcal{B}_B(G)$ are created. Then we choose a partition of $\mathcal{B}_B(G)$, i.e. a collection $\{B_{i=1}^m\}$ such that $\mathcal{B}_B(G)$ is the disjoint union of this set. We then let $\rho_i=\cup_{\beta\in B_i}\beta$ be the $i^{th}$ \emph{partial component branch} for $i=1,\dots,m$. The \emph{partial specialization} $\mathcal{P}_B(G)$ is constructed by \emph{merging}, i.e. identifying, each vertex $v_i\in B$ in any partial branch $\rho_j$ with the same vertex $v_i$ in any other branch $\rho_k$.

We note that the notation $\mathcal{P}_B(G)$ does not uniquely define a specific partial specialization of the graph $G$ as there are typically many ways to choose a partition of the component branches $\mathcal{B}_B(G)$. It is also worth mentioning that the standard specialization method is the partial specialization where the partition we choose is the one in which every subset has a single element. Hence, a graph specialization is the largest partial specialization over a given base. Additionally, an analogous version of Theorems \ref{thm1} and \ref{thm:evostability} as well as Corollary \ref{cor1} hold when we replace the term ``specialization" with ``partial specialization" in these results.

\begin{prop}\label{prop:10}\textbf{(Spectral and Dynamic Consequences of Partial Specialization)}
Let $G=(V,E,\omega)$, $B\subseteq V$, and let $C_1,\dots,C_m$ be the strongly connected components of $G|_{\bar{B}}$. Then for any partial specialization $\mathcal{P}_B(G)$ of $G$ we have
\[
\sigma\big(\mathcal{P}_B(G)\big)=\sigma(G)\cup\sigma(C_1)^{n_1-1}\cup\sigma(C_2)^{n_2-1}\cup\dots\cup \sigma(C_m)^{n_m-1}
\]
where $n_i$ is the number of copies of $C_i$ contained in $\mathcal{P}_B(G)$. Also if $G$ has positive edge weights then
\[
\rho(\mathcal{S}_B(G))=\rho(G).
\]
Additionally, if the dynamical network $(F,X)$ given by equation \eqref{eq:netclass} is intrinsically stable, then for any $B\subset\{1,2,\dots,n\}$ the partial specialization $(G,Y)$ of $(F,X)$ over $B$ is intrinsically stable.
\end{prop}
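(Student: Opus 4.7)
The plan is to reduce each of the three assertions in Proposition \ref{prop:10} to the corresponding statement already established for full specializations (Theorem \ref{thm1}, Corollary \ref{cor1}, and Theorem \ref{thm:evostability}), using the observation that the full specialization can be performed in two stages: first a partial specialization, then a further specialization that separates the remaining merged branches.

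The key structural claim I would establish first is that $\mathcal{S}_B\bigl(\mathcal{P}_B(G)\bigr) = \mathcal{S}_B(G)$. This is essentially by construction: the component branches of $\mathcal{P}_B(G)$ with respect to $B$ are precisely the original branches $\beta \in \mathcal{B}_B(G)$ (each partial branch $\rho_j$ is a union of several full branches sharing their base endpoints), so unpacking the partial specialization over $B$ recovers exactly the full specialization of $G$. Once this is in hand, write $p_i$ for the number of copies of $C_i$ inside $\mathcal{P}_B(G)$ and apply Theorem \ref{thm1} to the graph $\mathcal{P}_B(G)$ with base $B$. The strongly connected components of $\mathcal{P}_B(G)|_{\bar B}$ are exactly these $p_i$ copies of each $C_i$, and the number of copies each of them contributes to $\mathcal{S}_B(\mathcal{P}_B(G)) = \mathcal{S}_B(G)$ sums to $n_i$. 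Collecting multiplicities yields
\[
\sigma(\mathcal{S}_B(G)) \;=\; \sigma(\mathcal{P}_B(G)) \cup \bigcup_{i=1}^m \sigma(C_i)^{n_i - p_i}.
\]
Applying Theorem \ref{thm1} a second time to $G$ itself gives $\sigma(\mathcal{S}_B(G)) = \sigma(G) \cup \bigcup_i \sigma(C_i)^{n_i - 1}$. Equating and cancelling the common multiset $\bigcup_i \sigma(C_i)^{n_i - p_i}$ yields the claimed formula $\sigma(\mathcal{P}_B(G)) = \sigma(G) \cup \bigcup_i \sigma(C_i)^{p_i - 1}$, which is exactly the statement in the proposition (with $n_i$ there playing the role of $p_i$ here).

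The spectral radius assertion then follows verbatim from the argument in Corollary \ref{cor1}: for positive edge weights each $C_i$ is a subgraph of $G$ so $\rho(C_i) \le \rho(G)$, and the spectrum formula just established forces $\rho(\mathcal{P}_B(G)) = \rho(G)$. For the intrinsic stability claim, I would mirror the proof of Theorem \ref{thm:evostability} essentially unchanged: extend the matrix specialization of Definition \ref{def:matspec} to a partial matrix specialization using the same partition of branches, verify that the stability matrix $\underline{\Lambda}$ of the partially specialized network $(G,Y)$ equals the partial matrix specialization $\mathcal{P}_B(\Lambda)$ of the stability matrix $\Lambda$ of $(F,X)$ (this is a direct entry-by-entry computation using $y_j = x_{\tau(j)}$ and $Y_j = X_{\tau(j)}$ and the chain rule, identical to the full specialization case), and then invoke the spectral-radius preservation just proved for nonnegative weights to conclude $\rho(\underline{\Lambda}) = \rho(\Lambda) < 1$.

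The main obstacle I anticipate is formal bookkeeping in the identity $\mathcal{S}_B(\mathcal{P}_B(G)) = \mathcal{S}_B(G)$ and in the multiplicity cancellation, since one must be precise about which copies of each $C_i$ live in which partial branch $\rho_j$ and how many times each of them is further split. Everything else, including the adaptation of Corollary \ref{cor1} and of the intrinsic-stability theorem, is a routine transcription of the full-specialization arguments once the spectrum formula for partial specializations is established.
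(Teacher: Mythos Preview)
Your proposal is correct but takes a genuinely different route from the paper's own argument. The paper does not pass through the identity $\mathcal{S}_B(\mathcal{P}_B(G)) = \mathcal{S}_B(G)$ and a multiset cancellation; instead it simply reruns the Schur complement computation from the proof of Theorem~\ref{thm1}. The point is that in the block form \eqref{matrix2} the only property of the matrices $Y_i$ and $W_j$ used to obtain \eqref{eq:sums} is that $\sum_i Y_i = Y$ and $\sum_j W_j = W$; they need not be single-entry matrices. For a partial specialization one groups the branches and correspondingly lets each $Y_i$, $W_j$ carry several nonzero entries, and the determinant identity $\det[\hat M - \lambda I] = \det[Z-\lambda I]^{p-1}\det[M-\lambda I]$ goes through verbatim with $p$ now the number of partial branches containing the component. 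The stepwise argument then finishes as before.

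Your approach has the advantage of being a clean black-box reduction to the already-proved Theorem~\ref{thm1}, at the cost of the bookkeeping you flag (verifying $\mathcal{S}_B(\mathcal{P}_B(G)) = \mathcal{S}_B(G)$ and tracking how the $p_i$ copies in $\mathcal{P}_B(G)$ split into the $n_i$ copies in $\mathcal{S}_B(G)$ so that the exponents sum correctly). The paper's approach avoids that bookkeeping entirely by reopening the matrix computation, which is arguably shorter since the generalization of \eqref{eq:sums} is a one-line observation. For the spectral-radius and intrinsic-stability clauses, both you and the paper proceed identically, by transcribing the proofs of Corollary~\ref{cor1} and Theorem~\ref{thm:evostability}.
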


A proof of Proposition \ref{prop:10} is given in Section \ref{appendix}.

Aside from partial specializations another class of specializations we consider is what we refer to as thinned specializations. A \emph{thinned specialization} $\mathcal{T}_B(G)$ of a graph $G$ over the base $B$ (equivalently dynamical network) is a specialization or partial specialization of $G$ in which we do not include some number of component or partial component branches in the specialized graph.

For example, the specialization of the Wikipedia graph in Figure \ref{Fig:1} is an example of a thinned specialization. The reason is that a number of component branches have been omitted in this specialization. Specifically, the component branches that begin with one color on the left and end with another color on the right are missing. The fact that these branches have been removed in this example makes sense in the context of a disambiguated network since the network segregates by topic. This type of specialization has following spectral and dynamic properties.

\begin{prop}\label{prop:11}\textbf{(Spectral and Dynamic Consequences of Tinned Specialization)}
Let $G=(V,E,\omega)$ have positive edge weights and $B\subseteq V$. Then for any \emph{thinned specialization} $\mathcal{T}_B(G)$ of $G$ we have
\[
\rho(\mathcal{T}_B(G))\leq\rho(G).
\]
Additionally, if the dynamical network $(F,X)$ given by equation \eqref{eq:netclass} is intrinsically stable, then for any $B\subset\{1,2,\dots,n\}$ the thinned specialization $(G,Y)$ of $(F,X)$ over $B$ is intrinsically stable.
\end{prop}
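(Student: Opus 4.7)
The plan is to realize any thinned specialization $\mathcal{T}_B(G)$ as a principal subgraph of some (full or partial) specialization $\mathcal{S}_B(G)$, and then invoke the standard monotonicity of the spectral radius under passage to principal submatrices of nonnegative matrices. Concretely, the thinned specialization is, by definition, obtained from a specialization (or partial specialization) by deleting an entire set of component branches; this amounts to deleting exactly those vertices of $\mathcal{S}_B(G)$ that lie in the omitted branches and do not belong to $B$. Hence, up to reindexing, $\mathcal{A}(\mathcal{T}_B(G))$ is a principal submatrix of $\mathcal{A}(\mathcal{S}_B(G))$.

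For the first assertion I would argue as follows. Since $G$ has positive edge weights, so does $\mathcal{S}_B(G)$, and its adjacency matrix is nonnegative. The standard fact (see \cite{HJ90}) that for a nonnegative matrix $M$, any principal submatrix $M'$ satisfies $\rho(M')\leq \rho(M)$, combined with Corollary \ref{cor1}, immediately gives
\[
\rho(\mathcal{T}_B(G))\;\leq\;\rho(\mathcal{S}_B(G))\;=\;\rho(G).
\]
A small bookkeeping lemma is needed to verify that the vertex set of $\mathcal{T}_B(G)$ genuinely indexes a principal submatrix of $\mathcal{A}(\mathcal{S}_B(G))$; this comes down to noting that the base $B$ is shared between the two graphs and that inside each retained (partial) component branch the induced structure is inherited unchanged from $\mathcal{S}_B(G)$.

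For the dynamical statement I would mimic the proof of Theorem \ref{thm:evostability}. Define the thinned specialization $(G,Y)$ of $(F,X)$ exactly as in Definition \ref{def:specdyn}, but with $\underline{A}=\mathcal{T}_B(A)$ and the origination function $\tau$ restricted to the retained indices. The same calculation performed in the proof of Theorem \ref{thm:evostability} shows that the stability matrix $\underline{\Lambda}$ of $(G,Y)$ equals $\mathcal{T}_B(\Lambda)$, where $\Lambda$ is the stability matrix of $(F,X)$. Since $\Lambda$ is nonnegative (entries are suprema of absolute values of partial derivatives), the first part of the proposition, applied to the graph with weighted adjacency matrix $\Lambda$, yields $\rho(\underline{\Lambda})=\rho(\mathcal{T}_B(\Lambda))\leq \rho(\Lambda)<1$, so $(G,Y)$ is intrinsically stable.

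The only genuinely non-routine step is the first one: cleanly identifying $\mathcal{T}_B(G)$ with a vertex-induced subgraph of $\mathcal{S}_B(G)$ (or of a partial specialization $\mathcal{P}_B(G)$). Once the indexing is set up so that the base vertices are shared and the retained component branches inherit their edges from $\mathcal{S}_B(G)$, everything else is an application of the nonnegative-matrix monotonicity principle together with results already established in the paper. I do not anticipate needing any new machinery beyond Corollary \ref{cor1}, Theorem \ref{thm:evostability}, and the elementary spectral bound for principal submatrices.
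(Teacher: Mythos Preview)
Your proposal is correct and follows essentially the same route as the paper: identify $\mathcal{T}_B(G)$ as a subgraph of $\mathcal{S}_B(G)$, invoke Corollary~\ref{cor1} to get $\rho(\mathcal{S}_B(G))=\rho(G)$, and use the monotonicity of the spectral radius for nonnegative matrices; for the dynamical part, both you and the paper reduce to the spectral statement by showing the stability matrix of the thinned specialization is $\mathcal{T}_B(\Lambda)$, exactly as in the proof of Theorem~\ref{thm:evostability}. One small caveat: your claim that $\mathcal{T}_B(G)$ is a \emph{vertex-induced} subgraph (principal submatrix) of $\mathcal{S}_B(G)$ can fail when an omitted branch is a trivial branch $\{v_i,e,v_j\}$ with $v_i,v_j\in B$, since then an edge between two retained base vertices is removed; the paper sidesteps this by simply saying ``subgraph'' and appealing to entrywise monotonicity ($0\le A'\le A$ entrywise $\Rightarrow \rho(A')\le\rho(A)$), which covers this case as well.
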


\begin{proof}
Suppose that $G=(V,E,\omega)$ is a graph and $B\subset V$ a base. By Corollary \ref{cor1} it follows that $\rho(\mathcal{S}_B(G))=\rho(G)$. Also, under the assumption that $G$ has positive edge weights, $\mathcal{S}_B(G)$ has positive edge weights. Thus, given that $\mathcal{T}_B(G)$ is a subgraph of $\mathcal{S}_B(G)$ then $\rho(\mathcal{T}_B(G))\leq\rho(\mathcal{S}_B(G))$ (see \cite{HJ90}). Therefore, $\rho(\mathcal{T}_B(G))\leq\rho(G)$. Equivalently, for the matrix $A\in\mathbb{R}^{n\times n}$ the spectral radius $\rho(\mathcal{T}_B(A))\leq\rho(A)$.

Using this fact, if the dynamical network $(F,X)$ is intrinsically stable then $\rho(\Lambda)<1$, where $\Lambda$ is the stability matrix of $(F,X)$. If $(H,Z)$ is the specialization of $(F,X)$ over the base $B\subset\{1,2,\dots,n\}$ then following the proof of Theorem \ref{thm:evostability}, $\underline{\Lambda}=\mathcal{S}_B(\Lambda)$ is the stability matrix of $(H,Z)$. Similarly, one can show that $\underline{\Lambda}_{\mathcal{T}}=\mathcal{T}_B(\Lambda)$ is the stability matrix of the thinned specialization $(G,Y)$. Since
\[
\rho(\underline{\Lambda}_{\mathcal{T}})\leq\rho(\Lambda)<1
\]
then $(G,Y)$ is intrinsically stable.
\end{proof}

As a consequence of Theorem \ref{thm:evostability} and Propositions \ref{prop:10} and \ref{prop:11} any sequence of specializations of a graph, whether these are standard, partial, thinned or a combination of these transformations, results in a graph whose spectral radius is bounded by the spectral radius of the original graph. Similarly, if an intrinsically stable dynamical network is sequentially specialized using any combination of these transformations the result is an intrinsically stable network.

\section{Isospectral Graph Reductions and Proofs}\label{appendix}

In this section we a give a proof of Theorems \ref{thm1} and \ref{prop:0}. The proof of Theorem \ref{prop:0} relies on the theory of isospectral graph reductions, which we give in part here (see \cite{BWBook} for more details). For $M\in\mathbb{R}^{n\times n}$ let $N=\{1,\ldots,n\}$. In what follows, if the sets $T,U\subseteq N$ are proper subsets of $N$, we denote by $M_{TU}$ the $|T| \times |U|$ \emph{submatrix} of $M$ with rows indexed by $T$ and columns by $U$. The isospectral reduction of a square real valued matrix is defined as follows.

A proof of Theorem \ref{thm1} is the following.
\begin{proof}
For the graph $G=(V,E,\omega)$, without loss of generality, let $B=\{v_1,\dots,v_\ell\}$ where $V=\{v_1,\dots,v_n\}$. By a slight abuse in notation we let $B$ denote the index set $B=\{1,\dots,\ell\}$ that indexes the vertices in $B$.

For the moment we assume that the graph $G|_{\bar{B}}$ has the single strongly connected component $S_1$. The weighted adjacency matrix $\mathcal{A}(G)=M\in\mathbb{R}^{n\times n}$ then has the block form
\begin{equation}\label{matrix1}
M=
\left[\begin{array}{cc}
U&Y\\
W&Z
\end{array}\right]
\end{equation}
where $U\in\mathbb{R}^{\ell\times\ell}$ is the matrix $U=M_{BB}$, which is the weighted adjacency matrix of $G|_{B}$. The matrix $Z\in\mathbb{R}^{(n-\ell)\times (n-\ell)}$ is the matrix $Z=M_{\bar{B}\bar{B}}$, which is the weighted adjacency matrix of $G|_{\bar{B}}=S_1$. The matrix $Y=M_{B\bar{B}}\in\mathbb{R}^{\ell\times (n-\ell)}$ is the matrix of edges weights of edges from vertices in $S_1$ to vertices in $G|_B$ and $W=M_{\bar{B}B}\in\mathbb{R}^{(n-\ell)\times \ell}$ is the matrix of edge weights of edges from vertices in $G|_B$ to vertices in $S_1$.

\begin{figure}
\begin{center}
\begin{tabular}{c}
    \begin{overpic}[scale=.43]{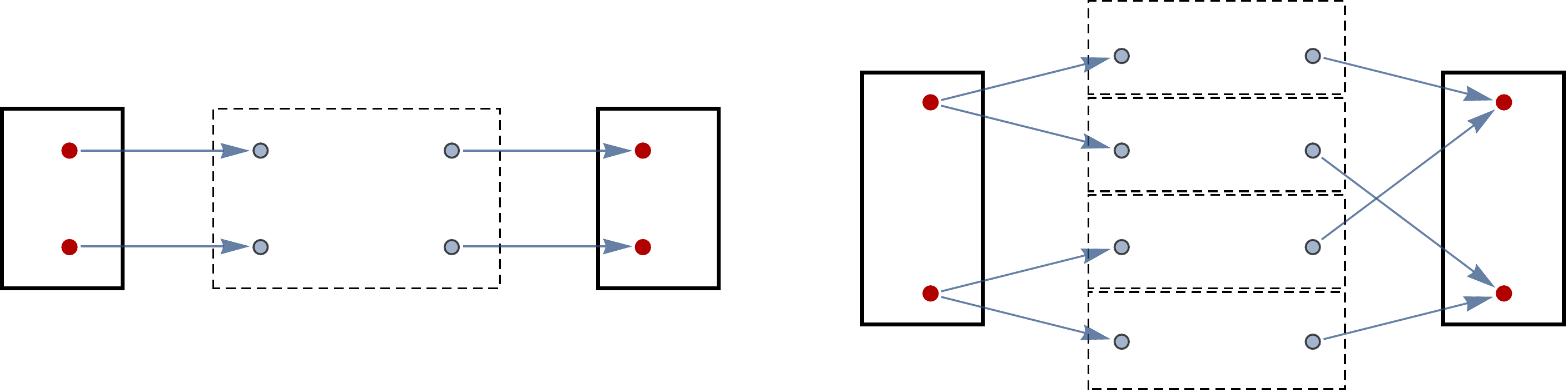}
    \put(21.75,-2){$G$}
    \put(21.5,11.5){$S_1$}
    \put(18,15.25){$v_3$}
    \put(18,8.25){$v_4$}
    \put(25,8.25){$v_6$}
    \put(25,15.25){$v_5$}
    \put(1,15.25){$v_1$}
    \put(1,8.25){$v_2$}
    \put(42,15.25){$v_7$}
    \put(42,8.25){$v_8$}

    \put(74,-3){$\mathcal{S}_B(G)$}
    \put(76,21){$S_1$}
    \put(76,15){$S_1$}
    \put(76,8.5){$S_1$}
    \put(76,2){$S_1$}
    \put(56,18){$v_1$}
    \put(56,6){$v_2$}
    \put(96.5,18){$v_7$}
    \put(96.5,6){$v_8$}

    \put(72,21){$v_3$}
    \put(72,14.75){$v_3$}

    \put(72,8.5){$v_4$}
    \put(72,2.5){$v_4$}

    \put(80.5,21){$v_5$}
    \put(80.5,14.75){$v_6$}

    \put(80.5,8.5){$v_5$}
    \put(80.5,2.5){$v_6$}

    \end{overpic}
\end{tabular}
\end{center}
  \caption{An example of a graph $G=(V,E,\omega)$ with the single strongly connected component $S_1=G|_{\bar{B}}$ is shown (left), where solid boxes together indicate the graph $G|_{B}$. As there are two edges from $G|_{B}$ to $S_1$ and two edges from $S_1$ to $G|_{B}$ there are $2\times 2$ branches in $\mathcal{B}_B(G)$ containing $S_1$. These are merged together with $G|_{B}$ to form $\mathcal{S}_B(G)$ (right).}\label{fig000}
\end{figure}

The specialization $\mathcal{S}_B(G)$ is the graph in which all component branches have the form
\[
\beta=v_i,e_{pi},S_1,e_{jq},v_j \ \ \text{for all} \ \ v_i,v_j\in G|_B; \ v_p,v_q\in S_1; \ \ \text{and} \ \ e_{pi},e_{jq}\in E
\]
are merged together with the graph $G|_{B}$ (cf. Figure \ref{fig000}). The weighted adjacency matrix $\hat{M}=\mathcal{A}(\mathcal{S}_B(G))$ has the block form

\begin{equation}\label{matrix2}
\hat{M}=
\left[\begin{array}{cccc}
U&\Big[Y_1 \hspace{0.15in} \cdots\hspace{0.15in} Y_1\Big]&\cdots&\Big[Y_{y} \hspace{0.15in} \cdots \hspace{0.15in} Y_{y}\Big]\\\\

\left[\begin{array}{c}
W_1\\
\vdots\\
W_w
\end{array}\right]&

\left[\begin{array}{ccc}
Z&&\\
&\ddots&\\
&&Z
\end{array}\right]&&0\\
\vdots&&\ddots&\\
\left[\begin{array}{c}
W_1\\
\vdots\\
W_w
\end{array}\right]&0&&

\left[\begin{array}{ccc}
Z&&\\
&\ddots&\\
&&Z
\end{array}\right]\\

\end{array}\right]=
\left[\begin{array}{cc}
U&\hat{Y}\\
\hat{W}&\hat{Z}
\end{array}\right],
\end{equation}
where each $Y_i\in\mathbb{R}^{\ell\times (n-\ell)}$, each $W_j\in\mathbb{R}^{(n-\ell)\times \ell}$, $\sum_{i=1}^y Y_i=Y$ and $\sum_{j=1}^w W_j=W$. Here, $y\geq0$ is the number of directed edges from $S_1$ to $G|_B$. The matrix $Y_i$ has a single nonzero entry corresponding to exactly one edge from this set of edges. Similarly, $w\geq0$ is the number of directed edges from $G|_B$ to $S_1$. The matrix $W_j$ has a single nonzero entry corresponding to exactly one edge from this set of edges. Since there are $w\cdot y$ component branches in $\mathcal{S}_B(G)$ containing $S_1$ then the matrix $\hat{M}\in\mathbb{R}^{(\ell+p(n-\ell))\times (\ell+p(n-\ell))}$ where $p=w\cdot y$.

To prove the result of Theorem \ref{thm1} we will use the Schur complement formula, which states that a matrix with the block form of $M$ has the determinant
\[
\det[M]=\det[Z]\det[U-YZ^{-1}W]
\]
if $Z$ is invertible. Here we apply this formula to the block matrix $M-\lambda I$, which results in the determinant
\begin{align*}
\det[M-\lambda I]&=\det[Z-\lambda I]\det[(U-\lambda I)-Y(Z-\lambda I)^{-1}W]\\
                 &=\det[Z-\lambda I]\det[(U-\lambda I)-\sum_{i=1}^y \sum_{j=1}^w Y_i(Z-\lambda I)^{-1}W_j],
\end{align*}
where the matrix $Z-\lambda I$ is invertible as a result of the proof of Theorem 1.2 in \cite{BWBook}. Similarly, applying the Schur complement formula to the matrix $\hat{M}-\lambda I$ we have the determinant
\begin{align*}
\det[\hat{M}-\lambda I]&=\det[\hat{Z}-\lambda I]\det[(U-\lambda I)-\hat{Y}(\hat{Z}-\lambda I)^{-1}\hat{W}]\\
                 &=\det[Z-\lambda I]^p\det[(U-\lambda I)-\sum_{i=1}^y \sum_{j=1}^w Y_i(Z-\lambda I)^{-1}W_j],
\end{align*}
where the equality
\begin{equation}\label{eq:sums}
\hat{W}(\hat{Z}-\lambda I)^{-1}\hat{Y}=\sum_{i=1}^y \sum_{j=1}^w Y_i(Z-\lambda I)^{-1}W_j
\end{equation}
can be seen by direct multiplication of the block entries in the matrix $\hat{M}$. Hence,
\[
\det[\hat{M}-\lambda I]=\det[Z-\lambda I]^{p-1}\det[M-\lambda I]
\]
implying
\begin{equation}\label{eq:last}
\sigma(\mathcal{S}_B(G))=\sigma(G)\cup\sigma(S_1)^{p-1},
\end{equation}
so that Theorem \ref{thm1} holds if $G|_{\bar{B}}$ has a single strongly connected component $S_1$.

In order to prove Theorem \ref{thm1} in full generality, we will first need to describe an alternative process which generates the specialized graph $\mathcal{S}_B(G)$, which we refer to as ``stepwise specialization.'' We start this process by defining $L_1$ to be the set of the strongly connected components of $G|_{\bar{B}}$, and label the components of this set as $L_1=\{S_{1,1},S_{2,1},\dots,S_{m,1}\}$.  Next we set $G_1=G$ and randomly choose a strongly connected component $S_{i_1,1}\in L_1$ with the property that specializing G around $S_{i_1,1}$ generates a \emph{non-trivial specialization}, i.e. $\mathcal{S}_{\bar{S}_{i,1}}(G_1)\neq G_1$. That is, when $G$, is specialized over $\bar{S}_{i,1}$, the result is a larger graph, meaning that there is more than one directed edge into or out of $S_{i,1}$. We then let $G_2=\mathcal{S}_{\bar{S}_{i,1}}(G_1)$.  Because $S_{i_1,1}$ is a single strongly connected component, we showed in the beginning of this proof that the only new vertices which appear in $G_2$ as $G_1$ is specialized must be $w\cdot y$ copies of $S_{i_1,1}$, where $w$ is the number of edges which are directed into $S_{i_1,1}$ and $y$ is the number of edges which are directed out of $S_{i_1,1}$. Now we relabel the collection of all copies of $S_{i_1,1}$ in the specialized graph $G_2$ as $\{S_{i_1,1}, S_{i_1,2}, \dots, S_{i_1,r_1}\}$.

Next we define $L_2$ to be the collection of all strongly connected components of $G_2|_{\bar{B}}$, which includes all the elements of $L_1$, plus the newly relabeled collection of copies of $S_{i_1,1}$. Thus 
\[
L_2=\{S_{1,1},S_{2,1},\dots,S_{i_1-1,1},S_{i_1,1},S_{i_1,2},\dots,S_{i_1,r_1},S_{i_1+1,1},\dots,S_{m,1}\}.
\]  
We then define $G_3$ by selecting a random element $S_{i_2,j_2}$ of $L_2$ which does not induce a trivial specialization of $G_2$ and set $G_3=\mathcal{S}_{\bar{S}_{i_2,j_2}}(G_2)$.  Again we relabel all of the strongly connected component copies of $S_{i_2,j_2}$ which appear in $G_3$ (this includes new copies and any copies that previously existed in the graph) and define $L_3$ to be the collection of all strongly connected components of $G_3|_{\bar{B}}$.  We continue this process inductively, at each step defining $G_{k+1}=\mathcal{S}_{\bar{S}_{i_k,j_k}}(G_k)$ where $S_{i_k,j_k}$ is randomly chosen from $L_{k}$ and has the property that $\mathcal{S}_{\bar{S}_{i_k,j_k}}(G_k)\neq G_k$. Then we relabel all copies of $S_{i_k,1}$ in $G_k$ as $\{S_{i_k,1},\dots S_{i_k,p_k}\}$ and define $L_{k+1}$ as the strongly connected components of $G_{k+1}$ with all the newly defined labels.

Note that at each step we only specialize over one strongly connected component of $G_{\bar{B}}$. An important observation about this process is that the set of component branches with respect to base $B$ is invariant over each step.  This can be seen by recognizing that there is a one-to-one correspondence between the component branches $\mathcal{B}_B(G_k)$ and $\mathcal{B}_B(G_{k+1})$.  From this fact we can conclude that \begin{equation}\label{amy}
\mathcal{S}_B(G_k)=\mathcal{S}_B(G)
\end{equation}
for all $k\geq 1$,  since we defined specialization to be the graph built by identifying elements of $B$ in all component branches of the graph.

We will continue the process described above until all strongly connected components of $G_K|_{\bar{B}}$ induce a trivial specialization. At this point, the process terminates and we are left with the final graph $G_K$.  We claim that eventually this process must reach this condition and terminate. To verify this, suppose the process did not terminate. Then the specialized graph in each step would be strictly larger than the graph in the previous step.  Thus the number of vertices in $G_k$ must diverge to infinity, which we denote as $\lim_{k\rightarrow\infty}|G_k|=\infty$.  However, we know that by construction $|G_k|\leq|\mathcal{S}_B(G_k)|$. Using Equation \eqref{amy} we know that $|\mathcal{S}_B(G_k)|=|\mathcal{S}_B(G)|<\infty$.  Therefore, at some point  this process must terminate.


When the process does terminate, we claim the final graph $G_K=\mathcal{S}_B(G)$.  To see this, first recall that we defined $\mathcal{S}_B(G)$ to be the graph which consists of the component branches of $\mathcal{B}_B(G)$ in which all elements of $B$ are individually identified (see Definition \ref{def:exp}). Thus, if $S$ is a strongly connected component of $G|_{\bar{B}}$, then each copy of $S$ found in $\mathcal{S}_B(G)$ will have exactly one edge pointing into it and exactly one edge pointing out of it by the definition of a component branch. In the final graph $G_K$, there are no more strongly connected components for which specializing gives a larger graph. Therefore, each strongly connected component in $L_K$ has exactly one edge directing into it and one edge directing out of it. Since this is true for each strongly connected component, then $G_K$ must be composed of a collection of component branches where all corresponding vertices in $B$ are identified.  We already showed $G_K$ has the same component branches with respect to $B$ as $\mathcal{S}_B(G)$.  Using this fact and because strongly connected components of $G_K|_{\bar{B}}$ are each their own component branch, we can conclude that $G_K=\mathcal{S}_B(G)$.

Hence, we can repeatedly apply the result given in Equation \eqref{eq:last} for a single strongly connected component at each step in our ``stepwise specialization process.'' Thus the spectrum of the specialized graph $G_k$ is the spectrum of the graph $G_{k-1}$ from the previous step together with the eigenvalues $\sigma(S_{i_k})^{p_k}$ where $p_k$ is the number of copies of $S_{i_k}$ which were added.  Therefore the eigenvalues of the final graph will be the eigenvalues of the original graph together with the eigenvalues of all the copies of the strongly connected components we added at each step.  That is
$$
\sigma(\mathcal{S}_{B})=\sigma(G)\cup\sigma(S_1)^{n_1-1}\cup\dots\cup\sigma(S_m)^{n_m-1}
$$
where $n_i$ is the number of component branches in $\mathcal{B}_{B}(G)$ containing $S_i$, completing the proof.

\end{proof}

We note that Proposition \ref{prop:10} can be proved analogously to Theorem \ref{prop:0}. The only difference being that we do not fully specialize every branch (cf. Figure \ref{fig000}) but only partially specialize them. The key observation to proving this result is that Equation \ref{eq:sums} also holds in this case in which $Y_i$ and $W_j$ may have more than one nonzero entry but $\sum_{i=1}^y Y_i=Y$ and $\sum_{y=1}^w W_i=W$. Second, the fact that a network with positive edge weights and its partial specialization have the same spectral radius can be shown as in the proof of Corollary \ref{cor1}, which verifies the result.

To prove Theorem \ref{prop:0} we need the notion of an \emph{isospectral graph reduction}, which is a way of reducing the size of a graph while essentially preserving its set of eigenvalues. Since there is a one-to-one relation between the graphs
we consider and the matrices $M\in\mathbb{R}^{n\times n}$, there is also an equivalent theory of \emph{isospectral matrix reductions}. Both types of reductions will be useful to us.

For the sake of simplicity we begin by defining an isospectral matrix reduction. For these reductions we need to consider matrices with rational function entries. The reason is that, by the Fundamental Theorem of Algebra, a matrix
$A\in\mathbb{R}^{n\times n}$ has exactly $n$ eigenvalues including multiplicities. In order to reduce the size of a matrix while at the same time preserving its eigenvalues we need something that carries more information than scalars, which for us are rational functions. The specific reasons for using rational functions can be found in \cite{BWBook}, chapter 1.

We let $\mathbb{W}^{n\times n}$ be the set of $n\times n$ matrices whose entries are rational functions  $p(\lambda)/q(\lambda)\in\mathbb{W}$, where $p(\lambda)$ and $q(\lambda)\neq0$ are polynomials with real coefficients in the
variable $\lambda$. The eigenvalues of the matrix $M=M(\lambda)\in\mathbb{W}^{n\times n}$ are defined to be solutions of the \emph{characteristic equation}
\[
\det(M(\lambda)-\lambda I)=0,
\]
which is an extension of the standard definition of the eigenvalues for a matrix with complex entries.

\begin{definition}\label{def:isored} \textbf{(Isospectral Matrix Reduction)}
The \emph{isospectral reduction} of a matrix $M\in\mathbb{R}^{n\times n}$ over the proper subset $B\subseteq N$ is the matrix
\[
\mathcal{R}_B(M) = M_{BB} - M_{B\bar{B}}(M_{\bar{B}\bar{B}}-\lambda I)^{-1} M_{\bar{B}B}\in\mathbb{W}^{|B|\times|B|}.
\]
\end{definition}

A proof of Theorem \ref{prop:0} part (i) is based on the following result relating the eigenvectors of the a graph $G$ and its reduction $\mathcal{R}_B(G)$.

\begin{theorem}\label{thm:reduction}\textbf{(Eigenvectors of Reduced Matrices)}
Suppose $M\in\mathbb{R}^{n\times n}$ and $B\subseteq N$. If $(\lambda,\mathbf{v})$ is an eigenpair of $M$ and $\lambda\notin \sigma(M_{\bar{B}\bar{B}})$ then $(\lambda,\mathbf{v}_B)$ is an eigenpair of $\mathcal{R}_B(M)$.
\end{theorem}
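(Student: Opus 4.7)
The plan is to prove this by direct block-matrix manipulation, exploiting the invertibility of $M_{\bar{B}\bar{B}}-\lambda I$ that is guaranteed by the hypothesis $\lambda\notin\sigma(M_{\bar{B}\bar{B}})$. First I would, after possibly permuting indices so that those in $B$ come first, write $M$ in block form
\[
M=\begin{bmatrix} M_{BB} & M_{B\bar{B}}\\ M_{\bar{B}B} & M_{\bar{B}\bar{B}}\end{bmatrix},
\]
and decompose the eigenvector as $\mathbf{v}=(\mathbf{v}_B,\mathbf{v}_{\bar{B}})$. The eigenvalue equation $M\mathbf{v}=\lambda\mathbf{v}$ then splits into the two block equations
\[
M_{BB}\mathbf{v}_B+M_{B\bar{B}}\mathbf{v}_{\bar{B}}=\lambda\mathbf{v}_B,\qquad M_{\bar{B}B}\mathbf{v}_B+M_{\bar{B}\bar{B}}\mathbf{v}_{\bar{B}}=\lambda\mathbf{v}_{\bar{B}}.
\]

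Next I would use the hypothesis that $\lambda$ is not a spectral value of $M_{\bar{B}\bar{B}}$ to invert $M_{\bar{B}\bar{B}}-\lambda I$ in the second equation, obtaining $\mathbf{v}_{\bar{B}}=-(M_{\bar{B}\bar{B}}-\lambda I)^{-1}M_{\bar{B}B}\mathbf{v}_B$. Substituting this into the first equation yields
\[
\bigl[M_{BB}-M_{B\bar{B}}(M_{\bar{B}\bar{B}}-\lambda I)^{-1}M_{\bar{B}B}\bigr]\mathbf{v}_B=\lambda\mathbf{v}_B,
\]
which by Definition \ref{def:isored} is precisely the statement $\mathcal{R}_B(M)(\lambda)\mathbf{v}_B=\lambda\mathbf{v}_B$.

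The one remaining subtlety is to rule out $\mathbf{v}_B=0$, since an eigenpair requires a nonzero vector. If $\mathbf{v}_B=0$, then the second block equation becomes $(M_{\bar{B}\bar{B}}-\lambda I)\mathbf{v}_{\bar{B}}=0$, and invertibility of $M_{\bar{B}\bar{B}}-\lambda I$ forces $\mathbf{v}_{\bar{B}}=0$, contradicting the fact that $\mathbf{v}$ is a genuine eigenvector of $M$. Hence $\mathbf{v}_B\neq 0$ and $(\lambda,\mathbf{v}_B)$ is an eigenpair of $\mathcal{R}_B(M)$.

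I do not anticipate a serious obstacle here: the proof is essentially a Schur-complement-style elimination, and the only place the hypothesis enters is to justify the matrix inversion and to guarantee nontriviality of $\mathbf{v}_B$. The main thing to be careful about is the bookkeeping between indexed entries and block submatrices, and noting that the identity $\mathcal{R}_B(M)(\lambda)\mathbf{v}_B=\lambda\mathbf{v}_B$ is an equality of vectors in $\mathbb{R}^{|B|}$ even though $\mathcal{R}_B(M)$ has entries in the rational function ring $\mathbb{W}$, which is legitimate because we evaluate the rational matrix at the specific scalar $\lambda$.
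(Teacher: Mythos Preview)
Your proposal is correct and follows essentially the same Schur-complement elimination as the paper: write $M$ in block form, solve the $\bar{B}$-block equation for $\mathbf{v}_{\bar{B}}$, and substitute to obtain $\mathcal{R}_B(M)\mathbf{v}_B=\lambda\mathbf{v}_B$. Your argument is in fact marginally cleaner (you substitute directly into the first block equation rather than recomputing $(M-\lambda I)\mathbf{v}$) and you add the nontriviality check $\mathbf{v}_B\neq 0$, which the paper omits.
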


\begin{proof}
Suppose $(\lambda,\mathbf{v})$ is an eigenpair of $M$ and $\lambda\notin \sigma(M_{\bar{B}\bar{B}})$. Then without loss in generality
we may assume that $\mathbf{v}=(\mathbf{v}_B^T,\mathbf{v}_{\bar{B}}^T)^T$. Since $M\mathbf{v}=\lambda\mathbf{v}$ then
\[
\left[\begin{array}{cc}
M_{BB}&M_{B\bar{B}}\\
M_{\bar{B}B}&M_{\bar{B}\bar{B}}
\end{array}\right]
\left[\begin{array}{c}
\mathbf{v}_B\\
\mathbf{v}_{\bar{B}}
\end{array}\right]=
\lambda \left[\begin{array}{c}
\mathbf{v}_B\\
\mathbf{v}_{\bar{B}}
\end{array}\right],
\]
which yields two equations the second of which implies that
\[
M_{\bar{B}B}\mathbf{v}_B+M_{\bar{B}\bar{B}}\mathbf{v}_B=\lambda\mathbf{v}_{\bar{B}}.
\]
Solving for $\mathbf{v}_{\bar{B}}$ in this equation yields
\begin{equation}\label{eq:bar}
\mathbf{v}_{\bar{B}}=-(M_{\bar{B}\bar{B}}-\lambda I)^{-1}M_{\bar{B}B}\mathbf{v}_{B},
\end{equation}
where $M_{\bar{B}\bar{B}}-\lambda I$ is invertible given that $\lambda\notin \sigma(M_{\bar{B}\bar{B}})$.

Note that
\begin{align*}
(M-\lambda I)\mathbf{v}&=
\left[\begin{array}{c}
(M-\lambda I)_{ BB}\mathbf{v}_{B}+(M-\lambda I)_{ B\bar{B}}\mathbf{v}_{\bar{B}}\\
(M-\lambda I)_{\bar{B}B}\mathbf{v}_{B}+(M-\lambda I)_{\bar{B}\bar{B}}\mathbf{v}_{\bar{B}}
\end{array}\right]\\
&=
\left[\begin{array}{c}
M_{BB}\mathbf{v}_{B}-M_{ B\bar{B}}(M_{\bar{B}\bar{B}}-\lambda I)^{-1}M_{\bar{B}B}\mathbf{v}_{B}\\
M_{\bar{B}B}\mathbf{v}_{B}-(M_{\bar{B}\bar{B}}-\lambda I)(M_{\bar{B}\bar{B}}-\lambda I)^{-1}M_{\bar{B}B}\mathbf{v}_{B}
\end{array}\right]\\
&=\left[\begin{array}{c}
(\mathcal{R}_B(M)-\lambda I)\mathbf{v}_{B}\\
0
\end{array}\right].
\end{align*}
Since $(M-\lambda I)\mathbf{v}=0$ it follows that $(\lambda,\mathbf{v}_B)$ is an eigenpair of $\mathcal{R}_B(M)$.

Moreover, we observe that if $(\lambda,\mathbf{v}_B)$ is an eigenpair of $\mathcal{R}_B(M)$ then by reversing this argument, $\big(\lambda,(\mathbf{v}_B^T,\mathbf{v}_{\bar{B}}^T)^T\big)$ is an eigenpair of $M$ where
$\mathbf{v}_{\bar{B}}$ is given by \eqref{eq:bar}.
\end{proof}

We now give a proof of Theorem \ref{prop:0} part (i).

\begin{proof}
As in the proof of part (i) let $M=\mathcal{A}(G)$ and $\hat{M}=\mathcal{A}(\mathcal{S}_B(G))$ where $B$ is a base of $G$. By slightly abusing our notation we let $B$ also denote the subset of integers $B\subset\{1,\dots,N\}$ that
indexes the vertices in this base. If $(\lambda,\mathbf{v})$ is an eigenpair of $M$ and $\lambda\notin \sigma(M_{\bar{B}\bar{B}})$ then Theorem \ref{thm:reduction} implies that $(\lambda,\mathbf{v}_B)$ is an eigenpair of $\mathcal{R}_B(M)$. The claim is that $(\lambda,\mathbf{v}_B)$ is also an eigenpair of of the reduced matrix $\mathcal{R}_B(\hat{M})$.

To see this, first we note that the matrix $M$ and $\hat{M}$ have the form given in \eqref{matrix1} and \eqref{matrix2}, respectively.
By Definition \ref{def:isored} the reduced matrices $\mathcal{R}_B(M)$ and $\mathcal{R}_B(\hat{M})$ are
\begin{align*}
\mathcal{R}_B(M)&=U-Y[Z-\lambda I]^{-1}W\in\mathbb{W}^{|B|\times|B|}\\
\mathcal{R}_B(\hat{M})&=U-\hat{Y}[\hat{Z}-\lambda I]^{-1}\hat{W}\in\mathbb{W}^{|B|\times|B|}
\end{align*}
Now using equation \eqref{eq:sums} in the proof of Theorem \ref{thm1}, we have
\[
U-\hat{Y}(\hat{Z}-\lambda I)^{-1}\hat{W}=u-\sum_{i=1}^y\sum_{j=1}^w Y_i(Z-\lambda I)^{-1}W_j=Y(Z-\lambda I)^{-1}W
\]
implying $\mathcal{R}_B(\mathcal{S}_B(M))=\mathcal{R}_B(M).$  Using this fact and the observation in the last line of the proof of Theorem \ref{thm:reduction} it follows that $(\lambda,\hat{\mathbf{v}})$ is an eigenpair of $\hat{M}$ where
\[
\hat{\mathbf{v}}=
\left[
\begin{array}{c}
\hat{\mathbf{v}}_B\\
\hat{\mathbf{v}}_{\bar{B}}
\end{array}
\right]=
\left[
\begin{array}{c}
\mathbf{v}_B\\
-(\hat{M}_{\bar{B}\bar{B}}-\lambda I)^{-1}\hat{M}_{\bar{B}B}\mathbf{v}_{B}
\end{array}
\right].
\]
Note that $\mathbf{v}_B=\hat{\mathbf{v}}_B$, which completes the proof.
\end{proof}

We now prove parts (ii) and (iii) of Theorem \ref{prop:0}. We begin with a proof of part (ii)

\begin{proof}
Let $Z$ be a strongly connected component of $G|_{\overline{B}}$ and let $Z_i$ be a copy of $Z$ in $\mathcal{S}_B(G)$. Then there exists a component branch $\beta$ corresponding to $Z_i$, of the form
$\beta = \{v_j, e_0, C_1, ..., C_k, e_k, Z, e_{k+1} C_{k+1}, ..., C_{n}\}$ . Then, because specialization isolates each component branch, we can write the adjacency matrix $A$ of $\mathcal{S}_B(G)$ in the form,

\[
A =
\begin{bmatrix}
\mathcal{A}(In(\beta,Z)) & W \\
Y & X
\end{bmatrix}
=
\begin{bmatrix}
\underline{B} & & & & & & W\\
Y_0 & \underline{C}_1 \\
    & Y_1 	& \underline{C}_2 \\
    &     	& \ddots 	& \ddots \\
    &  	  	&		 	&	Y_{k-1} 	& \underline{C}_k\\
  &  &  	  	&		 	&	Y_k 	& \underline{Z}_i\\
  &  &		&			&		& Y_{k+1} & X \\
\end{bmatrix}.
\]
Here, $\underline{B} = \mathcal{A}(G|_B)$, $\underline{Z}_i = \mathcal{A}(Z_i)$ and for $1 \leq j \leq k$, $\underline{C}_j = \mathcal{A}(C_j)$. For $0 \leq j \leq k+1$, the matrix $Y_j$ is a single entry matrix corresponding to the edge $e_j \in \beta$.
The matrix $X$ is the adjacency matrix for $G$ restricted to nodes that are not in $In(\beta,Z)$, $W$ contains information about edges from nodes represented in $X$ to nodes in $B$.

Suppose $(\lambda, \mathbf{v})$ is an eigenpair of $G$, i.e. $A\mathbf{v}=\lambda\mathbf{v}$, such that $\lambda\notin\sigma(G|_{\bar{B}})$.
As $ \sigma(G|_{\bar{B}}) = \cup_{j=1}^m \sigma(C_i)$ where $C_1,C_2,\dots,C_m$
are the strongly connected components of $G|_{\bar{B}}$ then $\lambda$ is not an eigenvalue of any strongly connected component of $\beta$.

We may conformally partition $\mathbf{v}$ into $\mathbf{v} = [\mathbf{v}_B, \mathbf{v}_{C_1}, \cdots \mathbf{v}_{C_k}, \mathbf{v}_{{Z_{i}}}, \mathbf{v}_{X}]^{T}$ so that entries in each piece correspond with the appropriate sub-matrix of $A$. Then,
applying the eigenvector equation produces,

\[
A\mathbf{v} =
\begin{bmatrix}
\underline{B} & & & & & & W\\
Y_0 & \underline{C}_1 \\
    & Y_1 	& \underline{C}_2 \\
    &     	& \ddots 	& \ddots \\
    &  	  	&		 	&	Y_{k-1} 	& \underline{C}_k\\
  &  &  	  	&		 	&	Y_k 	& \underline{Z}_i\\
  &  &		&			&		& Y_{k+1} & X \\
\end{bmatrix}
\begin{bmatrix}
\mathbf{v}_B \\
{\mathbf{v}_{C}}_1\\
{\mathbf{v}_{C}}_2\\

\vdots \\
\mathbf{v}_{C_k}\\
\mathbf{v}_{{Z_{i}}} \\
\mathbf{v}_{X}
\end{bmatrix}
= \lambda
\begin{bmatrix}
\mathbf{v}_B \\
\mathbf{v}_{C_1}\\
\mathbf{v}_{C_2}\\

\vdots \\
\mathbf{v}_{C_k}\\
\mathbf{v}_{{Z_{i}}} \\
\mathbf{v}_{X}
\end{bmatrix}
=\lambda \mathbf{v}.
\]

We solve for $\mathbf{v}_Z$ in terms of $\mathbf{v}_B$. Multiplying $\mathbf{v}$ by the appropriate block row of $A$ gives $Y_k \mathbf{v}_{C_k} + \underline{Z}_i$, implying that
\[
\mathbf{v}_{Z_{i}} = \lambda \mathbf{v}_{Z_{i}}
\]
\begin{equation*}
\mathbf{v}_{Z_i} = (\lambda I - \underline{Z}_i)^{-1} Y_k \mathbf{v}_{C_k}.
\end{equation*}
In a similar manner for $2\leq i \leq k$, we may solve for $\mathbf{v}_{C_i}$ in terms of $\mathbf{v}_{C_{i-1}}$ producing,
\[
\mathbf{v}_{C_i} = (\lambda I - \underline{C}_i)^{-1} Y_{i-1} \mathbf{v}_{C_{i-1}}.
\]
Combining this with the previous equation yields,
\[
\mathbf{v}_{Z_i} = (\lambda I - \underline{Z}_i)^{-1} Y_k (\lambda I - \underline{C}_k)^{-1} Y_{k-1} \cdots Y_{2}(\lambda I - \underline{C}_2)^{-1}Y_{1}\mathbf{v}_{C_{1}}.
\]
We solve for $\mathbf{v}_{C_{1}}$ in a similar manner and find that
\[
\mathbf{v}_{C_{1}} = (\lambda I - \underline{C}_1)^{-1}Y_0 \mathbf{v}_{B}.
\]
Then,
\[
\mathbf{v}_{Z_i} = (\lambda I - \underline{Z}_i)^{-1} Y_k (\lambda I - \underline{C}_k)^{-1} Y_{k-1} \cdots Y_{1}(\lambda I - \underline{C}_1)^{-1}Y_{0}\mathbf{v}_{B}.
\]
Since $Y_0$,...,$Y_k$, $C_1$ ... $C_k$ and $Z_i$ are all the appropriate submatrixes of $\mathcal{A}(\, In(\beta,Z_i)\,)$, by definition of the eigenvector transfer matrix
\[
\mathbf{v}_{Z_i} = (\lambda I - \underline{Z}_i)^{-1} Y_k (\lambda I - \underline{C}_k)^{-1} Y_{k-1} \cdots Y_{1}(\lambda I - \underline{C}_1)^{-1}Y_{0}\mathbf{v}_{B}
= T(\beta,Z_i,\lambda) \mathbf{v}_{B}.
\]

Note that each inverse in this equation exists since $\lambda\notin\sigma(Z_i)$ and $\lambda\notin\sigma(C_j)$ for $j=1,2,\dots,k$. This completes the proof.
\end{proof}

To prove part (iii) of Theorem \ref{prop:0} we use the following. Given a graph $H = (V,E,\omega)$ that is not strongly connected, let $S = \{S_1, S_2,...,S_k\}$ be the strongly connected components of $H$. If there exist edges $e_1 \cdots e_{k-1}$
such that, $e_j$ is an edge from $S_j$ to $S_{j+1}$ for $1 \leq j \leq k-1$, we call the ordered set $\alpha$ = \{$S_1$, $e_1$, $S_2,...,S_k$\} a partial component branch from $S_1$ to $S_k$ in $H$. We let $\mathcal{P}(S_1,S_k)$ denote the set of all
partial component branches from $S_1$ to $S_k$ in $H$. We let $\mathcal{P}(S_i,S_i)$ be the set containing only $\alpha=\{S_i\}$.

\begin{definition}\textbf{(Partial Eigenvector Transfer Matrix)}
Let $H = (V,E,\omega)$ be a graph that is not strongly connected and let $S$ and $T$ be strongly connected components of $H$. For $\alpha = \{S, e_0, C_1, e_1, ... C_m, e_m  T \} \in \mathcal{P}(S,T)$, let the adjacency matrix of $\alpha$ be

\[
\begin{bmatrix}
\underline{S} \\
Y_0 & \underline{C}_1 \\
    & Y_1 & \underline{C}_2 \\
    &     & \ddots & \ddots \\
    &	  & 		   & Y_m & \underline{T}
\end{bmatrix}
\]
Where $\underline{S} = \mathcal{A}(S)$, $\underline{T} = \mathcal{A}(T)$ and $\underline{C}_i = \mathcal{A}(C_i)$ for $1 \leq i \leq m$. We call the matrix
\[
P(\alpha,\lambda) = (\lambda I - T)^{-1}Y_m (\lambda I - C_m)^{-1}Y_{m-1} \cdots Y_m (\lambda I - T)^{-1}\]
the \emph{partial eigenvector transfer matrix} of $\alpha$, where $\lambda$ is a spectral parameter.
\end{definition}

\begin{lemma}\label{lem:1}
Suppose G = $(V,E,\omega)$ is not strongly connected with strongly connected components $S_1$, $S_2$,...,$S_k$ and
\[
A =
\mathcal{A}(G) =
\begin{bmatrix}
\underline{S}_1 \\
Y_{21} & \underline{S}_2 \\
\vdots & & \ddots \\
Y_{k1} & \hdots & Y_{kk-1} & \underline{S}_k \\
\end{bmatrix}
\] where $\underline{S}_i= \mathcal{A}(S_i)$ for $1 \leq i \leq k$. If $\lambda \notin \sigma(G)$ then,

\[
(\lambda I - A)^{-1} =
\begin{bmatrix}
X_{11} \\
X_{21} & X_{22} \\
\vdots & \vdots & \ddots \\
X_{kl} & X_{k2} & \cdots & X_{kk} \\
\end{bmatrix}
\]
where
\[ X_{ij} = \sum_{\alpha \in \mathcal{P}(S_j,S_i)} P(\alpha,\lambda). \]
\end{lemma}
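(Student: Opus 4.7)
The plan is to verify the claimed formula by a direct block-matrix computation combined with a combinatorial decomposition of partial component branches by their final edge. First I would observe that since $A$ is block lower triangular with diagonal blocks $\underline{S}_i$, standard linear algebra gives $\sigma(A) = \bigcup_{i=1}^k \sigma(\underline{S}_i) = \sigma(G)$, so the hypothesis $\lambda \notin \sigma(G)$ ensures each $\lambda I - \underline{S}_i$ is invertible. Consequently $\lambda I - A$ is itself invertible, and since the inverse of a block lower triangular matrix with invertible diagonal blocks is again block lower triangular, I may write $(\lambda I - A)^{-1} = [Z_{ij}]$ with $Z_{ij} = 0$ for $i < j$.

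Reading off block equations from $(\lambda I - A)[Z_{ij}] = I$ yields $Z_{ii} = (\lambda I - \underline{S}_i)^{-1}$, and for $i > j$ the recursion
$$Z_{ij} = (\lambda I - \underline{S}_i)^{-1} \sum_{l=j}^{i-1} Y_{il}\, Z_{lj}.$$
It then suffices to show that the sums $X_{ij} = \sum_{\alpha \in \mathcal{P}(S_j, S_i)} P(\alpha, \lambda)$ satisfy the same base case and recursion, which I would carry out by induction on $i - j$. The base case $i = j$ is immediate: $\mathcal{P}(S_i, S_i)$ contains only the trivial branch $\{S_i\}$, for which $P(\{S_i\}, \lambda) = (\lambda I - \underline{S}_i)^{-1}$.

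For $i > j$, every $\alpha \in \mathcal{P}(S_j, S_i)$ decomposes uniquely as an initial partial branch $\alpha' \in \mathcal{P}(S_j, S_l)$ for some $j \leq l < i$ followed by a single edge $e$ from $S_l$ to $S_i$ (where $l$ is the index of the last component of $\alpha$ before $S_i$). Unpacking the definition of the partial eigenvector transfer matrix shows that $P$ factors across this decomposition as
$$P(\alpha, \lambda) = (\lambda I - \underline{S}_i)^{-1} Y(e) P(\alpha', \lambda),$$
where $Y(e)$ is the single-entry matrix encoding the weight of $e$. Summing first over all edges $e$ from $S_l$ to $S_i$ collapses $\sum_{e} Y(e)$ to the full block $Y_{il}$, and then summing over $\alpha' \in \mathcal{P}(S_j, S_l)$ produces $X_{lj}$. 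This gives
$$X_{ij} = (\lambda I - \underline{S}_i)^{-1} \sum_{l=j}^{i-1} Y_{il}\, X_{lj},$$
so by the inductive hypothesis $X_{lj} = Z_{lj}$ for $l < i$, and hence $X_{ij} = Z_{ij}$, completing the induction.

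The principal obstacle will be the bookkeeping in this decomposition step: one must carefully reconcile the single-edge matrices $Y(e)$ that appear in $P(\alpha, \lambda)$ with the aggregated blocks $Y_{il}$ appearing in $A$, and verify that the aggregation $\sum_e Y(e) = Y_{il}$ commutes with the surrounding factors. Once this is handled cleanly and the uniqueness of the last-edge decomposition is verified, the induction closes and the identification $X_{ij} = Z_{ij}$ gives the claimed formula for $(\lambda I - A)^{-1}$.
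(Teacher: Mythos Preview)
Your proposal is correct and follows essentially the same approach as the paper: both derive the block recursion $Z_{ij} = (\lambda I - \underline{S}_i)^{-1}\sum_{l} Y_{il} Z_{lj}$ from $(\lambda I - A)(\lambda I - A)^{-1}=I$, prove the formula by induction (you on $i-j$, the paper on the row index with $j$ fixed), and match terms by decomposing each partial branch according to its final edge and final intermediate component, splitting the blocks $Y_{il}$ into single-entry matrices. Your write-up is somewhat more streamlined in absorbing the ``direct'' term $l=j$ into the general sum via $Z_{jj}=(\lambda I - \underline{S}_j)^{-1}$, but the argument is the same.
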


\begin{proof}
Since $A$ is block lower triangular, $(\rho I -A)$ and $(\rho I - A)^{-1}$ are also block lower triangular. Also, since we can write,
\[
(\rho I - A)
=
\begin{bmatrix}
(\rho I - S_1) \\
-Y_{21} & (\rho I - S_2) \\
\vdots & & \ddots \\
-Y_{k1} & \hdots & -Y_{kk-1} & (\rho I - S_k) \\
\end{bmatrix}
\]
we can write $(\rho I - A)^{-1}$ as,
\[
(\rho I - A)^{-1} =
\begin{bmatrix}
X_{11} \\
X_{21} & X_{22} \\
\vdots & \vdots & \ddots \\
X_{kl} & X_{k2} & \cdots & X_{kk} \\
\end{bmatrix}
\]
where $X_{ii}$ has the same dimensions as $S_i$ for $1 \leq i \leq k$ and $X_{ij}$ has the same dimensions as $Y_{ij}$ when $1 \leq j < i \leq k$. It must then be the case that,

\[(\rho I-A)(\rho I-A)^{-1}=I_1\oplus I_2\oplus \dots,\oplus I_k\]

where each $I_i$ is the identity matrix with the same dimensions as $S_i$. Let $j \in  \{1,2,...,$k$\}$. Then,
\[
\begin{bmatrix}
(\rho I - S_1) \\
-Y_{21} & (\rho I - S_2) \\
\vdots & & \ddots \\
-Y_{k1} & \hdots & -Y_{kk-1} & (\rho I - S_k) \\
\end{bmatrix}
\begin{bmatrix}
0 \\
\vdots \\
0 \\
X_{jj} \\
X_{(j+1)j} \\
\vdots \\
X_{kj} \\
\end{bmatrix}
=
\begin{bmatrix}
0\\
\vdots \\
0 \\
I_j\\
0\\
\vdots\\
0\\
\end{bmatrix}.
\]
Multiplying the $j$th row of $(\rho I -A)$ by the $j$th column of $(\rho I -A)^{-1}$ produces
\[
(\rho I - S_j)X_{jj} = I_j
\]
\begin{equation*}
X_{jj} = (\rho I - S_j)^{-1} = P(\{S_j\},\lambda).
\end{equation*}
Since $\mathcal{P}(S_j,S_j)$ only contains $\{S_j\}$ by definition, it follows that
\[
X_{jj} = \sum_{\alpha \in \mathcal{P}(S_j,S_j)} P(\alpha,\lambda).
\]

We will show by induction that
\[
X_{ij} = \sum_{\alpha \in \mathcal{P}(S_j,S_i)} P(\alpha,\lambda)
\]
when $i > j$. As a base case, consider $X_{(j+1)j}$. By multiplying row $(j+1)$ of $(\rho I-A)$ by the $j$th column of $(\rho I - A)^{-1}$, we obtain the equations
\[
-Y_{(j+1)j}(\rho I - S_{j})^{-1} + (\rho I -S_{(j+1)})X_{(j+1)j} = 0
\]
\begin{equation}\label{eq:new1}
X_{(j+1)j} = (\rho I - S_{(j+1)})^{-1}Y_{(j+1)j}(\rho I - S_j)^{-1}.
\end{equation}
Let $m$ be the number of nonzero entries in $Y_{(j+1)j}$. Then we can write
\[Y_{(j+1)j} = \sum_{k=1}^{m} Y_{(j+1)j}^{(k)}\]
where each $Y_{(j+1)j}^{(k)}$ has one non-zero entry and that entry is equal to a non-zero entry of $Y_{(j+1)j}$. Thus,
\begin{align*}
X_{(j+1)j} &= (\rho I - S_{j+1})^{-1} \sum_{k=1}^{m} Y_{(j+1)j}^{(k)}(\rho I - S_j)^{-1}\\
           &= \sum_{k=1}^{m}(\rho I - S_{j+1})^{-1}  Y_{(j+1)j}^{(k)}(\rho I - S_j)^{-1}.
\end{align*}
For each $k$, the block-diagonal matrix
\[
\begin{bmatrix}
S_j \\
Y_{(j+1)j}^{(k)} & S_{j+1}\\
\end{bmatrix}
\]
is an adjacency matrix for $\alpha^{(k)}=\{S_{j},e^{(k)},S_{j+1}\}$
where $e^{(k)}$ is an edge from $S_j$ to $S_{j+1}$. Then $\alpha^{(k)} \in \mathcal{P}(S_j,S_{j+1})$ and
\[
X_{(j+1)j} = \sum_{k=1}^{m}P(\alpha^{(k)}).
\]
Clearly, $\cup_{k=1}^n\{\alpha^{(k)}\} \subset \mathcal{P}(S_j,S_{j+1})$. We assert that $\cup_{k=1}^n\{\alpha^{(k)}\} = \mathcal{P}(S_j,S_{j+1})$. To verify this, let $\alpha \in \mathcal{P}(S_j,S_{j+1})$. Then $\alpha = \{S_j,e,S_{j+1} \}$ because if
$\alpha$ contained any other strongly connected component $S_l$, it would imply that a path exists from $S_j$ to $S_l$ to $S_{j+1}$ and because of the structure of $A$, it must be the case that $l < j$ or $j+1 < l$. If an edge existed from $S_j$ to $S_{l}$
to $S_{j+1}$ the matrix A would have an entry above the diagonal. This is a contradiction. Thus,
\[
X_{(j+1)j} = \sum_{\alpha \in In(S_j,S_{j+1})}P(\alpha).
\]

By the induction hypothesis assume that when $i < n$ we have
\[
X_{(j+i)j} = \sum_{\alpha \in In(S_j,S_{j+i})}P(\alpha).
\]
Consider $X_{(j+n)j}$. By multiplying the $j+n$th row of $(\rho I - A)$ by the $j$th column of $(\rho I - A)^{-1}$ we obtain the equations
\[
-Y_{(j+n)j}(\rho I - S_j)^{-1} -Y_{(j+n)(j+1)}X_{(j+1)j} \cdots -Y_{(j+n)(j+n-1)}X_{(j+n-1)j} + (\rho I - S_{j+n})X_{(j+n)j} = 0
\]
\begin{equation}\label{eq:new2}
X_{(j+n)j} = (\rho I - S_{j+n})^{-1} Y_{(j+n)j}(\rho I - S_j)^{-1} + \sum_{i=1}^{n-1}(\rho I - S_{j+n})^{-1}Y_{(j+n)(j+i)}X_{(j+i)j}
\end{equation}

As shown in the base case, the first term can be broken up into a sum of partial eigenvector transfer matrices. Let $m_0$ be the number of non-zero entries in $Y_{(j+n)j}.$ If we define $Y_{(j+n)j}^{(k)}$ so that each $Y_{(j+1)j}^{(k)}$ has a single
nonzero entry that is equal to a distinct non-zero entry of $Y_{(j+1)j}$ and
\[Y_{(j+n)j} = \sum_{k=1}^{m_0} Y_{(j+n)j}^{(k)}\]
then,
\[
(\rho I - S_{j+n})^{-1} Y_{(j+n)j}(\rho I - S_j)^{-1} = \sum_{k=1}^{m_0}(\rho I - S_{j+n})^{-1} Y_{(j+n)j}^{(k)}(\rho I - S_j)^{-1}.
\]
For each $1 \leq k \leq m_0 $, $(\rho I - S_{j+n})^{-1} Y_{(j+n)j}^{(k)}(\rho I - S_j)^{-1}$ is the partial eigenvector transfer matrix for a distinct partial component branch $\alpha$ in $\mathcal{P}(S_{j+n},S_j)$ that does not contain any strongly
connected components except $S_{j+n}$ and $S_j$. Let $D_0$ denote the set of all such branches. By definition of an adjacency matrix,  $D_0$ contains one branch for each non zero entry in $Y_{(j+n)j}$. Thus,
\begin{equation}\label{eq:new3}
(\rho I - S_{j+n})^{-1} Y_{(j+n)j}(\rho I - S_j)^{-1} = \sum_{\beta \in D_0}P(\beta,\lambda).
\end{equation}

We consider the other terms in the sum (3). Let $1 \leq i \leq n-1$. By the induction hypothesis,

\[
(\rho I - S_{j+n})^{-1}Y_{(j+n)(j+i)}X_{(j+i)j} = \sum_{\alpha \in \mathcal{P}(S_j,S_{j+i})}(\rho I - S_{j+n})^{-1}Y_{(j+n)(j+i)} P(\alpha,\lambda).
\]
Let $m_i$ represent the number of nonzero entries in $Y_{(j+n)j+i}$. As before we write $Y_{(j+n)j+i}$ as a sum of $m_i$ single entry matrices.
$Y_{(j+n)j+i} = \sum_{k=1}^{m_i} Y_{(j+n)(j+i)}^{(k)}$. Then,
\[
(\rho I - S_{j+n})^{-1}Y_{(j+n)(j+i)}X_{(j+i)j} = \sum_{\alpha \in \mathcal{P}(S_j,S_{j+i})} \sum_{k=1}^{m_i}(\rho I - S_{j+n})^{-1}Y_{(j+n)(j+i)}^{(k)} P(\alpha,\lambda).
\]
It is clear that for each $\alpha \in \mathcal{P}(S_j,S_{j+i})$ and $k$, the term
\[
(\rho I - S_{j+n})^{-1}Y_{(j+n)(j+i)}^{(k)} P(\alpha,\lambda )
\]
is a partial eigenvector transfer matrix for some incoming branch $\gamma \in \mathcal{P}(S_j,S_j+n)$, because the matrix $Y_{(j+n)(j+i)}^{(k)}$ is non zero if and only if an edge exists from $S_{j+i}$ to $S_{j+n}$. If $\mathcal{P}(S_j,S_{j+i})$ is non
empty, there is a branch from $S_j$ to $S_{j+i}$, implying that there must be a branch from $S_j$ to $S_{j+n}$ with partial eigenvector transfer matrix $(\rho I - S_{j+n})^{-1}Y_{(j+n)(j+i)}^{(k)} P(\alpha )$.

What we see here is that for a given $i$, the term \[(\rho I - S_{j+n})^{-1}Y_{(j+n)(j+i)}X_{(j+i)j} \] is equal to the sum of all centrality transfer matrices for the branches in $\mathcal{P}(S_j,S_{j+n})$ that pass through $S_{j+i}$ immediately before
reaching $S_{j+n}$. Let $D_i$ denote the set of all such branches. Then,
\begin{equation*}
(\rho I - S_{j+n})^{-1}Y_{(j+n)(j+i)}X_{(j+i)j} = \sum_{\gamma \in D_i} P(\gamma,\lambda).
\end{equation*}
Putting \eqref{eq:new1}, \eqref{eq:new2}, and \eqref{eq:new3} together gives,
\[
X_{(j+n)j} = \sum_{\beta \in D_0} P(\beta,\lambda) + \sum_{i=1}^{n-1}\sum_{\gamma \in D_i}P(\gamma,\lambda)
\]
Let $D = \cup_{i=0}^{n-1}D_i$. Then
\[
X_{(j+n)j} = \sum_{\alpha \in D}P(\alpha,\lambda)
\]

Clearly, $D \subset \mathcal{P}(S_j,S_{j+n})$. We assert that $D = \mathcal{P}(S_j,S_{j+n})$. To show this, let $\alpha \in \mathcal{P}(S_j,S_{j+n})$. If $\alpha$ has only two components, then $\alpha = \{S_j,e,S_{j+n}\}$ for some edge $e$ and
$\alpha \in D_0 \subset D$ by definition of $D_0$. If $\alpha$ has more then two components, then it has a second to last component, $S_{j+i}$ where $ 1 \leq i \leq n-1 $. By definition of $D_i$, $\alpha \in D_i$. Thus,
\[
X_{j+n,j} = \sum_{\alpha \in \mathcal{P}(S_j,S_{j+n})}P(\alpha,\lambda)
\]
This concludes the proof.
\end{proof}

We now give a proof of part (iii) of Theorem \ref{prop:0}.

\begin{proof}
Since each $Z_k$ is a copy of $Z$ in $\mathcal{S}_B(G)$, each $Z_k$ must correspond to a unique component branch. Let $D = \{\beta_1,...,\beta_{\ell}\}$ be the set of such branches indexed so that $\beta_k$ is the unique branch corresponding to $Z_k$ for
$1 \leq k \leq \ell$. Because each $Z_k$ has the same outgoing branch, it follows that $In(\beta_{j},Z) \neq In(\beta_{k},Z)$ when $j \neq k$. Otherwise, there would exist $k \neq j$ such that $\beta_k = \beta_j$ which contradicts uniqueness of each $\beta_k$.

If $In(\beta_k,Z) = \{v_i,e_0,C_1,e_1,...,e_{n-1}, C_m, e_m, Z \}$ define $F_k = \{C_1, C_2,...C_m \}$
to be the set of all strongly connected components of $G|_{\overline{B}}$ in $In(\beta_k,Z)$.
 We let $F = \cup_{k=1}^{\ell}F_k$. Thus $F$ is the set of all strongly connected components of $G|_B(Z)$ that appear before $Z$ in some incoming branch of $D$.

We may order $F = \{C_1, C_2, ..., C_n,\}$ such that if $1 \leq i < j \leq n$ there are no paths from $C_j$ to $C_i$. If no such ordering existed, it would imply for some $1 \leq i < j \leq n$, paths exist both from $C_i$ to $C_j$ and from $C_j$ back
to $C_i$. This implies that $C_i$ and $C_j$ must be part of the same strongly connected component which is a contradiction.

Thus, we can write the adjacency matrix $A=\mathcal{A}(G)$ of $G$ in the form
\[
A
=
\begin{bmatrix}
\underline{B} & W_{BT} & W_{BZ} & W_{BX} \\
Y_{LB} & L \\
Y_{ZB} & Y_{ZL} & \underline{Z} \\
Y_{XB} & Y_{XL} & Y_{XZ} & X \\
\end{bmatrix}
\]
where $L$ is of the form,
\[
L
=
\begin{bmatrix}
\underline{C}_1 \\
Y_{21} & \underline{C}_2 \\
\vdots & & \ddots \\
Y_{k1} & \hdots & Y_{kk-1} & \underline{C}_n \\
\end{bmatrix}
\]

and $\underline{B} = \mathcal{A}(G|_B)$, $\underline{Z} = \mathcal{A}(Z)$, $\underline{C}_i =\mathcal{A}(C_i)$ for $1\leq i \leq m$. Since, $\lambda$ is an
eigenvalue of $G$, there is a vector $\mathbf{u}$ such that $A\mathbf{u} = \lambda \mathbf{u}$. We may partition $\mathbf{u}$ into $\mathbf{u} = [\mathbf{u}_B, \mathbf{u}_{L}, \mathbf{u}_{Z}, \mathbf{v}_{X}]^{T}$ so that the number of entries in each
sub-vector corresponds with the size of the appropriate sub-matrix of $A$

We apply the eigenvector equation to solve for $\mathbf{u}_{Z}$. Given that
\[
\begin{bmatrix}
\underline{B} & W_{BT} & W_{BZ} & W_{BX} \\
Y_{LB} & L \\
Y_{ZB} & Y_{ZL} & \underline{Z} \\
Y_{XB} & Y_{XL} & Y_{XZ} & X \\
\end{bmatrix}
\begin{bmatrix}
\mathbf{u}_B \\
\mathbf{u}_{L} \\
\mathbf{u}_{Z} \\
\mathbf{u}_{X}
\end{bmatrix}
=
\lambda
\begin{bmatrix}
\mathbf{u}_B \\
\mathbf{u}_{L} \\
\mathbf{u}_{Z} \\
\mathbf{u}_{X}
\end{bmatrix}
\]
we have
\[
Y_{ZB}\mathbf{v}_B + Y_{ZL}\mathbf{u}_L + \underline{Z} \mathbf{u}_Z = \lambda \mathbf{u}_Z
\]
\[
\mathbf{u}_{Z} = (\lambda I - \underline{Z})^{-1}Y_{ZB} \mathbf{u}_{B} + (\lambda	I - \underline{Z})^{-1}Y_{ZL} \mathbf{u}_{L}.
\]
Solving for $\mathbf{u}_L$ produces,
\[
\mathbf{u}_{L} = (\lambda I - L)^{-1}Y_{LB} \mathbf{u}_{B}.
\]
Thus,
\[
\mathbf{u}_{Z} = \big{(}\,(\lambda I - \underline{Z})^{-1}Y_{ZB} + (\lambda	I - \underline{Z})^{-1}Y_{ZL} (\lambda I - L)^{-1}Y_{LB} \,\big{)}\mathbf{u}_B.
\]
We now show that,
\[
\mathbf{u}_{Z} = \big{(}\,(\lambda I - \underline{Z})^{-1}Y_{ZB} + (\lambda	I - \underline{Z})^{-1}Y_{ZL} (\lambda I - L)^{-1}Y_{LB} \,\big{)}\mathbf{u}_B = \sum_{k=1}^{\ell} T(\beta_k,Z,\lambda) \mathbf{u}_B
\]
by showing
\begin{equation}\label{eq:new4}
(\lambda I - \underline{Z})^{-1}Y_{ZB} + (\lambda	I - \underline{Z})^{-1}Y_{ZL} (\lambda I - L)^{-1}Y_{LB} = \sum_{k=1}^{\ell} T(\beta_k,Z,\lambda).
\end{equation}

First, we consider $(\lambda I - \underline{Z})^{-1}Y_{ZB}$. Since every non-zero entry in $Y_{ZB}$ corresponds to an edge from $G|_B$ to $Z$ we let $n_0$ equal the number of non-zero entries in $Y_{ZB}$ and write,
\[
Y_{ZB} = \sum_{r=1}^{n_o}Y_{ZB}^{(r)}
\]

where each $Y_{ZB}^{(r)}$ has exactly one non-zero entry and that entry is equal to a non-zero entry of $Y_{ZB}$. Thus,
\[
(\lambda I - \underline{Z})^{-1}Y_{ZB} = \sum_{r=1}^{n_o}(\lambda I - \underline{Z})^{-1} Y_{ZB}^{(r)}.
\]
We fix $r \in \{1,...,n_0\}$ and consider $(\lambda I - \underline{Z})^{-1} Y_{ZB}^{(r)}$. The matrix $Y_{ZB}^{(r)}$ corresponds to exactly one edge from $G|_B$ to $Z$. Therefore, there must exist $\beta_r \in D$ such that $In(\beta_r,Z) = \{v,e,Z\}$
where $v \in B$ and $e$ corresponds with the non-zero entry in $Y_{ZB}^{(r)}$. Hence,
\[
(\lambda I - \underline{Z})^{-1} Y_{ZB}^{(r)} = T(\beta_r,Z,\lambda).
\]
Furthermore, $\beta_r$ must be the only branch in $D$ that contains $e$. If not, there must be another $\beta_s \in D$ such that $In(\beta_s,Z) =    \{v,e,Z\} = In(\beta_r,Z)$. Since $\beta_r$ and $\beta_s$ are in $D$, they have the same outgoing
branch and it must be the case that $\beta_r = \beta_s$. This contradicts uniqueness of each $\beta$ in $D$.

Thus, each $Y_{ZB}^{(r)}$ corresponds with exactly one $\beta_r \in D$. Furthermore, we see that $In(\beta_r,Z)$ does not contain any strongly connected components except $Z$, because the branch contains an edge directly from a node in $B$
to $Z$. Thus, we may write it's adjacency matrix as follows:
\[
\mathcal{A}(In(\beta_r,Z))
=
\begin{bmatrix}
\underline{B}\\
Y_{ZB}^{(r)} & \underline{Z} \\
\end{bmatrix}.
\]
Then  $(\lambda I - \underline{Z})^{-1} Y_{ZB}^{(r)}$ must the eigenvector transfer matrix of $\beta_r$ with respect to $Z$.

Consider the set $D_0 = \{\beta_1, ... \beta_{n_0} \}$ of component branches corresponding to $\{Y_{ZB}^{(1)}, ..., Y_{ZB}^{(n_0)} \}$. In this case
\begin{equation}\label{eq:new5}
(\lambda I - \underline{Z})^{-1}Y_{ZB} = \sum_{r=1}^{n_o}(\lambda I - \underline{Z})^{-1} Y_{ZB}^{(r)} = \sum_{\beta \in D_0} T(\beta,Z,\lambda).
\end{equation}

Note that for all $\beta \in D_0$, $In(\beta,Z)$ has no strongly connected components except $Z$.
We assert that $D_0$ is the set of all branches in $D$ that satisfy this property. To show this, note that if $\beta \in D$ and $In(\beta,Z) = \{v,e,Z\}$, then
by definition of $\mathcal{A}(G)$, $e$ must correspond to a non-zero entry in $Y_{ZB}$ and therefore $\beta$ corresponds to $Y_{ZB}^{(r)}$ for some
$1 \leq r \leq n_0$. Then $\beta \in D_0$ and $D_0$ is the set of all $\beta \in D$ where $In(\beta,Z)$ contains no strongly connected components except $Z$.

Next we consider $(\lambda	I - \underline{Z})^{-1}Y_{ZL} (\lambda I - L)^{-1}Y_{ZB}$ from equation \eqref{eq:new4}. Once again, we write $Y_{ZL} = \sum_{s=1}^{n_1}Y_{ZL}^{(s)}$ and $Y_{LB}=\sum_{t=1}^{n_2}Y_{LB}^{(t)}$ as the sum of their non-zero entries.

By definition of an adjacency matrix, it must be the case that the set $\{Y_{ZL}^{(s)} \}_{s=1}^{n_1}$ is in a bijective correspondence with the edges from the components in $\{C_1, ... C_n\}$ to $Z$ and the set $\{Y_{LB}^{(t)} \}_{t=1}^{n_2}$
is in a bijective correspondence with the edges from $G|_B$ to components
in $\{C_1, ... C_n\}$. Thus we may let $\{f_s\}_{s=1}^{n_1}$ be the set of edges corresponding with $\{Y_{ZL}^{(s)} \}_{s=1}^{n_1}$ and $\{g_s\}_{t=1}^{n_1}$ be
the set of edges corresponding with $\{Y_{LB}^{(s)} \}_{t=1}^{n_1}$.

We, therefore, have
\[
(\lambda	I - \underline{Z})^{-1}Y_{ZL} (\lambda I - L)^{-1}Y_{LB}
= \sum_{s=1}^{n_1} \sum_{t=1}^{n_2} (\lambda	I - \underline{Z})^{-1}Y_{ZL}^{(s)} (\lambda I - L)^{-1}Y_{LB}^{(t)}.
\]
Fix, $s \in \{1,...,n_1\}$ and $t \in \{1,...n_2\}$ and consider, $Y_{ZL}^{(s)} (\lambda I - L)^{-1}Y_{LB}^{(t)}$. By definition $Y_{LB}^{(s)}$ represents an edge from $G|_B$ to some $C_i \in F$  and $Y_{ZL}^{(s)}$ represents an
edge from some $C_j \in F$ to $Z$. Since there are no paths from $C_j$ to $C_i$ when $i > j$, it must be the case that $i \leq j$. This gives us information about the location of the non-zero entries in $Y_{LB}^{(s)}$ and $Y_{ZL}^{(s)}$. In particular, it must be the case that
\[
\begin{bmatrix}
Y_{LB}^{(t)} & L \\
 &Y_{ZL}^{(s)}
\end{bmatrix}
=
\begin{bmatrix}
\begin{bmatrix}
0 \\
0 \\
Y^{(t)}\\
0 \\
0 \\
0 \\
0 \\
0 \\
\end{bmatrix}
&
\begin{bmatrix}
\underline{C}_1 \\
\vdots & \ddots \\
Y_{i1} & & \underline{C}_i \\
\vdots & & & \ddots \\
Y_{j1} & & & & \underline{C}_j \\
\vdots & & & & & \ddots \\

Y_{k1} & \hdots & Y_{ki} & \hdots & Y_{kj} & \hdots & \underline{C}_n \\
\end{bmatrix}
\\
&
\begin{bmatrix}
\, \, 0 \, \, & \hdots & \hdots & \, \, 0  & \, \, Y^{(s)} & 0 \quad & 0
\end{bmatrix}
\end{bmatrix}. \\
\]
For some $Y^{(t)}$ and $Y^{(s)}$ that contain a single non-zero entry.
This is because $Y_{LB}^{(t)}$ and $Y_{ZL}^{(s)}$ represent an edge from $G|_B$ to a component $C_i$ and an edge from a component $C_j$ to $Z$ respectively. Thus, we conclude that all entries in $Y_{LB}^{(t)}$ and $Y_{ZL}^{(s)}$
that correspond to edges to or from components besides $C_i$ and $C_j$ respectively must be zero.

By lemma \ref{lem:1}
\[
(\lambda I - L)^{-1}
=
\begin{bmatrix}
X_{11} \\
X_{21} & X_{22} \\
\vdots & \vdots & \ddots \\
X_{kl} & X_{k2} & \cdots & X_{kk} \\
\end{bmatrix}
\]
where
\[ X_{ij} = \sum_{\alpha \in \mathcal{P}(C_j,C_i)} P(\alpha,\lambda).
\]

Using this fact we simplify the expression $Y_{ZL}^{(s)}(\lambda I -L)^{-1}  Y_{LB}^{(t)}$ to
\[
\begin{bmatrix}
0 & \hdots &  0  & Y^{(s)} & 0 & \hdots & 0
\end{bmatrix}
\begin{bmatrix}
X_{11}\\
\vdots & \ddots \\
X_{i1} & & X_{ii} \\
\vdots & & \vdots & \ddots \\
X_{j1} & & X_{ji} & & X_{jj} \\
\vdots & & \vdots & & & \ddots \\

X_{k1} & \hdots & X_{ki} & \hdots & X_{kj} & \hdots & X_{nn} \\
\end{bmatrix}
\begin{bmatrix}
0 \\
0 \\
Y^{(t)}\\
0 \\
0 \\
0 \\
0 \\
0 \\
\end{bmatrix}
\]

demonstrating that,
\[
Y_{ZL}^{(s)}(\lambda I -L)^{-1}  Y_{LB}^{(t)}
=
Y^{(s)} X_{ji} Y^{(t)}
=   \sum_{\alpha \in \mathcal{P}(C_i,C_j)} Y^{(s)}P(\alpha,\lambda) Y^{(t)}.
\]

Thus, for each $s$ and $t$ the term,
\[
(\lambda	I - \underline{Z})^{-1}Y_{ZL}^{(s)} (\lambda I - L)^{-1}Y_{LB}^{(t)}
=
\sum_{\alpha \in \mathcal{P}(C_{i_t},C_{j_s})}(\lambda	I - Z)^{-1}Y^{(s)} P(\alpha,\lambda) Y^{(t)}
\]
where $j_s, i_t \in \{1,...,n\}$.

We now show for each $\alpha \in \mathcal{P}(C_{i_t},C_{j_s})$ that
\[
(\lambda	I - \underline{Z})^{-1}Y^{(s)} P(\alpha,\lambda) Y^{(t)}
=
T(\beta,Z, \lambda)
\]
where $\beta \in \mathcal{B}_B(G)$. Let $\alpha \in \ \mathcal{P}(C_{i_t},C_{j_s})$.
Then $\alpha = \{C_{\alpha_0}, e_1, C_{\alpha_1}, e_2, ... C_{\alpha_{N-1}}, e_{N} C_{\alpha_N} \} $
where $C_{\alpha_0} =C_{i_t}$, $C_{\alpha_N} = C_{j_s}$ and for each $1 \leq k \leq N$ we have $C_{\alpha_k} \in F$.
Because $Y^{(s)}$ and $Y^{(t)}$ correspond to unique edges from $C_{j_s}$ to $Z$ and from $G|_B$ to $C_{i_t}$, respectively, there is a unique component branch $\beta \in \mathcal{B}_B(G)$ such that
\[
\beta
=
\{v,g_t,  C_{\alpha_0}, e_1, C_{\alpha_1}, e_2, ... C_{\alpha_{N-1}}, e_N, C_{\alpha_N}, f_s, Z, ... u \}
\]
 for some $v,u \in B$ and where $g_t, f_s \in E$ are edges corresponding to $Y^{(t)}$ and $Y^{(s)}$ respectively as defined previously.
Thus, both
\[
T(\beta,Z, \lambda)
=
(\lambda I -\underline{Z})^{-1} Y^{(s)} (\lambda I -\underline{C}_{\alpha_0})^{-1} Y_1 (\lambda I -\underline{C}_{\alpha_1})^{-1} Y_2 ... Y_N (\lambda I -\underline{C}_{\alpha_N})^{-1} Y^{(t)}
\]
\[
T(\beta,Z, \lambda)
=
(\lambda	I - \underline{Z})^{-1}Y^{(s)} P(\alpha,\lambda) Y^{(t)}
\]
because, $P(\alpha,\lambda) = (\lambda I -\underline{C}_{\alpha_0})^{-1} Y_1 (\lambda I -\underline{C}_{\alpha_1})^{-1} Y_2 ... Y_N (\lambda I -\underline{C}_{\alpha_N})^{-1}$.

We see given $s$, $t$ and $\alpha$ that $(\lambda	I - \underline{Z})^{-1}Y^{(s)} P(\alpha,\lambda) Y^{(t)}$ is equal to the eigenvector transfer matrix for some branch $\beta$ that contains $Z$ where the first edge in $\beta$
is $g_t$ and the edge before $Z$ is $f_s$. Let $D_{st}$ represent the set of all $\beta \in \mathcal{B}_B(G)$ where the first edge is $g_t$ and the edge leading to $Z$ is $f_s$.  We have already shown that for each $\alpha \in
\mathcal{P}(C_{i_t},C_{j_s})$, there exists a $\beta \in D_{st}$ such that
\[
(\lambda	I - \underline{Z})^{-1}Y^{(s)} P(\alpha,\lambda) Y^{(t)}
=
T(\beta,Z, \lambda).
\]

We now show that for each $\beta \in D_{st}$ there exists an $\alpha \in \mathcal{P}(C_{i_t},C_{j_s})$ such that the above equation is true. Let $\beta \in D_{st}$. Then $\beta	 = \{v,g_t, C_{\beta_0}, e_1 ... e_{M},
C_{\beta_M}, f_s, Z ... u\}$ where for $1 \leq k \leq M$, $e_k \in E$ and $C_{\beta_k} \in F$ when $0 \leq k \leq M$. By definition of a component branch, each $e_k$ is and edge from $C_{\beta_{k-1}} $ to $ C_{\beta_k}$ when
$1 \leq k \leq M$. This implies that there is a partial component branch from $C_{\beta_1}$ to $C_{\beta_M}$ of the form, $\alpha = \{ C_{\beta_0}, e_1 ... e_M, C_{\beta_M} \}$. Clearly $\alpha \in \mathcal{P}(C_{\beta_1},C_{\beta_M})$.
Hence, the adjacency matrix for $\alpha$ is of the form,
\[
\mathcal{A}(\alpha)
=
\begin{bmatrix}
\underline{C}_{\beta_0} \\
Y_1 & \underline{C}_{\beta_1} \\
    & \ddots & \ddots \\
    & 		   & Y_m & \underline{C}_{\beta_M}
\end{bmatrix}
\]
and
 $P(\alpha, \lambda) = (\lambda I - \underline{C}_{\beta_M})^{-1} Y_M ... Y_1 (\lambda I - \underline{C}_{\beta_0})^{-1}$. Since the components and edges in $\alpha$ are in $\beta$, the adjacency matrix of $In(\beta,Z)$ can be written as
 \[
 \mathcal{A}(In(\beta,Z))
 =
\begin{bmatrix}
B \\
Y^{(t)} & \underline{C}_{\beta_0} \\
        & Y_1 & \underline{C}_{\beta_1} \\
    		&	& \ddots & \ddots \\
    		&	& 		   & Y_m & \underline{C}_{\beta_M} \\
    		&	&			&	& Y^{(s)} & Z
\end{bmatrix}.
\]
Thus, there exists an $\alpha$ such that $T(\beta,Z,\lambda) = (\lambda I - Z)^{-1} Y^{(s)} P(\alpha, \lambda) Y^{(t)}$. Thus, the sets $\mathcal{P}(C_{i_t},C_{j_s}$ are in a bijective correspondence.

We know that  for each $\alpha \in \mathcal{P}(C_{i_t},C_{j_s})$ there is a $\beta \in D_{st}$ such that
\[
(\lambda	I - \underline{Z})^{-1}Y^{(s)} P(\alpha,\lambda) Y^{(t)}
=
T(\beta,Z,\lambda).
\]

Using this fact, we may substitute each $(\lambda	I - \underline{Z})^{-1}Y^{(s)} P(\alpha,\lambda) Y^{(t)}$ in the sum
\[
\sum_{\alpha \in \mathcal{P}(C_{i_t},C_{j_s})} (\lambda	I - \underline{Z})^{-1}Y^{(s)}P(\alpha,\lambda) Y^{(t)}
\]
for the corresponding $T(\beta,Z,\lambda)$, ($\beta \in D_{st}$ producing
\[
\sum_{\alpha \in \mathcal{P}(C_{i_t},C_{j_s})}(\lambda	I - \underline{Z})^{-1} Y^{(s)}P(\alpha,\lambda) Y^{(t)} = \sum_{\beta \in D_{st}} T(\beta, Z, \lambda).
\]
We know that every $\beta \in D_{st}$ is accounted for in the sum because of the correspondence demonstrated previously.

We have now shown that
\begin{align*}
(\lambda	I - \underline{Z})^{-1}Y_{ZL} (\lambda I - L)^{-1}Y_{LB} &= \sum_{s=1}^{n_1} \sum_{t=1}^{n_2} (\lambda	I - \underline{Z})^{-1}Y_{ZL}^{(s)} (\lambda I - L)^{-1}Y_{LB}^{(t)} \\
&= \sum_{s=1}^{n_1} \sum_{t=1}^{n_2} \sum_{\alpha \in \mathcal{P}(C_{i_t},C_{j_s})}(\lambda	I - \underline{Z})^{-1} Y^{(s)}P(\alpha,\lambda) Y^{(t)}\\
&= \sum_{s=1}^{n_1} \sum_{t=1}^{n_2} \sum_{\beta \in D_{st}} T(\beta, Z, \lambda).
\end{align*}
Since
\[
\mathbf{u}_{Z} = \left(\,(\lambda I - \underline{Z})^{-1}Y_{ZB} + (\lambda	I - \underline{Z})^{-1}Y_{ZL} (\lambda I - L)^{-1}Y_{LB} \,\right)\mathbf{u}_B,
\]
by equation \eqref{eq:new5} we have
\[
\mathbf{u}_{Z}
=
\left(
\sum_{\beta \in D_0} T(\beta,Z,\lambda)
+
\sum_{s=1}^{n_1} \sum_{t=1}^{n_2} \sum_{\beta \in D_{st}} T(\beta, Z, \lambda) \right) \mathbf{u}_B.
\]
Finally we show that
\[
D = D_0 \cup \left( \bigcup_{s=1}^{n_1} \bigcup_{t=1}^{n_2} D_{st} \right).
\]
By definition, $D_0 \subset D$ and $D_{st} \subset D$ for all $1 \leq s \leq n1 ,1 \leq t \leq n_2$. Then $D_0 \cup \left( \bigcup_{s=1}^{n_1} \bigcup_{t=1}^{n_2} D_{st} \right) \subset D$. Let $\beta \in D$. Then $In(\beta,Z)$
either contains a strongly connected component besides Z, or it does not. If it does not, $\beta \in D_0$ as shown previously. If it does, then the first edge in $\beta$ must connect a vertex in $G|_B$ to a vertex in some component
$C_i \in F$ and therefore correspond to $f_t$ for some $1 \leq t \leq n_2$. Additionally the edge that appears prior to $Z$ must correspond to $g_s$ for some $1 \leq s \leq n_1$. Thus $\beta \in D_0 \cup \left( \bigcup_{s=1}^{n_1}
\bigcup_{t=1}^{n_2} D_{st} \right)$ and $D \subset D_0 \cup \left( \bigcup_{s=1}^{n_1} \bigcup_{t=1}^{n_2} D_{st} \right)$. This produces
\[
\mathbf{u}_{Z}
=
\sum_{\beta \in D} T(\beta,Z,\lambda) \mathbf{u}_B
=\sum_{k=1}^{\ell} T(\beta_k,Z,\lambda) \mathbf{u}_B.
\]
By part (ii) of Theorem \ref{prop:0} we have that $T(\beta_k,Z,\lambda) \mathbf{u}_B = \mathbf{v}_{Z_k}$ and
\[
\mathbf{u}_{Z}
=
\sum_{k=1}^{\ell} \mathbf{v}_{Z_k}.
\]
This completes the proof of part (iii).
\end{proof}

\section{Concluding Remarks}\label{conc}

In this paper we consider the interplay of the \emph{dynamics on} a network and the \emph{dynamics of} a network, where the dynamics of the network is modeled by the process of network specialization and the type of dynamics on the network we consider is global stability. What allows us to bridge these two types of dynamics is knowing how specialization effects the spectral properties of the network. In this way we are able to rigourously study the effect the evolving structure of the network has on the dynamics of the network elements.

In a following paper we plan to use this model to study the full interplay of these two types of network dynamics. Specifically, how the dynamics on the network effects the structural evolution of the network and vice-versa and what the eventual structure and dynamics of such networks are.

It is worth mentioning that the extent to which the specialization model models the growth of real-world networks is still unknown. The reason is that, as a network (graph) can be specialized over any subset of its elements (vertices) these are many ways to grow a network especially considering the new variants of network specialization introduced in this paper. An open question is, for specific real-world network, whether a rule can be devised for selecting a base or sequence of bases that grows a small network into a network that has some of the ``finer details" of the real network, e.g. similar features, specific statistics, etc.

Another aspect of the paper that is worth addressing here is the notion of intrinsic stability. An interesting feature of intrinsic stability, not shared by the standard version of stability, is its resilience to change in network structure. Not only is intrinsic stability maintained under specialization but it is also maintained when time delays are introduced into the network's interactions (see \cite{BW13} and \cite{Reber2019}). As time-delays lengthen the paths and cycles between network elements in the network's graph of interactions these time delays have an effect on the network's topology, which are distinct from the topological changes caused by specialization.

An open question is what other types of intrinsic dynamics exist, which analogous to intrinsic stability are maintained under structural changes to the network. Such stronger forms of the standard forms of network dynamics would potentially suggest how real networks maintain their function even as their structure evolves as intrinsic stability does for stable networks.

\section{Acknowledgement} The work of L. A. Bunimovich was partially supported by the NSF grant DMS-1600568. The work of B. Z. Webb is partially supported by was partially supported by the DOD grant HDTRA1-15-0049.

\begin{center}
References
\end{center}

\bibliographystyle{ws-ijbc}
\bibliography{sample1}{}

\end{document}